\documentclass[11pt,reqno]{amsart}
\usepackage{amssymb}
\usepackage{amsfonts}
\usepackage{amsmath}
\usepackage{stmaryrd}
\usepackage{physics}
\usepackage{braket}
\usepackage{dsfont}
\usepackage{bbold}
\usepackage{graphicx}
\usepackage{relsize}
\usepackage[table,xcdraw]{xcolor}
\usepackage{enumerate}
\usepackage{hyperref}
\hypersetup{colorlinks = true, linkcolor = blue, urlcolor  = blue, citecolor = red}
\usepackage[margin=1in]{geometry}
\usepackage{enumitem}
\usepackage{mathtools}
\usepackage{graphbox}
\usepackage{float}
\usepackage[font=scriptsize]{caption}
\usepackage{MnSymbol}
\usepackage{appendix}

\renewcommand{\epsilon}{\varepsilon}
\renewcommand{\phi}{\varphi}
\newcommand{\overbar}[1]{\mkern 1.5mu\overline{\mkern-1.5mu#1\mkern-1.5mu}\mkern 1.5mu}
\setcounter{tocdepth}{1}

\newtheorem{theorem}{Theorem}[section]
\newtheorem{definition}[theorem]{Definition}
\newtheorem*{theorem*}{Theorem}
\newtheorem*{definition*}{Definition}
\newtheorem{proposition}[theorem]{Proposition}
\newtheorem{corollary}[theorem]{Corollary}
\newtheorem{lemma}[theorem]{Lemma}
\newtheorem{remark}[theorem]{Remark}
\newtheorem{example}[theorem]{Example}

\newtheorem*{conjecture*}{Conjecture}

\DeclareMathOperator{\Cf}{Cf}

\definecolor{darkgreen}{rgb}{0,0.392,0}

\author{Ion Nechita}

\address{Laboratoire de Physique Th\'eorique, Universit\'e de Toulouse, CNRS, UPS, France}
\email{nechita@irsamc.ups-tlse.fr}

\author{Satvik Singh}
\email{satviksingh2@gmail.com}
\address{\parbox{\linewidth}{Department of Physical Sciences,\\
Indian Institute of Science Education and Research (IISER) Mohali, Punjab, India.}}

\title[A graphical calculus for integration over random diagonal unitary matrices]{A graphical calculus for integration over\\random diagonal unitary matrices}

\begin{document}

\begin{abstract}
    We provide a graphical calculus for computing averages of tensor network diagrams with respect to the distribution of random vectors containing independent uniform complex phases. Our method exploits the order structure of the partially ordered set of uniform block permutations. A similar calculus is developed for random vectors consisting of independent uniform signs, based on the combinatorics of the partially ordered set of even partitions. We employ our method to extend some of the results by Johnston and MacLean on the family of local diagonal unitary invariant matrices. Furthermore, our graphical approach applies just as well to the real (orthogonal) case, where we introduce the notion of triplewise complete positivity to study the condition for separability of the relevant bipartite matrices. Finally, we analyze the twirling of linear maps between matrix algebras by independent diagonal unitary matrices, showcasing another application of our method.
    
\end{abstract}

\keywords{Uniform block permutations, M\"obious inversion, Tensor networks, Random diagonal unitary matrices, Completely positive matrices, Pairwise completely positive matrices, Quantum entanglement.}

\subjclass[2010]{15B52, 81P45, 06A11}

\maketitle

\tableofcontents

\section{Introduction}

Since the early days of random matrix theory, diagrammatic methods have played an important role in its development, both as a proof technique (mainly used together with the method of moments) and as a bridge towards combinatorial topics, such as map enumeration \cite{zvonkin1997matrix,eynard2016counting}. Proliferation of modern fields such as data science and quantum information theory---which focus on multi-linear generalizations of the classical matrix theory---has only propelled the study of random tensors, both from a theoretical and practical standpoint. Representing complicated tensor networks using diagrams containing boxes and wires is an old idea going back to Penrose \cite{penrose1971applications}. Since then, graphical techniques have found widespread applications ranging from simulations of quantum many body systems \cite{Vidal2008manybodysimulations, Verstraete2008renormalization} to development of tensor networks for open quantum systems and quantum circuits \cite{wood2015tensor, Bergholm2011circuit}. Recently, such representations were recognized to be of relevance in condensed matter theory, where low-energy quantum states admit concise descriptions in terms of tensor networks \cite{orus2014practical}. 

Tensor networks containing \emph{random} objects have been analyzed in connection to condensed matter physics \cite{collins2013matrix} and holographic duality \cite{hayden2016holographic}. Fully graphical calculi for computing averages of such random tensor networks have been introduced in the mathematical physics literature in order to simplify the often cumbersome algebraic solutions to such problems: initially for Haar-distributed random unitary matrices \cite{collins2010random}, and then for Gaussian tensors \cite{collins2011gaussianization}. These methods have been used to great success in quantum information theory \cite{collins2016random}, due to a central place occupied by tensor product constructions in the theory. These calculi have been recently implemented for computer algebra systems \cite{fukuda2019rtni}.

The main contribution of this work is a graphical calculus for computing averages of tensor network diagrams containing random vectors having independent and identically distributed (i.i.d.) uniformly random phases as their entries. Given several copies of such random vectors $u \in \mathbb C^d$ (and of its complex conjugate $\bar u$) in a diagram $\mathcal D$, our main result allows one to express the expectation value of $\mathcal D$ with respect to the distribution of $u$ as a \emph{weighted sum of diagrams} constructed from $\mathcal D$ by connecting together the half-edges incident to the $u$ (and $\overbar u$) vectors. We refer the reader to Theorem \ref{theorem:E-u} for the precise statement and to Section \ref{sec:combinatorics} for the combinatorial background. 

\begin{theorem*}
Given a tensor network $\mathcal D$ containing one or more copies of a random vector $u$ with i.i.d. uniformly random phases, we have 
$$\mathbb{E}_u \mathcal{D} = \sum_{(\alpha, \beta, f) \in \mathcal{UBP}} \mathcal{D}_{(\alpha, \beta, f)} \Cf_{\mathcal U}(\alpha, \beta, f),$$
where the sum is indexed by \emph{uniform block permutations} and $\Cf_{\mathcal U}$ are the combinatorial weights.
\end{theorem*}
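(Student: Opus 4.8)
The plan is to compute the expectation entry-by-entry and then recognize the resulting combinatorial sum as an index sum over uniform block permutations. Suppose the diagram $\mathcal{D}$ contains $p$ copies of $u$ and $q$ copies of $\bar u$ (the case $p \neq q$ will be shown to vanish). Each copy of $u$ contributes a free index leg, say $i_1, \dots, i_p \in [d]$, and each copy of $\bar u$ contributes an index $j_1, \dots, j_q \in [d]$. Writing the scalar value of $\mathcal{D}$ as a sum over all internal and external index configurations, the only factor that depends on the random phases is the product $u_{i_1} \cdots u_{i_p} \, \bar u_{j_1} \cdots \bar u_{j_q}$. Since $u_k = e^{\mathrm{i}\theta_k}$ with the $\theta_k$ independent and uniform on $[0, 2\pi)$, the first step is to compute the moment

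$$\mathbb{E}_u\!\left[ u_{i_1} \cdots u_{i_p} \, \bar u_{j_1} \cdots \bar u_{j_q} \right],$$

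which factorizes across the $d$ coordinates: the $k$-th coordinate contributes $\mathbb{E}[e^{\mathrm{i}(a_k - b_k)\theta_k}]$, where $a_k$ counts how many of the $u$-indices equal $k$ and $b_k$ counts how many of the $\bar u$-indices equal $k$. This single-phase moment is $1$ when $a_k = b_k$ and $0$ otherwise, so the whole product is the indicator that the multiset of $u$-indices equals the multiset of $\bar u$-indices.

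The second step is to translate that indicator into a sum over combinatorial data. The constraint ``the two multisets of indices coincide'' is exactly the statement that there is a way to pair up the $u$-legs with the $\bar u$-legs so that paired legs carry equal indices, \emph{together with} a prescription for which blocks of $u$-legs (respectively $\bar u$-legs) are forced to share a common index value. I would make this precise by decomposing the indicator as a sum, over all uniform block permutations $(\alpha, \beta, f) \in \mathcal{UBP}$ compatible with the index tuple, of a product of Kronecker deltas: $\alpha$ records the partition of the $u$-legs into blocks of equal index, $\beta$ does the same for the $\bar u$-legs, and the bijection $f$ matches the blocks of $\alpha$ with those of $\beta$ that carry the same value. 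The key combinatorial identity here is an inclusion-exclusion / Möbius-type expansion on the poset of uniform block permutations (developed in Section~\ref{sec:combinatorics}): the naive indicator over all index configurations overcounts, and the weights $\Cf_{\mathcal U}(\alpha, \beta, f)$ are precisely the Möbius coefficients that correct for this, ensuring each configuration is counted exactly once.

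The third step is purely diagrammatic bookkeeping. Each summand, being a product of delta functions over the leg-indices, acts on the diagram by identifying and contracting the corresponding half-edges: replacing the $u$ and $\bar u$ boxes by wires that connect the half-edges according to $(\alpha, \beta, f)$. This is exactly the definition of the removed-box diagram $\mathcal{D}_{(\alpha,\beta,f)}$. Summing over all index configurations then collapses the decorated deltas into these contracted diagrams, and factoring out the scalar weights $\Cf_{\mathcal U}(\alpha,\beta,f)$ gives the claimed formula. The main obstacle, and the place where the real content lies, is the second step: verifying that the indicator function of ``equal multisets'' admits a \emph{clean} expansion over $\mathcal{UBP}$ with the stated weights. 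This requires the Möbius inversion on the poset of uniform block permutations to be set up correctly and requires checking that the $p \neq q$ case contributes nothing (which follows since then some coordinate has $a_k \neq b_k$ for every configuration, forcing the moment to vanish). Once the poset structure and the weights $\Cf_{\mathcal U}$ are in hand from the combinatorial preliminaries, the first and third steps are routine.
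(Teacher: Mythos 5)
Your proposal is correct and follows essentially the same route as the paper: the moment computation reduces $\mathbb{E}_u\mathcal{D}$ to the indicator of matching index tuples, each matching tuple is grouped by its exact pairing pattern (a uniform block permutation), and M\"obius inversion on the poset $\mathcal{UBP}_n$ converts that injectively-constrained sum into the weighted sum over unconstrained diagrams, with $\Cf_{\mathcal U}$ arising as cumulative M\"obius sums over lower intervals. The only cosmetic differences are that you carry out the inversion at the level of indicator functions of index configurations rather than at the level of diagrams (the paper's Lemmas \ref{lem:Eu-sum-injectve-diagrams} and \ref{lem:Dabf-vs-Dabf-injective}), and you dispatch the unbalanced case $p\neq q$ via the coordinate-wise factorization of the moment instead of the paper's rotation-invariance argument (Lemma \ref{lemma:n neq m = 0}).
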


Once it is identified that the pairings of different half-edges associated with the $u$ and $\overbar u$ boxes will be effected by \emph{uniform block permutations} \cite{aguiar2008hopf, maurice-ubp}, the crucial idea in the proof of the theorem is to invoke M\"obius inversion in order to utilize the partial order on the set of uniform block permutations \cite{stanley-combinatorics-1}. We compute explicitly the coefficients $\Cf_{\mathcal U}$ which have a multiplicative structure, fitting nicely within the framework of Stanley's \emph{exponential structures} \cite{stanley1978exponential}. An almost identical result is obtained in the real case, where random phases are replaced by uniform random signs and the partially ordered set of \emph{even partitions} becomes the relevant combinatorial object. The upshot of our results is that the combinatorial weights ensure that no double-counting of diagrams occurs, a major inconvenience of the direct, algebraic (or index-based) approach. 

We then present several applications of this result, mainly to the theory of quantum information. We focus on the notion of (conjugate) \emph{local diagonal unitary invariant} matrices (abbreviated (C)LDUI) introduced in \cite{chruscinski2006class} and further studied by Jonhston and MacLean in \cite{johnston2019pairwise} within the context of the \emph{absolutely separable vs.~absolutely PPT conjecture} \cite{IQOQIseparability}. We analyze in detail the real case of \emph{local diagonal orthogonal invariant} (LDOI) matrices, introducing the notion of triplewise completely positive matrices. The cone of LDOI matrices is of great interest in quantum information theory, as many of the most important families of PPT entangled states belong in this class \cite{chruscinski2006class}. The structural properties of these quantum states and the corresponding covariant quantum channels are analyzed in detail in a subsequent work \cite{singh2020diagonal,singh2020ppt2}. Building upon the techniques used in this paper, a novel graph theoretic protocol for detecting entanglement in arbitrary bipartite states has been developed in \cite{Singh2020entanglement}. 

The paper is organized as follows. Section \ref{sec:combinatorics} contains the necessary combinatorial background on the theory of partially ordered sets and M\"obius inversion, focusing on the cases of uniform block permutations and even partitions. Section \ref{sec:diagrams-tensors} gives a concise introduction to the graphical representation of tensors, the language in which the main results of this paper are stated. Sections \ref{sec:graphical-integration-C} and \ref{sec:graphical-integration-R} are the core of the paper, containing the main results, Theorems \ref{theorem:E-u} and \ref{theorem:E-s}. The final Sections \ref{sec:LDUI}-\ref{sec:twirl} contain different applications of these results: local diagonal unitary invariant matrices, local diagonal orthogonal invariant matrices, and diagonal twirling of linear maps between matrix algebras respectively. Finally, we gather in Appendix \ref{sec:app-n-3} the 16 diagrams appearing when averaging a diagram consisting of three $u$ vectors and three $\bar u$ vectors, and in Appendix \ref{sec:app-tcp} some further properties of triplewise completely positive matrices, generalizing the case of pairwise completely positive matrices from \cite{johnston2019pairwise}. 

\section{Combinatorial prerequisites}\label{sec:combinatorics}

We gather in this section some basic definitions and facts from combinatorics, mainly from the theory of set partitions, uniform block permutations and M\"obius functions.
\subsection*{Set partitions}
\begin{definition} \label{def:part}
A partition $\alpha$ of a non-empty set $X$ is a collection of non-empty, pairwise disjoint subsets of $X$ such that their union equals $X$. 
\end{definition}
The set of all partitions of $X$ is denoted by $\Pi_X$. For $n \in \mathbb{N}$, the set of all partitions of $[n]=\{1,2,\ldots, n\}$ is denoted by $\Pi_n$. $\vert X \vert$ denotes the cardinality of the set $X$. We represent the partition $\displaystyle\{ \{1,2\}, \{3\}, \{4\} \} \in \Pi_4$ as $12\vert 3\vert 4$. The type $\lambda \coloneqq 1^{m_1}2^{m_2}\ldots n^{m_n}$ of a partition $\alpha \in \Pi_n$ is the integer partition of $n$ (denoted by $\lambda \vdash n$) formed by the block sizes of $\alpha$: $\sum_{i=1}^n im_i = n$.

\begin{definition}\label{def:ordpart}
An ordered partition $\vec{\alpha}$ of a non-empty set $X$ is a vector of non-empty, pairwise disjoint subsets of $X$ such that their union equals $X$.
\end{definition} 
For a non empty set $X$ and $n\in \mathbb{N}$, the ordered counterparts of $\Pi_X$ and $\Pi_n$ are denoted by $\vec{\Pi}_X$ and $\vec{\Pi}_n$ respectively. As is evident from the definition, for an ordered partition $\vec{\alpha}$, the ordering of blocks in $\vec{\alpha}$ is important but the ordering of elements within the blocks is not. On the other hand, for an ordinary partition $\alpha$, neither of them matter. Hence, for $\lambda \vdash n$, the number of partitions of type $\lambda$ in $\Pi_n$ is 
\begin{equation} \label{eq:no-part}
P_n(\lambda) = \frac{n!}{m_1!m_2!\ldots m_n!(1!)^{m_1}(2!)^{m_2}\ldots (n!)^{m_n}}
\end{equation}
while the number of ordered partitions of type $\lambda$ in $\vec{\Pi}_n$ is
\begin{equation} \label{eq:no-o-part}
O_n(\lambda) = \frac{n!}{(1!)^{m_1}(2!)^{m_2}\ldots (n!)^{m_n}}
\end{equation} 
Summing the above expressions over all integer partitions gives the cardinalities of $\Pi_n$ and $\vec{\Pi_n}$:
\begin{equation}
    \vert \Pi_n \vert = \sum_{\lambda \vdash n} P_n(\lambda) \qquad \vert \vec{\Pi_n} \vert = \sum_{\lambda \vdash n} O_n(\lambda)
\end{equation}

\begin{example} 
For n=2, we have $\Pi_2 = \{12, 1\vert 2\}$ and $\vec{\Pi}_2 = \{12, 1\vert 2, 2\vert 1\}$.
\end{example}

\begin{definition}
An even partition $\alpha$ of a non-empty set $X$ with even cardinality is a collection of non-empty, pairwise disjoint subsets of $X$ with even cardinalities such that their union equals $X$. 
\end{definition}
For $n\in \mathbb{N}$, the set of all even partitions of $[2n]$ is denoted by $\Pi^{(2)}_{2n}$. Partitions in $\Pi^{(2)}_{2n}$ are of type $\lambda = 2^{a_1} 4^{a_2}\ldots (2n)^{a_n}$, which are nothing but even integer partitions of $2n$ (denoted by $\lambda \Vdash 2n$): $\sum_{i=1}^n 2i a_i = 2n$. From Eq.~\eqref{eq:no-part}, the cardinality of $\Pi^{(2)}_{2n}$ is readily obtained
\begin{equation}
\vert \Pi^{(2)}_{2n} \vert = \sum_{\lambda \Vdash 2n} P_{2n}(\lambda)
\end{equation}
The initial values of the sequence $\{ \vert \Pi^{(2)}_{2n} \vert \} _{n\in \mathbb{N}}$ are given below (\cite{oeis} \href{https://oeis.org/A005046}{A005046}). We will later construe these values as the total number of diagrams present in a certain graphical integration formula.
\begin{equation}
    1,4,31,379,6556,150349,\ldots
\end{equation}
\subsection*{Uniform block permutations}

\begin{definition}[Uniform Block Permutations]\label{def:1}
A uniform block permutation (UBP) of $[n]$ is a triple $(\alpha, \beta, f)$ such that 
\begin{enumerate}
\item $\alpha, \beta$ are partitions of $[n]$ of the same type $\lambda \vdash n$.
\item $f:\alpha \rightarrow \beta$ is a bijection such that $\vert f(a) \vert = \vert a \vert \, \forall a\in \alpha$.
\end{enumerate} 
\end{definition}

The set of all UBPs of $[n]$ is denoted by $\mathcal{UBP}_n$. Given $\alpha, \beta \in \Pi_n$ of the same type $\lambda$, the bijection $f:\alpha \rightarrow \beta$ can be thought of as providing an ordering of the blocks of $\beta$. This is stated more precisely in the following lemma

\begin{lemma} \label{lemma:ubp}
The set of uniform block permutations of $[n]$ is in bijection with the set of pairs $(\alpha, \vec{\beta})$, where $\alpha \in \Pi_n, \vec \beta \in \vec{\Pi}_n$ are of the same type $\lambda \vdash n$. 
\end{lemma}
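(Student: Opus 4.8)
The plan is to prove the lemma by constructing an explicit pair of mutually inverse maps between $\mathcal{UBP}_n$ and the set of pairs $(\alpha,\vec\beta)$ with $\alpha\in\Pi_n$ and $\vec\beta\in\vec\Pi_n$ sharing a common type $\lambda\vdash n$. The guiding observation, already flagged before the statement, is that an unordered partition comes equipped with a canonical ordering of its blocks; I will use the ordering by increasing minimal element. The size-preserving bijection $f$ of a uniform block permutation can then transport this canonical order of $\alpha$ onto the blocks of $\beta$, which is precisely the extra datum needed to upgrade the unordered $\beta$ to an ordered partition $\vec\beta$.

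First I would fix, for each $\alpha\in\Pi_n$, the list $a_1,a_2,\dots,a_k$ of its blocks written in increasing order of their minimal elements; this depends on $\alpha$ only and provides a canonical bijection $\{1,\dots,k\}\to\alpha$. Given $(\alpha,\beta,f)\in\mathcal{UBP}_n$ as in Definition \ref{def:1}, I define the forward map by retaining $\alpha$ and setting $\vec\beta\coloneqq(f(a_1),f(a_2),\dots,f(a_k))$. Since $f$ is a size-preserving bijection, this tuple is a genuine ordered partition in the sense of Definition \ref{def:ordpart}, and its $i$-th block has cardinality $|f(a_i)|=|a_i|$; hence $\vec\beta$ carries the same type $\lambda$ as $\alpha$, so $(\alpha,\vec\beta)$ is a pair of the desired kind.

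For the reverse direction, starting from a pair $(\alpha,\vec\beta)$ with $\vec\beta=(b_1,\dots,b_k)$ of the same type as $\alpha$, I would once more list the blocks of $\alpha$ canonically as $a_1,\dots,a_k$, take $\beta$ to be the underlying unordered partition $\{b_1,\dots,b_k\}$, and set $f(a_i)\coloneqq b_i$. It then remains to check that $f$ so defined is a legitimate uniform block permutation and that the two assignments are mutually inverse; the latter is essentially immediate, because both directions are pinned down by the single convention of ordering $\alpha$'s blocks by minimal element, so that reading $\vec\beta$ off through the forward map returns the blocks in exactly the order in which the reverse map had recorded them.

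The step I expect to demand the most care is verifying that the reconstructed $f$ is size-preserving, that is, that $|a_i|=|b_i|$ for all $i$, since this is exactly the condition in Definition \ref{def:1}(2) and the only place where the equality of types is genuinely used: it is what ensures that matching the $i$-th canonical block of $\alpha$ with the $i$-th block of $\vec\beta$ never pairs blocks of different cardinalities. Once this compatibility is secured, well-definedness of both maps and the inverse relations follow by unwinding the definitions. As an independent sanity check on the bijection, I would compare cardinalities type-by-type, using \eqref{eq:no-part} for the number of choices of $\alpha$ and $\vec\beta$ together with the fact that the size-preserving bijections $f$ for fixed $\alpha,\beta$ are counted by $\prod_i m_i!$, which reproduces \eqref{eq:no-o-part}.
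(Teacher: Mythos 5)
The step you yourself single out---checking that the reconstructed $f$ satisfies $|a_i|=|b_i|$ for all $i$---is not a delicate-but-routine verification: it is false, and this breaks your reverse map. ``Same type'' is an equality of \emph{multisets} of block sizes, and it does not survive the freezing of positions that your construction requires. Take $n=3$, $\alpha = 1|23$, so that your canonical list is $a_1=\{1\}$, $a_2=\{2,3\}$, and take $\vec\beta = (\{1,2\},\{3\})$, which has the same type $\lambda = 1^1 2^1$. Then $|a_1| = 1 \neq 2 = |b_1|$, so $f(a_i) \coloneqq b_i$ is not size-preserving and $(\alpha,\beta,f)$ fails condition (2) of Definition \ref{def:1}. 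Equivalently, your forward map is injective but not surjective onto the set of all same-type pairs. Moreover, no cleverer canonical ordering can repair this: counting type by type, there are $P_n(\lambda)^2\prod_i m_i!$ uniform block permutations of type $\lambda$ (this agrees with \eqref{eq:num-ubp}), but $P_n(\lambda)^2\, k!$ pairs $(\alpha,\vec\beta)$ whose block-size multisets both equal $\lambda$, where $k=\sum_i m_i$; these coincide only when all parts of $\lambda$ are equal. At $n=3$ the totals are $16$ versus $25$, so no bijection onto the set of all same-type pairs can exist, whatever maps one chooses.

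What is true---and what your forward map actually establishes---is a bijection between $\mathcal{UBP}_n$ and the smaller set of pairs $(\alpha,\vec\beta)$ in which the size sequence of $\vec\beta$ agrees \emph{position by position} with that of $\alpha$'s canonically ordered blocks. This is also the only reading under which the paper's surrounding formulas are consistent: the quantity $O_n(\lambda)$ in \eqref{eq:no-o-part} is the multinomial coefficient $n!/\prod_i (i!)^{m_i}$, which counts ordered partitions with a \emph{prescribed} composition of block sizes, not ordered partitions whose size multiset is $\lambda$ (of which there are $k!/\prod_i m_i!$ times as many). With the codomain restricted in this way, your two maps are indeed mutually inverse, the size-preservation check becomes the tautology you expected, and the type-by-type count $P_n(\lambda)O_n(\lambda) = U_n(\lambda)$ is recovered. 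Note that your cardinality sanity check did not protect you precisely because you compared against \eqref{eq:no-o-part} itself rather than against an independent count of same-type ordered partitions; the latter would have exposed the missing factor $k!/\prod_i m_i!$. Since the paper offers no proof of this lemma beyond the informal remark preceding it, the imprecision lies in the statement as written; your argument, once its target is restated as above, is the correct formalization of that remark.
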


For $\alpha = \alpha_1\vert \alpha_2\vert \ldots\vert \alpha_k \in \Pi_n$, we represent $(\alpha, \beta, f) \in \mathcal{UBP}_n$ in the form of an array with two rows, where the blocks of $\beta$ are ordered so that $\beta_i = f(\alpha_i)$ for $i=1,2,\ldots,k$.
\begin{equation*}
\left(\!\!\begin{tabular}{c|c|c|c}
        $\alpha_1$ & $\alpha_2$ & $\ldots$ & $\alpha_k$ \\
        $\beta_1$ & $\beta_2$ & $\ldots$ & $\beta_k$
    \end{tabular}\!\!\right)
\end{equation*}

From Lemma \ref{lemma:ubp}, it is clear that the number of UBPs of type $\lambda \vdash n$ in $\mathcal{UBP}_n$ is given by 
\begin{equation}\label{eq:num-ubp}
U_n(\lambda) = P_n(\lambda)O_n(\lambda) = \left( \frac{n!}{(1!)^{m_1}(2!)^{m_2}\ldots (n!)^{m_n}} \right)^2 \frac{1}{m_1!m_2!\ldots m_n!}
\end{equation}
The cardinality of $\mathcal{UBP}_n$ is obtained by summing the above expression over all integer partitions. 
\begin{equation}
    \vert \mathcal{UBP}_n \vert = \sum_{\lambda \vdash n} U_n(\lambda)
\end{equation}
The first few values of the sequence $\{ \vert \mathcal{UBP}_n \vert \}_{n\in \mathbb{N}} $ are stated below for convenience (\cite{oeis} \href{https://oeis.org/A023998}{A023998}). These will later be interpreted as the total number of diagrams present in some graphical integration formula. 
\begin{equation}
    1,3,16,131,1496,22482,\ldots
\end{equation}

We now wish to impart the structure of an inverse semigroup on $\mathcal{UBP}_n$.
\begin{definition} \cite{maurice-ubp}
For $(\alpha,\beta, f), (\alpha', \beta', g) \in \mathcal{UBP}_n$, the composition $(\gamma, \delta, g\circ f) \coloneqq (\alpha', \beta', g) \circ (\alpha,\beta, f)$ is defined such that the blocks $\gamma_i$ of $\gamma$ are minimal with respect to the following properties
\begin{enumerate}
\item $\gamma_i$ is a union of blocks $\alpha_j$ of $\alpha$.
\item $f(\gamma_i)$ is a union of blocks $\alpha'_j$ of $\alpha'$.
\end{enumerate}
The images $g\circ f(\gamma_i)$ are then union of the images $g(\alpha'_j)$.
\end{definition}

\begin{example}
\begin{equation*}
\left(\!\!\begin{tabular}{c|c|c|c}
        $15$ & $2$ & $3$ & $4$ \\
        $45$ & $3$ & $2$ & $1$
    \end{tabular}\!\!\right) \circ 
\left(\!\!\begin{tabular}{c|c|c}
        $12$ & $34$ & $5$ \\
        $24$ & $35$ & $1$
    \end{tabular}\!\!\right) = 
\left(\!\!\begin{tabular}{c|c}
        $12$ & $345$ \\
        $13$ & $245$
    \end{tabular}\!\!\right)
\end{equation*}
\end{example}
\begin{remark}
In what follows, the result of the binary operation on two elements $x,y$ in an inverse semigroup is denoted simply by $xy$.
\end{remark}
With the defined (associative) composition, it is evident that the set of idempotents
\begin{equation}\label{eq:i(ubp)}
I(\mathcal{UBP}_n) \coloneqq \{x\in \mathcal{UBP}_n \, \vert \, xx = x\} = \{(\alpha, \alpha, id_{\alpha}) \in \mathcal{UBP}_n\}
\end{equation}
is in bijection with $\Pi_n$, where $id_{\alpha}:\alpha \rightarrow \alpha$ is the identity map. The composition then becomes commutative on $I(\mathcal{UBP}_n)$: $(\alpha, \alpha, id_{\alpha}) (\beta, \beta, id_{\beta}) = (\alpha \vee \beta, \alpha \vee \beta, id_{\alpha \vee \beta})$, where $\alpha \vee \beta$ is the join of $\alpha$ and $\beta$ in $\Pi_n$ (defined in the next subsection). For $x = (\alpha,\beta, f) \in \mathcal{UBP}_n$, consider $y = (\beta, \alpha, f^{-1})\in \mathcal{UBP}_n$, where $f^{-1}:\beta \rightarrow \alpha$ is inverse of $f:\alpha \rightarrow \beta$. It is then clear that $y=yxy$, $x=xyx$, and $y$ is denoted as $x^{-1}$. All the aforementioned properties imply that $\mathcal{UBP}_n$ is an inverse semigroup \cite{lawson-isemigroups, clifford-semigroups}.

\subsection*{Order structure}

\begin{definition}
A partially ordered set (poset) $P$ is a set together with a binary relation $\leq$ which is reflexive, antisymmetric and transitive. That is, for all $x,y,z \in P$, we have
\begin{enumerate}
\item $x\leq x$ (reflexivity)
\item $x\leq y, y\leq z \Rightarrow x\leq z$ (transitivity)
\item $x\leq y, y\leq x \Rightarrow x=y$ (antisymmetry)
\end{enumerate} 
\end{definition}

\begin{remark}
There may exist elements $x,y$ in a poset $P$ which are not comparable $x\nleq y$. Such $x,y$ are then said to be unrelated.
\end{remark}

A poset $P$ is said to contain $\mathbb{0}$ if $\exists \, \mathbb{0}\in P$ such that $\mathbb{0}\leq x, \forall x\in P$. A poset $P$ is said to contain $\mathbb{1}$ if $\exists \, \mathbb{1}\in P$ such that $x\leq \mathbb{1}, \forall x\in P$. For $x\leq y\in P$, define the interval $[x,y] \coloneqq \{z\in P \vert x\leq z\leq y\}$. $P$ is said to be locally finite if every interval in it is finite. $x\leq y$ with $x\neq y$ is denoted as $x<y$. Two posets $P$ and $Q$ are said to be isomorphic $P \cong Q$ if there exists an order preserving bijection $\phi : P \rightarrow Q$ such that for all $x,y \in P$, $x\leq_P y \Leftrightarrow \phi (x) \leq_Q \phi (y)$. For instance, given a non-empty set $X$, we have $\Pi_X \cong \Pi_{\vert X \vert}$.

\begin{definition}
A lattice $L$ is a poset such that for every pair $x,y\in L$, there exists a unique greatest lower bound (called meet) $x\wedge y$ and least upper bound (called join) $x\vee y$ in $L$. That is
\begin{align}
x\wedge y &= \mathrm{max}\{z\in L \, \vert \, z\leq x \, \mathrm{and} \, z\leq y\} \\ \nonumber
x\vee y &= \mathrm{min}\{z\in L \, \vert \, x\leq z \, \mathrm{and} \, x\leq z\}
\end{align}
\end{definition}

We now define partial order relations on $\Pi_n$ and $\mathcal{UBP}_n$. 
\begin{definition}[Refinement order on set partitions]
For $\alpha, \alpha' \in \Pi_n$, define $\alpha' \leq \alpha$ if each block of $\alpha'$ is contained in some block of $\alpha$. In other words, each block $\alpha_i \in \alpha$ can be written as a union of blocks $\alpha'_j \in \alpha'$: $\alpha_i = \displaystyle\cup_{j\in A_i} \alpha'_j$ where $A_i$ label the indices of the blocks of $\alpha'$ which participate in the union. 
\end{definition}
We read $\alpha' \leq \alpha$ as ``$\alpha'$ is a refinement of $\alpha$''. The finest and coarsest partitions in $\Pi_n$ are $\mathbb{0}_n=1\vert 2\vert \ldots\vert n$ and $\mathbb{1}_n=12\ldots n$ respectively. In $\Pi^{(2)}_{2n}$, the coarsest partition is still $\mathbb{1}_{2n}=12\ldots 2n$, but there are $P_{2n}(\lambda = 2^n)$ pairwise unrelated finest partitions consisting of blocks of size 2. 

\begin{definition}[Refinement order on uniform block permutations]\label{def:order-UBP}
For $(\alpha', \beta', f'), (\alpha, \beta, f)$ in $\mathcal{UBP}_n$, define $(\alpha', \beta', f') \leq (\alpha, \beta, f)$ if $\alpha' \leq \alpha$ and $\beta' \leq \beta$ as set partitions and $\alpha'_j \subseteq \alpha_i$ implies $f'(\alpha'_j) \subseteq f(\alpha_i)$ for all $\alpha'_j \in \alpha', \alpha_i \in \alpha$. 
\end{definition}

It is easy to see that ${\tiny\left(\!\! \begin{tabular}{c}
        1 2 3 $\cdots$ p \\
        1 2 3 $\cdots$ p
    \end{tabular} \!\!\right)}$ is the (unique) coarsest UBP, while all permutations of the form
    ${\tiny\left(\!\!\begin{tabular}{c|c|c|c}
        1 & 2 & $\cdots$ & p \\
        $\ast$ & $\ast$ & $\cdots$ & $\ast$
    \end{tabular}\!\!\right)}$ are the finest UBPs. Also notice that any two such permutations in $\mathcal{UBP}_n$ are not related and their meet does not exist in $\mathcal{UBP}_n$. This leads us to the following remark.

\begin{remark}
With the refinement order, $\Pi_n$ is a lattice, while $\mathcal{UBP}_n$ and $\Pi^{(2)}_{2n}$ are not.
\end{remark}

\begin{remark} \label{remark:naturalrefinement}
When considered as an inverse semigroup, the natural partial order on $\mathcal{UBP}_n$ \cite{lawson-isemigroups, clifford-semigroups} coincides with the dual of the refinement order on $\mathcal{UBP}_n$ introduced above (obtained by reversing all relations). 
\end{remark}

\subsection*{The M\"obius Function} 
For a locally finite poset $P$, identify its order with the graph $\leq \coloneqq \{(x,y)\in P \times P \, \vert \, x\leq y\}$. Define the incidence algebra of $P$ over $\mathbb{R}$ (denoted by $\mathsf{A}_{\mathbb{R}}(P)$) as the algebra of functions $g:\leq \rightarrow \mathbb{R}$ equipped with the convolution product. The M\"obius function $\mu_P$ on $P$ can then be defined as the inverse of the identity function $\zeta$ $[\zeta(x,y)=1 \, \forall x\leq y]$ in $\mathsf{A}_{\mathbb{R}}(P)$. The following serves as an inductive definition of $\mu_P$:
\begin{equation}
\mu_P(x,y) = \begin{cases} 
			1,& \text{if} \, x=y \\
 			-\sum_{x\leq z < y} \mu_P(x,z),& \text{if} \, x<y
 			\end{cases} 
\end{equation} 

We now state the M\"obius inversion formula \cite{stanley-combinatorics-1}, which will play a decisive role (along with the multiplicative families of combinatorial functions defined in the next subsection) in ensuring a succinct presentation of our graphical calculus in later sections. 

\begin{theorem}[M\"obius inversion formula]\label{thm:Mobius-inversion}
Let $P$ be a finite poset and $G$ be an abelian group. Then, for $f,g:P \rightarrow G$, the following holds
\begin{equation}
    g(x) = \displaystyle\sum_{y\geq x} f(y) \, \, \forall x\in P \quad \Leftrightarrow \quad f(x) = \displaystyle\sum_{y\geq x} \mu_P(x,y) g(y) \, \, \forall x\in P
\end{equation}

\end{theorem}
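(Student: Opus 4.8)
The plan is to carry out the entire argument inside the incidence algebra $\mathsf{A}_{\mathbb{R}}(P)$, using only that $\mu_P$ is by definition the two-sided inverse of $\zeta$ under the convolution product, together with the finiteness of $P$, which licenses every interchange of summation. Recall that the convolution reads $(\phi\psi)(x,z) = \sum_{x\leq y\leq z}\phi(x,y)\psi(y,z)$, its unit is the Kronecker delta $\delta$ (with $\delta(x,z)=1$ if $x=z$ and $0$ otherwise), and $\zeta(x,y)=1$ whenever $x\leq y$. Evaluating the relation $\mu_P\zeta=\delta$ on a pair $x\leq z$ yields the \emph{row-sum identity} $\sum_{x\leq y\leq z}\mu_P(x,y)=\delta(x,z)$, while $\zeta\mu_P=\delta$ yields the \emph{column-sum identity} $\sum_{x\leq y\leq z}\mu_P(y,z)=\delta(x,z)$. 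These two identities are the only facts about $\mu_P$ I will need.

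For the forward implication I would assume $g(x)=\sum_{y\geq x}f(y)$ and substitute it into the proposed right-hand side: $\sum_{y\geq x}\mu_P(x,y)g(y) = \sum_{y\geq x}\mu_P(x,y)\sum_{z\geq y}f(z)$. Since $P$ is finite I may reorganize the double sum by collecting, for each fixed $z\geq x$, the coefficient of $f(z)$, which is precisely $\sum_{x\leq y\leq z}\mu_P(x,y)$. The row-sum identity collapses this coefficient to $\delta(x,z)$, so the entire expression reduces to $f(x)$, as desired. The reverse implication is the mirror image: starting from $f(x)=\sum_{y\geq x}\mu_P(x,y)g(y)$ I would compute $\sum_{y\geq x}f(y) = \sum_{z\geq x}\big(\sum_{x\leq y\leq z}\mu_P(y,z)\big)g(z)$ and invoke the column-sum identity to recover $g(x)$.

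The one genuinely subtle point, and the step I would flag as the crux, is that the two directions call on \emph{different} inverse relations: the forward direction uses $\mu_P\zeta=\delta$ (the row-sum identity), whereas the reverse direction uses $\zeta\mu_P=\delta$ (the column-sum identity). The inductive definition of $\mu_P$ supplies directly only the row-sum identity; that the column-sum identity also holds is exactly the statement that the left inverse furnished by the recursion coincides with a right inverse. This is guaranteed because $\mathsf{A}_{\mathbb{R}}(P)$ is associative and $\zeta$, being invertible, admits a unique two-sided inverse: concretely, if $\nu$ denotes the right inverse built from the dual recursion, then $\mu_P=\mu_P(\zeta\nu)=(\mu_P\zeta)\nu=\nu$. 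Since the excerpt already \emph{defines} $\mu_P$ as the inverse of $\zeta$ in $\mathsf{A}_{\mathbb{R}}(P)$, both relations are available from the outset, and beyond recording this observation no further work is required.
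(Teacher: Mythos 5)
Your proof is correct. Note that the paper does not prove this theorem at all---it is quoted from Stanley's book \cite{stanley-combinatorics-1}---so there is no internal proof to compare against; what you have written is the standard incidence-algebra argument found in that reference. Your treatment of the one genuinely delicate point is also sound: the paper's recursion for $\mu_P$ only yields $\mu_P\zeta=\delta$ (the row-sum identity), the reverse implication needs $\zeta\mu_P=\delta$, and your observation that associativity forces the left and right inverses of $\zeta$ to coincide (via $\mu_P = \mu_P(\zeta\nu) = (\mu_P\zeta)\nu = \nu$, with $\nu$ built from the dual recursion) closes that gap correctly; since values of $f,g$ lie in an abelian group, the integer coefficients produced by collapsing to $\delta$ are well defined, so the finite rearrangements of sums are all legitimate.
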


The M\"obius function on the partition lattice $\mu_{\Pi}$ is well known. For an interval $[\alpha, \beta] \subseteq \Pi_n$ where $\beta = \beta_1\vert \beta_2\vert \ldots\vert \beta_k$, we can write each block $\beta_i$ of $\beta$ as a union over blocks of $\alpha$: $\displaystyle\cup_{j\in B_i} \alpha_j$. Then, choosing a partition $\pi \in [\alpha, \beta]$ is equivalent to choosing a partition of the index set $B_i$ for $i=1,2,\ldots ,k$ which in turn is equivalent to choosing a partition in $\Pi_{b_i}$ for $i=1,2,\ldots ,k$ where $b_i = \vert B_i \vert \, \forall i$. We thus obtain
$[\alpha, \beta] \cong \Pi_{b_1} \times \Pi_{b_2} \times \ldots \times \Pi_{b_k}$, which implies 
\begin{equation}
\mu_{\Pi}(\alpha, \beta) = \prod_{i=1}^k \mu_{b_i}
\end{equation}
where the numbers $\mu_n$ are defined as $\mu_n \coloneqq \mu_{\Pi}(\mathbb{0}_n, \mathbb{1}_n)$. Application of Weisner's Theorem for M\"obius functions on finite lattices \cite{stanley-combinatorics-1} then yields $\mu_n = (-1)^{n-1}(n-1)!$ for $n\in \mathbb{N}$. Notice that if $\alpha, \beta \in \Pi_{2n}$ have even block sizes, then $[\alpha, \beta] \subseteq \Pi^{(2)}_{2n}$ and so the M\"obius function $\mu^{(2)}_{\Pi}$ on $\Pi^{(2)}_{2n}$ is just the restriction of the M\"obius function $\mu_{\Pi}$ on $\Pi_{2n}$ to $\Pi^{(2)}_{2n} \subseteq \Pi_{2n}$.

To compute the M\"obius function $\mu_\mathcal{U}$ on $\mathcal{UBP}_n$,  we take help from the following lemma \cite{steinberg-mobius}

\begin{lemma} \label{lemma:musemigroups}
The M\"obius function $\mu_S$ on an inverse semigroup $S$ endowed with the natural partial order and a finite set of idempotents $I(S)$ can be fully characterized in terms of the M\"obius function $\mu_{I(S)}$ on $I(S)$. That is, for all $x\leq y$ in $S$, we have 
\begin{equation}
\mu_S(x,y) = \mu_{I(S)}(xx^{-1},yy^{-1}) = \mu_{I(S)}(x^{-1}x,y^{-1}y)
\end{equation}  
\end{lemma}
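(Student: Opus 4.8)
The plan is to reduce the statement to a single order-theoretic fact: for every $x \leq y$ in $S$, the interval $[x,y]$ in $S$ is order-isomorphic to the interval $[xx^{-1}, yy^{-1}]$ in the semilattice of idempotents $I(S)$. Since the M\"obius function of a locally finite poset is computed recursively from the order structure of the relevant interval alone, it is invariant under poset isomorphism; hence the isomorphism $[x,y]_S \cong [xx^{-1}, yy^{-1}]_{I(S)}$ immediately yields $\mu_S(x,y) = \mu_{I(S)}(xx^{-1}, yy^{-1})$, and the symmetric construction gives the second equality.

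First I would recall the standard facts about the natural partial order on an inverse semigroup $S$ \cite{lawson-isemigroups, clifford-semigroups}: for $x,y \in S$ one has $x \leq y$ iff $x = ey$ for some idempotent $e$, equivalently $x = xx^{-1}y$, and dually $x = yx^{-1}x$; the idempotents $I(S)$ form a commutative subsemigroup on which $\leq$ restricts to the semilattice order $e \leq e' \Leftrightarrow ee' = e$; and $x \leq y$ forces the domain and range idempotents to satisfy $xx^{-1} \leq yy^{-1}$ and $x^{-1}x \leq y^{-1}y$. I would also record the absorption relations $xx^{-1}\,e = xx^{-1}$ (when $xx^{-1} \leq e$) and $e\,yy^{-1} = e$ (when $e \leq yy^{-1}$), which are the workhorses of the computation below.

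The core step is to exhibit mutually inverse order-preserving maps between $[x,y]_S$ and $[xx^{-1}, yy^{-1}]_{I(S)}$. I would take the domain map $d(z) = zz^{-1}$ in one direction and $e \mapsto ey$ in the other. For $z \in [x,y]_S$ the inequalities $xx^{-1} \leq zz^{-1} \leq yy^{-1}$ show $d$ lands in the correct idempotent interval. Conversely, for an idempotent $e$ with $xx^{-1} \leq e \leq yy^{-1}$, setting $z = ey$ one computes $zz^{-1} = e\,yy^{-1}\,e = e$ using $e \leq yy^{-1}$ and commutativity of idempotents, checks $x = xx^{-1}(ey) = xx^{-1}y = x$ so that $x \leq z$, and notes $z = ey \leq y$; hence $z \in [x,y]_S$. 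That the two maps are mutually inverse reduces to the identity $z = zz^{-1}y$, which holds precisely because $z \leq y$. Order preservation is immediate in both directions: $e \leq e'$ gives $ey = e(e'y) \leq e'y$, while $z \leq z'$ gives $zz^{-1} \leq z'z'^{-1}$.

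The step I expect to require the most care is the bookkeeping in the idempotent computations of the previous paragraph, namely verifying $zz^{-1} = e$ and $x \leq ey$; these depend on the repeated, correctly-ordered use of commutativity of idempotents together with the absorption relations, and are the only place where the inverse-semigroup axioms do real work. Once the order isomorphism $[x,y]_S \cong [xx^{-1}, yy^{-1}]_{I(S)}$ is established, M\"obius invariance under isomorphism closes the argument for the first equality. Finally, running the identical construction with the range map $r(z) = z^{-1}z$ and the reverse assignment $e \mapsto ye$ produces $[x,y]_S \cong [x^{-1}x, y^{-1}y]_{I(S)}$, giving $\mu_S(x,y) = \mu_{I(S)}(x^{-1}x, y^{-1}y)$ and completing the proof.
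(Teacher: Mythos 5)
Your proof is correct and takes essentially the same route as the paper: both establish the poset isomorphism $[x,y]_S \cong [xx^{-1},yy^{-1}]_{I(S)}$ and then invoke invariance of the M\"obius function under poset isomorphism. The paper phrases the isomorphism via the principal order ideal $y^{\downarrow}=yI(S)$ and the map $\phi(ye)=yey^{-1}$, which is exactly your domain map $z\mapsto zz^{-1}$ in disguise, so the two arguments coincide in substance.
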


\begin{proof}
The natural partial order on an inverse semigroup $S$ reads: $x\leq y \Leftrightarrow x=yx^{-1}x \Leftrightarrow x\in yI(S) \Leftrightarrow x\in I(S)y \Leftrightarrow x=xx^{-1}y$. For idempotents $e,f \in I(S)$, this order reduces to $e \leq f \Leftrightarrow e=ef=fe$. Clearly, the principle order ideal $y^\downarrow \coloneqq \{x\in S \, \vert \, x\leq y \}$ generated by $y\in S$ is equal to $yI(S)$, which is isomorphic as a poset to the set of idempotents $f\leq yy^{-1}$ through the order preserving bijection $\phi :  yI(S) \rightarrow \{f\in I(S) \, \vert \, f\leq yy^{-1} \}$ defined by $\phi(ye) = yey^{-1}$. This implies that for $x,y \in S$, the interval $[x,y] \cong \{ f\in I(S) \, \vert \, xx^{-1} \leq f \leq yy^{-1} \}$, which immediately gives the desired result.
\end{proof}

From Remark \ref{remark:naturalrefinement}, we know that the natural and refinement orders on $\mathcal{UBP}_n$ are just dual of each other. Moreover, since the set of idempotents in $\mathcal{UBP}_n$ is isomorphic as a poset to the partition lattice $\Pi_n$ (see Eq.~\eqref{eq:i(ubp)}), a straightforward application of Lemma \ref{lemma:musemigroups} yields
\begin{equation}
\mu_\mathcal{U} [(\alpha',\beta',f'),(\alpha,\beta,f)] = \mu_{\Pi}(\alpha',\alpha) = \mu_{\Pi}(\beta',\beta)
\end{equation}

\begin{remark} \label{remark:2}
The order $n$ of the underlying posets $\Pi_n$ and $\mathcal{UBP}_n$ is implicit while writing the relevant M\"obius functions $\mu_{\Pi}$ and $\mu_\mathcal{U}$. 
\end{remark}

\subsection*{Multiplicative families of Combinatorial Functions}
We now introduce a family of combinatorial functions on $\mathcal{UBP}_n$ and $\Pi^{(2)}_{2n}$ for $n\in \mathbb{N}$.
\begin{definition}\label{def:Cfu} 
Define $\Cf_\mathcal{U}:\mathcal{UBP}_n \rightarrow \mathbb{R}$ by 
\begin{align}
\Cf_\mathcal{U}(\alpha, \beta, f) &= \displaystyle\sum_{(\alpha', \beta', f') \leq (\alpha, \beta, f)} \mu_\mathcal{U} [(\alpha',\beta',f'),(\alpha,\beta,f)] \nonumber \\
&= \displaystyle\sum_{(\alpha', \beta', f')\leq (\alpha,\beta,f)} \mu_\Pi (\alpha', \alpha)
\end{align}
where the sum runs over all refinements $(\alpha', \beta', f')$ of $(\alpha, \beta, f)$ in $\mathcal{UBP}_n$.
\end{definition}

\begin{definition}\label{def:Cfpi} 
Define $\Cf_\Pi:\Pi^{(2)}_{2n} \rightarrow \mathbb{R}$ by 
\begin{equation}
\Cf_\Pi(\alpha) = \displaystyle\sum_{\alpha'\leq \alpha} \mu_\Pi (\alpha', \alpha) \nonumber 
\end{equation}
where the sum runs over all even refinements $\alpha'$ of $\alpha$ in $\Pi^{(2)}_{2n}$.
\end{definition}

\begin{lemma}[Multiplicativity of Combinatorial functions on $\mathcal{UBP}_n$] \label{lemma:multi-comb-func-ubp}
Let $(\alpha, \beta, f) \in \mathcal{UBP}_n$. Index the blocks of $\alpha$ such that $\alpha = \alpha_1\vert \alpha_2\vert \ldots\vert \alpha_k$. Let $a_i = \vert \alpha_i \vert \, \forall i$. Then,
\begin{align}
    \Cf_\mathcal{U}(\alpha, \beta, f) &= \displaystyle\prod_{i=1}^k \left\{ \displaystyle\sum_{\lambda_i \vdash a_i} U_{a_i}(\lambda_i) \mu_{\Pi} (\pi^{\lambda_i}_{a_i}, \mathbb{1}_{a_i}) \right\} \nonumber \\
    &= \displaystyle\prod_{i=1}^k \Cf_\mathcal{U}(\mathbb{1}_{a_i}, \mathbb{1}_{a_i}, id_{a_i})
\end{align}
where $\pi^{\lambda_i}_{a_i} \in \Pi_{a_i}$ is an arbitrary partition of type $\lambda_i \vdash a_i$.
\end{lemma}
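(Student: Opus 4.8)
The plan is to exploit the block structure of the coarsest UBP $(\alpha,\beta,f)$ so as to factor the sum defining $\Cf_\mathcal{U}(\alpha,\beta,f)$ into a product over the blocks $\alpha_1,\dots,\alpha_k$. The decisive observation is that a refinement $(\alpha',\beta',f')\leq(\alpha,\beta,f)$ amounts to an \emph{independent} choice, for each index $i$, of a UBP structure on the block pair $(\alpha_i,\beta_i)$, where $\beta_i=f(\alpha_i)$. Concretely, I would first show that sending $(\alpha',\beta',f')$ to the tuple of restrictions $(\alpha'|_{\alpha_i},\,\beta'|_{\beta_i},\,f'|_{\alpha_i})_{i=1}^k$ is a bijection onto $\prod_{i=1}^k\mathcal{UBP}_{\alpha_i}$, where $\mathcal{UBP}_{\alpha_i}$ denotes the set of UBP structures on the $a_i$-element set $\alpha_i$ (with range partitions living on $\beta_i$). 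The condition in Definition \ref{def:order-UBP} — that $\alpha'_j\subseteq\alpha_i$ forces $f'(\alpha'_j)\subseteq f(\alpha_i)=\beta_i$ — is precisely what guarantees that $f'$ never mixes blocks across distinct indices $i$, so that the data genuinely splits into $k$ independent pieces. A short counting argument then upgrades this to a restriction: since $f'$ is a global bijection preserving block sizes and the images of the $\alpha'$-blocks inside $\alpha_i$ all land inside $\beta_i$, summing sizes over $i$ forces $f'$ to restrict to a block-size-preserving bijection on each pair $(\alpha_i,\beta_i)$, i.e.\ a genuine element of $\mathcal{UBP}_{\alpha_i}$.

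Establishing this bijection is the main obstacle, and it is the place where the refinement order for UBPs must be handled with care. Granting it, the next step is to factor the summand. Since $\alpha=\alpha_1\vert\cdots\vert\alpha_k$ and $\alpha'$ restricts to a partition $\pi_i\coloneqq\alpha'|_{\alpha_i}$ on each block, the interval $[\alpha',\alpha]$ in $\Pi_n$ splits as a direct product $\prod_{i=1}^k[\pi_i,\mathbb{1}_{\alpha_i}]$, exactly as in the computation of $\mu_\Pi$ recalled earlier in this section. Multiplicativity of the M\"obius function over products of posets then yields
\begin{equation*}
\mu_\mathcal{U}[(\alpha',\beta',f'),(\alpha,\beta,f)]=\mu_\Pi(\alpha',\alpha)=\prod_{i=1}^k\mu_\Pi(\pi_i,\mathbb{1}_{\alpha_i}),
\end{equation*}
so the summand depends on the tuple $(\pi_i,\sigma_i,g_i)_i$ only through the product of the factors $\mu_\Pi(\pi_i,\mathbb{1}_{\alpha_i})$.

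With both the index set and the summand factored, I would substitute into Definition \ref{def:Cfu} and interchange the sum over $\prod_i\mathcal{UBP}_{\alpha_i}$ with the product, obtaining
\begin{equation*}
\Cf_\mathcal{U}(\alpha,\beta,f)=\prod_{i=1}^k\Bigg\{\sum_{(\pi_i,\sigma_i,g_i)\in\mathcal{UBP}_{a_i}}\mu_\Pi(\pi_i,\mathbb{1}_{a_i})\Bigg\}.
\end{equation*}
Within each factor I would group the UBPs by their common type $\lambda_i\vdash a_i$: there are $U_{a_i}(\lambda_i)$ of them by Eq.~\eqref{eq:num-ubp}, and $\mu_\Pi(\pi_i,\mathbb{1}_{a_i})$ depends only on the type of $\pi_i$ (indeed only on its number of blocks), so it may be evaluated at any representative $\pi^{\lambda_i}_{a_i}$. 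This yields the first displayed equality of the lemma. For the second equality I would specialize Definition \ref{def:Cfu} to the one-block UBP $(\mathbb{1}_{a_i},\mathbb{1}_{a_i},id_{a_i})$: as this is the coarsest element of $\mathcal{UBP}_{a_i}$, its set of refinements is all of $\mathcal{UBP}_{a_i}$, and the defining sum reproduces exactly the $i$-th bracketed factor. Hence each factor equals $\Cf_\mathcal{U}(\mathbb{1}_{a_i},\mathbb{1}_{a_i},id_{a_i})$, completing the argument.
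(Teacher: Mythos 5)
Your proof is correct and takes essentially the same approach as the paper's: both rest on the block-by-block decomposition of refinements of $(\alpha,\beta,f)$ into independent UBP structures, the multiplicativity of $\mu_\Pi$ across blocks, and the count $U_{a_i}(\lambda_i)$ of UBPs of each type. The only difference is presentational---you make the bijection onto $\prod_{i=1}^k \mathcal{UBP}_{a_i}$ explicit before grouping by type, whereas the paper labels the refinements directly by sequences of partition types $\{\lambda_i \vdash a_i\}_{i=1}^k$ and counts them at once.
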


\begin{proof}
Firstly, notice that since $(\mathbb{1}_n,\mathbb{1}_n,id_n)$ is coarsest in $\mathcal{UBP}_n$, every $(\alpha' ,\beta', f')\in \mathcal{UBP}_n$ is a refinement. Corresponding to each integer partition $\lambda \vdash n$ then, there are $U_n(\lambda)$ refinements (see Eq.~\eqref{eq:num-ubp}), each with a M\"obius factor $\mu_{\Pi}(\pi^{\lambda}_n, \mathbb{1}_n) = (-1)^{m_\lambda -1}(m_\lambda -1)!$, where $m_\lambda$ is the total number of blocks in $\pi^{\lambda}_n$. Thus, it is clear that 
\begin{equation}
    \Cf_\mathcal{U}(\mathbb{1}_n,\mathbb{1}_n,id_n) = \displaystyle\sum_{(\alpha', \beta', f')\leq (\mathbb{1}_n, \mathbb{1}_n, id_n)} \mu_\Pi (\alpha',\mathbb{1}_n) = \displaystyle\sum_{\lambda \vdash n} U_n(\lambda) \mu_\Pi (\pi^{\lambda}_n, \mathbb{1}_n)
\end{equation}
Now, for a general $(\alpha, \beta, f)\in \mathcal{UBP}_n$, notice that $(\alpha', \beta' ,f')\leq (\alpha, \beta, f)$ implies that each block of $\alpha$ can be written as a union of blocks of $\alpha'$: $\alpha_i = \cup_{j\in A_i} \alpha'_j$ for $i=1,2,\ldots ,k$. This allows us to label the refinements $(\alpha', \beta', f')$ by a sequence of $k$ partition types corresponding to each block $\alpha_i$: $\{\lambda_i \vdash a_i\}_{i=1}^k$. Moreover, for each type $\{\lambda_i \vdash a_i\}_{i=1}^k$, there are a total of $\prod_{i=1}^k U_{a_i}(\lambda_i)$ refinements, each with a M\"obius factor $\prod_{i=1}^k \mu_\Pi (\pi^{\lambda_i}_{a_i}, \mathbb{1}_{a_i})$. Summing these contributions over all partition types $\{\lambda_i \vdash a_i\}_{i=1}^k$ yields the desired result:
\begin{align}
    \Cf_\mathcal{U}(\alpha, \beta, f) &= \mathlarger{\mathlarger{\sum}}_{ \left\{  \substack{ \lambda_i \vdash a_i \\ . \\ . \\ \lambda_k \vdash a_k }      \right\}} \left\{ \displaystyle\prod_{i=1}^k U_{a_i}(\lambda_i) \mu_\Pi (\pi^{\lambda_i}_{a_i}, \mathbb{1}_{a_i}) \right\} \nonumber \\
    &= \displaystyle\prod_{i=1}^k \left\{ \displaystyle\sum_{\lambda_i \vdash a_i} U_{a_i}(\lambda_i) \mu_{\Pi} (\pi^{\lambda_i}_{a_i}, \mathbb{1}_{a_i}) \right\} \nonumber \\
    &= \displaystyle\prod_{i=1}^k \Cf_\mathcal{U}(\mathbb{1}_{a_i}, \mathbb{1}_{a_i}, id_{a_i}) \nonumber
\end{align}
\end{proof}
We take a moment to explicitly state the formula for the numbers $\Cf_\mathcal{U}(\mathbb{1}_n,\mathbb{1}_n,id_n) \coloneqq \Cf_\mathcal{U}(n) $. 
\begin{equation}
    \Cf_\mathcal{U}(\mathbb{1}_n,\mathbb{1}_n,id_n) = \sum_{\lambda \vdash n} \left( \frac{n!}{(1!)^{m_1}(2!)^{m_2}\ldots (n!)^{m_n}} \right)^2 \frac{(-1)^{m_\lambda -1} (m_\lambda -1)!}{m_1!m_2!\ldots m_n!} 
\end{equation}
Here, $m_\lambda = \sum_{i=1}^n m_i$ is the total number of blocks in an arbitrary partition of type $\lambda = 1^{m_1}2^{m_2}\ldots n^{m_n}$ in $\Pi_n$. The sum runs over all integer partitions $\lambda \vdash n$. We state the initial values of the sequence $\{ \Cf_\mathcal{U}(\mathbb{1}_n,\mathbb{1}_n,id_n) \}_{n\in \mathbb{N}}$ below (\cite{oeis} \href{https://oeis.org/A101981}{A101981}).
\begin{equation}
    1,-1,4,-33,456,-9460,274800,\ldots
\end{equation}

The combinatorial functions on $\Pi^{(2)}_{2n}$ also possess a similar multiplicative property. The proof of Lemma \ref{lemma:multi-comb-func-ubp} can be easily altered to exhibit this.

\begin{lemma}[Multiplicativity of Combinatorial functions on $\Pi^{(2)}_{2n}$]
Let $\alpha \in \Pi^{(2)}_{2n}$. Index the blocks of $\alpha$ such that $\alpha = \alpha_1\vert \alpha_2\vert \ldots\vert \alpha_k$. Let $a_i = \vert \alpha_i \vert \, \forall i$. Then,
\begin{align}
    \Cf_\Pi(\alpha) &= \displaystyle\prod_{i=1}^k \left\{ \displaystyle\sum_{\lambda_i \Vdash a_i} P_{a_i}(\lambda_i) \mu_{\Pi} (\pi^{\lambda_i}_{a_i}, \mathbb{1}_{a_i}) \right\} \nonumber \\
    &= \displaystyle\prod_{i=1}^k \Cf_\Pi(\mathbb{1}_{a_i})
\end{align}
where $\pi^{\lambda_i}_{a_i} \in \Pi^{(2)}_{a_i}$ is an arbitrary even partition of type $\lambda_i \Vdash a_i$.
\end{lemma}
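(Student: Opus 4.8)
The plan is to mirror the proof of Lemma \ref{lemma:multi-comb-func-ubp}, replacing the count of uniform block permutations of a given type by the count of even set partitions of a given type, and replacing $\mu_\mathcal{U}$ by the restricted Möbius function on $\Pi^{(2)}_{2n}$. I would begin with the single-block case $\alpha = \mathbb{1}_{2n}$. Here every even partition $\alpha' \in \Pi^{(2)}_{2n}$ is a refinement of $\alpha$, so I group these refinements by their type $\lambda \Vdash 2n$. Since there are exactly $P_{2n}(\lambda)$ even partitions of type $\lambda$, and each contributes the same Möbius weight $\mu_\Pi(\pi^\lambda_{2n}, \mathbb{1}_{2n})$ — the value of $\mu_\Pi$ on an interval depends only on the isomorphism type of the interval, hence only on $\lambda$ — this immediately gives
$$\Cf_\Pi(\mathbb{1}_{2n}) = \sum_{\lambda \Vdash 2n} P_{2n}(\lambda)\, \mu_\Pi(\pi^\lambda_{2n}, \mathbb{1}_{2n}),$$
and likewise $\Cf_\Pi(\mathbb{1}_{a_i})$ for each even $a_i$.

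For a general $\alpha = \alpha_1 \vert \cdots \vert \alpha_k \in \Pi^{(2)}_{2n}$, the crucial structural observation is that even refinements decouple across the blocks of $\alpha$. Indeed, if $\alpha' \leq \alpha$ is even then each block $\alpha_i$ is a union of blocks of $\alpha'$, all of even size, so the restriction $\alpha'|_{\alpha_i}$ is an even partition of the (even-sized) set $\alpha_i$; conversely, any independent choice of an even partition of each $\alpha_i$ reassembles into an even refinement of $\alpha$. This produces a bijection between even refinements of $\alpha$ and the product $\prod_{i=1}^k \Pi^{(2)}_{a_i}$, and at the level of intervals gives the poset isomorphism $[\alpha', \alpha] \cong \prod_{i=1}^k [\alpha'|_{\alpha_i}, \mathbb{1}_{\alpha_i}]$.

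With this decomposition in hand, I would invoke the multiplicativity of the Möbius function over direct products of posets to write $\mu_\Pi(\alpha', \alpha) = \prod_{i=1}^k \mu_\Pi(\alpha'|_{\alpha_i}, \mathbb{1}_{\alpha_i})$, then distribute the product over the now-factorized sum:
$$\Cf_\Pi(\alpha) = \sum_{\substack{\alpha' \leq \alpha \\ \alpha' \text{ even}}} \prod_{i=1}^k \mu_\Pi(\alpha'|_{\alpha_i}, \mathbb{1}_{\alpha_i}) = \prod_{i=1}^k \left( \sum_{\alpha'_i \in \Pi^{(2)}_{a_i}} \mu_\Pi(\alpha'_i, \mathbb{1}_{a_i}) \right) = \prod_{i=1}^k \Cf_\Pi(\mathbb{1}_{a_i}).$$
Grouping each inner sum by type then reproduces the first displayed formula of the lemma via the base case above.

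The one point requiring genuine care — rather than a routine transcription of the UBP argument — is making sure that passing to even partitions does not disturb either the product decomposition or the Möbius factorization. The enabling fact, already recorded in the excerpt, is that whenever two partitions of $[2n]$ have all even block sizes the entire interval between them lies inside $\Pi^{(2)}_{2n}$; consequently the Möbius function of $\Pi^{(2)}_{2n}$ is simply the restriction of $\mu_\Pi$, and no separate computation is needed. I would state this explicitly both to justify using $\mu_\Pi$ throughout and to confirm that each interval $[\alpha'|_{\alpha_i}, \mathbb{1}_{\alpha_i}]$, which a priori sits in $\Pi_{a_i}$, in fact lies in $\Pi^{(2)}_{a_i}$. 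Everything else is the same bookkeeping as in Lemma \ref{lemma:multi-comb-func-ubp}.
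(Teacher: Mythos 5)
Your proposal is correct and follows essentially the same route as the paper, which simply notes that the proof of Lemma \ref{lemma:multi-comb-func-ubp} ``can be easily altered'' to cover $\Pi^{(2)}_{2n}$: you group even refinements block by block, use the interval/product isomorphism and multiplicativity of $\mu_\Pi$, and factor the sum into a product of per-block sums, replacing $U_{a_i}(\lambda_i)$ by $P_{a_i}(\lambda_i)$. Your explicit justification that the even-partition poset inherits the restriction of $\mu_\Pi$ (so no separate M\"obius computation is needed) is exactly the enabling fact the paper records just before Lemma \ref{lemma:musemigroups}, so nothing is missing.
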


The numbers $\Cf_\Pi(\mathbb{1}_{2n}) \coloneqq \Cf_\Pi(n)$ are given by the formula 
\begin{equation}
    \Cf_\Pi(\mathbb{1}_{2n}) = \sum_{\lambda \Vdash 2n } \frac{(2n)! (-1)^{a_\lambda -1}(a_\lambda -1)!}{a_1!a_2!\ldots a_n!(2!)^{a_1}(4!)^{a_2}\ldots ((2n)!)^{a_n}}   
\end{equation}
Here, $a_\lambda = \sum_{i=1}^n a_i$ is the total number of blocks in an arbitrary even partition of type $\lambda = 2^{a_1}4^{a_2}\ldots (2n)^{a_n}$ in $\Pi^{(2)}_{2n}$. The sum runs over all even integer partitions $\lambda \Vdash 2n$. We state the initial values of the sequence $\{ \Cf_\Pi(\mathbb{1}_{2n}) \}_{n\in \mathbb{N}}$ below (\cite{oeis} \href{https://oeis.org/A000182}{A000182}).
\begin{equation}
    1,-2,16,-272,7936,-353792,22368256,\ldots
\end{equation}

\begin{remark}
As was the case with the M\"obius functions, the underlying order $n \in \mathbb{N}$ of $\Pi^{(2)}_{2n}$ and $\mathcal{UBP}_n$ is tacit while writing the relevant combinatorial functions $\Cf_\Pi$ and $\Cf_\mathcal{U}$. 
\end{remark}

\begin{remark}
The combinatorial considerations from this section section fit in nicely in Stanley's theory of \emph{exponential structures} \cite{stanley1978exponential} and \cite[Section 5.5]{ stanley-combinatorics-2}. Indeed, it can easily be seen from the previous discussion that the sequence of posets $(\mathcal{UBP}_n)_n$ (resp.~$(\Pi^{(2)}_{2n})_n$) is an exponential structure with denominator sequence $M(n)= n!$ (resp.~$M(n) =  (2n)!!=(2n-1)(2n-3)\cdots5\cdot3\cdot1$). The coefficients $\Cf_\mathcal{U}$ and $\Cf_\Pi$ satisfy the following expressions (see \cite[Eq.~(8)]{stanley1978exponential}, with $\Cf(n) = -\mu_n$):
\begin{align*}
    \sum_{n=1}^\infty \frac{\Cf_{\mathcal U}(n) x^n}{n!^2} &= \log \sum_{n=0}^\infty \frac{x^n}{n!^2} = \log I_0(2\sqrt{x})\\
    \sum_{n=1}^\infty \frac{\Cf_{\Pi}(n) (2x)^n}{(2n)!} &= \log \sum_{n=0}^\infty \frac{(2x)^n}{(2n)!} = \log \cosh(\sqrt{2x}),
\end{align*}
where $I_0$ is the \emph{modified Bessel function of the first kind}.
\end{remark}

\section{Diagrammatic notation for tensors}\label{sec:diagrams-tensors}

We describe in this section a graphical way of representing tensors which will be at the heart of the integration formulas in Sections \ref{sec:graphical-integration-C} and \ref{sec:graphical-integration-R}. The idea of representing tensors (resp.~tensor contractions) with boxes (resp.~wires) goes back to Penrose \cite{penrose1971applications}. Modern presentations, related to quantum information theory and tensor networks can be found in \cite{wood2015tensor,bridgeman2017hand,coecke2017picturing}; we refer the reader interested in the full power of the formalism to these excellent references. 

In what follows, uppercase letters are used to denote matrices $X,Y \in \mathcal{M}_d(\mathbb{C})$ and lowercase letters are used to denote vectors $v,w \in \mathbb{C}^d$ or scalars $x,y \in \mathbb{C}$, depending on the context. Given $v \in \mathbb{C}^d$ (resp. $X \in \mathcal{M}_d(\mathbb C)$), conjugate transpose is denoted by $v^*$ (resp. $X^*$) while entrywise complex conjugate is denoted by $\overbar{v}$ (resp. $\overbar{X}$). Hadamard (or entrywise) product of two matrices $X,Y \in \mathcal{M}_d(\mathbb C)$ is denoted by $X\odot Y$. The standard basis in $\mathbb{C}^d$ (resp. $\mathbb{C}^d \otimes \mathbb{C}^d$) is denoted by $\{e_i\}_{i=1}^d$ (resp. $\{ e_i \otimes e_j \}_{i,j=1}^d$).

\bigskip

The building blocks of the diagrammatic notation for tensors are \emph{boxes}, which encode tensors. Typically, we depict a $k$-tensor by a box with $k$ half-edges sticking out of it. We usually represent edges corresponding to primal spaces (vectors, or kets) pointing to the left, while edges corresponding to dual spaces (linear forms, or bras) point to the right. We depict in Figure \ref{fig:tensors} some examples of tensors. 

\begin{figure}[hbt!]
    \centering
\includegraphics[align=c]{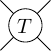}\qquad\qquad\qquad
\includegraphics[align=c]{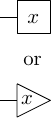}\qquad\qquad\qquad
\includegraphics[align=c]{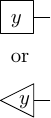}\qquad\qquad\qquad
\includegraphics[align=c]{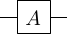}
    \caption{In the diagrammatic notation, tensors are represented by boxes. From left to right: a 4-tensor $T \in (\mathbb C^d)^{\otimes 4}$; a vector $x \in \mathbb C^d$; a dual vector $y \in (\mathbb C^d)^*$; a matrix $A \in \mathcal M_d(\mathbb C)$.}
    \label{fig:tensors}
\end{figure}

One can combine two or more diagrams by drawing them one next to the other. This operation corresponds to taking the \emph{tensor product} of the respective diagrams, see Figure \ref{fig:tensor-product}

\begin{figure}[hbt!]
    \centering
\includegraphics[align=c]{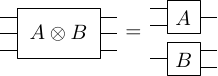} \qquad\qquad\qquad
\includegraphics[align=c]{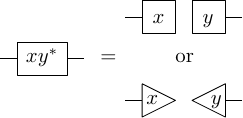}
    \caption{Right: the tensor product of two diagrams corresponds to diagram juxtaposition; here, we depict $A \otimes B \in \mathcal M_{d^3}(\mathbb C)$. Left: the rank one matrix $xy^* \in \mathcal M_d(\mathbb C)$, for $x,y \in \mathbb{C}^d$}
    \label{fig:tensor-product}
\end{figure}

The next important feature of the diagrammatic notation for tensors is the fact that tensor contraction corresponds to adding a \emph{wire} to the corresponding half-edges attached to the tensors. Using coordinates, adding a wire requires that the two indices corresponding to the two ends of the wire must be identical, and that there is a sum over the common index, see Figure \ref{fig:wires}. Mathematically, tensor contractions correspond to the evaluation map
\begin{align*}
    V^* \otimes V &\to \mathbb C\\
    (\phi, x) &\mapsto \langle \phi,x \rangle = \phi(x).
\end{align*}

\begin{figure}[hbt!]
    \centering
    \includegraphics[align=c]{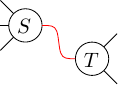} \qquad\qquad\qquad
    \includegraphics[align=c]{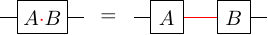}\qquad\qquad\qquad
    \includegraphics[align=c]{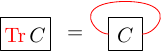}
    \caption{Left: a \textcolor{red}{tensor contraction} between two tensors $S,T$. Center: the matrix product is a tensor contraction. Right: the trace of a matrix $C \in \mathcal M_d(\mathbb C)$.}
    \label{fig:wires}
\end{figure}

 We discuss now a special type of tensors, made only out of wires, see Figure \ref{fig:wire-tensors}. One reads these tensors by interpreting each wire as a delta function, requiring that the corresponding coordinates match. 

\begin{figure}[hbt!]
    \centering
    \includegraphics[align=c]{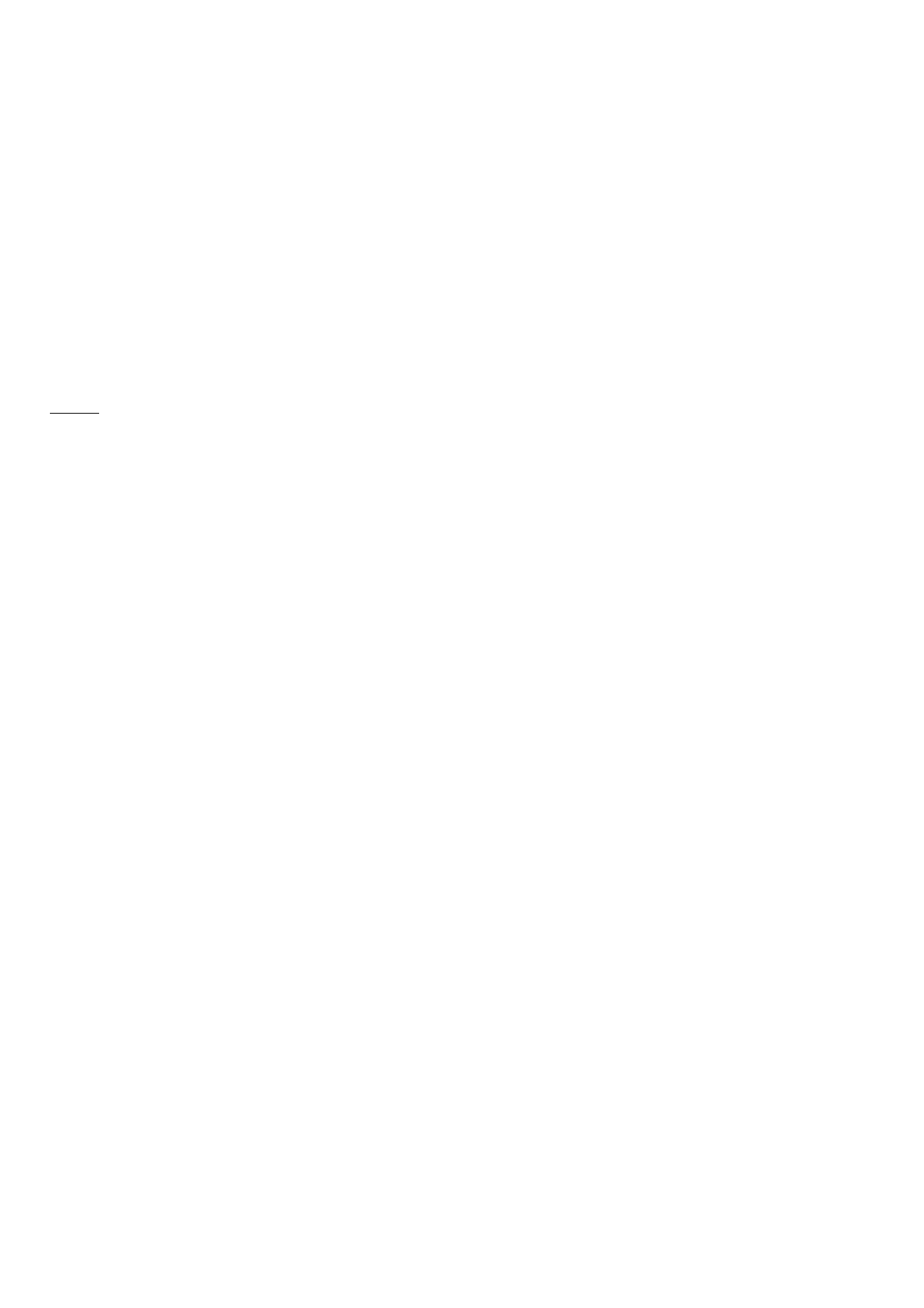} \qquad\qquad\qquad
    \includegraphics[align=c]{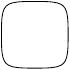}\qquad\qquad\qquad
    \includegraphics[align=c]{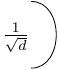}\qquad\qquad\qquad
    \includegraphics[align=c]{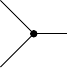}
    \caption{From left to right: the identity matrix; loops correspond to the scalar $d = \dim V$; a maximally entangled state $ \Omega = \frac{1}{\sqrt d} \sum_{i=1}^d {e_i \otimes e_i}$; an un-normalized GHZ state $\mathrm{GHZ} = \sum_{i=1}^d e_i \otimes e_i \otimes e_i$.}
    \label{fig:wire-tensors}
\end{figure}

Adding extra wires to diagrams representing matrices or vectors allows us to represent the conditional expectation on the diagonal matrix algebra and related quantities, see Figure \ref{fig:diagonal}. 

\begin{figure}[hbt!]
    \centering
    \includegraphics[align=c]{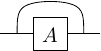} \qquad\qquad\qquad
    \includegraphics[align=c]{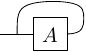} \qquad\qquad\qquad
    \includegraphics[align=c]{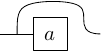}
    \caption{Left: the conditional expectation onto the algebra of diagonal matrices applied to $A$; in other words, the matrix obtained by setting the off-diagonal entries of $A$ to zero. Center: the diagonal vector of a matrix $A$. Right: the diagonal matrix with the vector $a$ on its diagonal.}
    \label{fig:diagonal}
\end{figure}

The transposition and the partial transposition of matrices will play an important role in this paper. The transposition operation can be represented graphically by permuting the input and output dangling edges of a matrix. The partial transposition (resp.~the partial trace) operations have equally pleasant graphical representations, see Figure \ref{fig:transpositions}. 

\begin{figure}[hbt!]
    \centering
    \includegraphics[align=c]{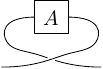} \qquad\qquad\qquad
    \includegraphics[align=c]{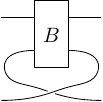} \qquad\qquad\qquad
    \includegraphics[align=c]{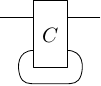}
    \caption{From left to right: the transpose $\operatorname{transp}(A) := A^\top$; the partial transpose $B^\Gamma := [\operatorname{id} \otimes \operatorname{transp}](B)$; the partial trace $[\operatorname{id} \otimes \operatorname{Tr}](C)$.}
    \label{fig:transpositions}
\end{figure}

We end this section by mentioning that we shall address a specific assignment of coordinates to a tensor by writing index values on top of dangling edges, as in Figure \ref{fig:coordinates}. 

\begin{figure}[hbt!]
    \centering
    \includegraphics[align=c]{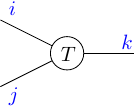} \qquad\qquad\qquad
    \includegraphics[align=c]{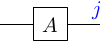}
    \caption{Integers written on top of the dangling edges of a tensor correspond to assigning specific coordinates. On the left panel we have the coordinate $T_{ijk} \in \mathbb C$ of a 3-tensor $T$. On the right panel, we depict the diagram of the $j$-th column vector of a matrix $A$. }
    \label{fig:coordinates}
\end{figure}

\section{Graphical integration formula --- complex phases}\label{sec:graphical-integration-C}

This section contains the main theoretical result of our work, a \emph{graphical integration formula}, which enables one to compute expectation values of diagrams with respect to random variables comprised of independent and identically distributed complex phases. 

Let us start by describing the probability distribution of the random variables we shall consider. In this paper, $\mathbb T$ will denote the unit circle $\mathbb T = \{z \in \mathbb C \, : \, |z| = 1\}$. 

\begin{definition}\label{def:rv-u}
A \emph{random phase} is a random variable $z \in \mathbb T$ having uniform distribution on the unit circle $z = \exp(\mathrm{i} \theta)$, where $\theta$ is uniformly distributed on $[0,2\pi]$.
A \emph{random phase vector} is a random variable $u \in \mathbb T^d$ having independent and identically distributed coordinates
\begin{equation}\label{eq:def-rv-u}
    u=(u_1, \ldots, u_d) \text{ with } u_k = \exp(\mathrm{i}\theta_k) \text{ where $\theta_1, \ldots, \theta_d$ are i.i.d. uniform in $[0,2\pi]$.}
\end{equation}
\end{definition}

\begin{figure}[hbt!]
    \centering
    \includegraphics{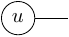} \qquad \qquad \includegraphics{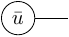}
    \caption{Diagrams corresponding to random phase vectors and their conjugates. The distribution of the random variable $u$ is given in \eqref{eq:def-rv-u}.}
    \label{fig:boxes-u-ubar}
\end{figure}

\vspace{0.5cm}

Given a diagram $\mathcal D$ containing boxes corresponding to random variables $u$ and $\overbar u$, our goal is to express the expectation value of $\mathcal D$ (with respect to the distribution of $u$, denoted by $\mathbb E_u \mathcal{D}$) as a weighted sum of diagrams obtained by gluing particular sets of edges of $\mathcal D$. The first step in this direction is to pull the $u$ and $\overbar u$ boxes outside, so as to \emph{separate} them from the rest of the diagram, see Figure \ref{fig:separate-uubar-D}. Note that this procedure does not modify the tensor represented by $\mathcal D$, it is just an aesthetical rearrangement of the boxes composing $\mathcal D$. In this way, $\mathcal D$ can be seen as the contraction of a diagram $\mathcal D^\circ$ with a bunch of $u$ (usually represented on the left of $\mathcal D^\circ$) and $\overbar u$ (usually represented on the right) boxes, see Figure~\ref{fig:separate-uubar-D}.

\begin{figure}[H] 
    \centering
    \includegraphics{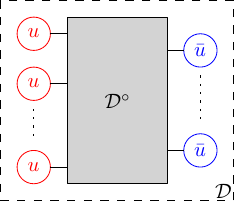}
    \caption{Given a diagram $\mathcal D$, we pull the \textcolor{red}{$u$} boxes out to the left, the \textcolor{blue}{$\overbar u$} boxes to the right, the remaining boxes forming a diagram $\mathcal D^\circ$.}
    \label{fig:separate-uubar-D}
\end{figure}

Before describing the graphical integration procedure, let us show that the only relevant cases are the ones where the number of $u$-boxes is equal to the number of $\overbar u$-boxes. 

\begin{lemma} \label{lemma:n neq m = 0}
Let $\mathcal{D}$ be a diagram containing $n$ $u$-boxes and $m$ $\overbar{u}$-boxes, corresponding to random variables $u$ as in \eqref{eq:def-rv-u}. If $n \neq m$, then $\mathbb{E}_u \mathcal{D} = 0$.
\end{lemma}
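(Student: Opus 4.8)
The plan is to reduce the statement to a coordinate computation via linearity of expectation, and then to exploit the independence and rotation-invariance of the individual phases. First I would fix a coordinate expression for the scalar $\mathcal D$. Each $u$-box and each $\overbar u$-box is a single vector in $\mathbb C^d$, i.e. a box carrying exactly one half-edge; after writing every wire as a sum over its shared index, $\mathcal D$ becomes a finite sum, over all admissible assignments of integer labels to the edges incident to the $u$ and $\overbar u$ boxes, of terms of the form
\[
c(\mathbf i,\mathbf j)\, u_{i_1}\cdots u_{i_n}\,\overbar u_{j_1}\cdots \overbar u_{j_m},
\]
where the coefficient $c(\mathbf i,\mathbf j)$ gathers all entries of the remaining (deterministic) boxes forming $\mathcal D^\circ$ and is therefore independent of $u$. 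By linearity of expectation, it then suffices to show that each such monomial in the entries of $u$ and $\overbar u$ has vanishing expectation whenever $n \neq m$.

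Next I would factor each monomial coordinate by coordinate. Writing $u_k = \exp(\mathrm i\theta_k)$ and $\overbar u_k = \exp(-\mathrm i\theta_k)$, and letting $p_k$ (resp.\ $q_k$) denote the number of occurrences of the index $k$ among $i_1,\ldots,i_n$ (resp.\ among $j_1,\ldots,j_m$), the random part of the monomial equals $\prod_{k=1}^d \exp\!\big(\mathrm i\theta_k(p_k-q_k)\big)$. Since $\theta_1,\ldots,\theta_d$ are independent, the expectation factorizes as $\prod_{k=1}^d \mathbb E\big[\exp(\mathrm i\theta_k(p_k-q_k))\big]$, and each factor is controlled by the elementary identity
\[
\frac{1}{2\pi}\int_0^{2\pi} e^{\mathrm i\theta N}\, d\theta = \delta_{N,0} \qquad (N \in \mathbb Z).
\]
Consequently the expectation of the monomial is $1$ if $p_k = q_k$ for every $k$, and $0$ otherwise.

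Finally I would close the argument by a counting observation: $\sum_{k=1}^d p_k = n$ and $\sum_{k=1}^d q_k = m$, so if $n \neq m$ then necessarily $p_k \neq q_k$ for at least one index $k$, and the corresponding factor kills the whole product. Every monomial thus has zero expectation, and summing over all index assignments yields $\mathbb E_u \mathcal D = 0$. This is essentially a warm-up lemma, so I do not anticipate a serious difficulty; the only point requiring care --- and hence the main obstacle --- is the passage from the diagrammatic object to the coordinate sum, namely verifying that each $u$- and $\overbar u$-box contributes exactly one phase factor and that the surrounding diagram $\mathcal D^\circ$ contributes coefficients that are constant in $u$, so that linearity of expectation applies cleanly.
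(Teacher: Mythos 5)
Your proof is correct, but it takes a genuinely different route from the paper's. The paper's argument is a one-line symmetry trick: since the distribution of $u$ is invariant under a global rotation, $u \stackrel{dist}{=} \omega u$ for any fixed $\omega \in \mathbb{T}$, and since $\mathcal D$ is homogeneous of degree $n$ in the entries of $u$ and degree $m$ in the entries of $\overbar u$, one gets $\mathbb E_u \mathcal D = \omega^{n-m}\,\mathbb E_u \mathcal D$; choosing $\omega$ with $\omega^{n-m} \neq 1$ forces the expectation to vanish, with no coordinate expansion needed. You instead expand $\mathcal D$ into monomials (which is legitimate --- it is exactly the generalization of Eq.~\eqref{eq:D-vs-Dcirc-u} to unequal numbers of boxes, and your worry about each $u$-box contributing exactly one factor is unfounded since $u$-boxes are $1$-tensors), factor the expectation over the independent coordinates $\theta_k$, and invoke the Fourier orthogonality $\frac{1}{2\pi}\int_0^{2\pi} e^{\mathrm i \theta N}\,d\theta = \delta_{N,0}$, closing with the counting observation $\sum_k p_k = n \neq m = \sum_k q_k$. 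What each approach buys: the paper's proof is shorter and stays at the level of the diagram, while yours is more informative --- it identifies precisely which monomials survive (those with $p_k = q_k$ for every $k$), which is exactly the ``matching'' condition that the paper only introduces later, in the proof of Lemma~\ref{lem:Eu-sum-injectve-diagrams}, on the way to Theorem~\ref{theorem:E-u}. In that sense your computation is a down payment on the main result, whereas the paper's lemma is deliberately self-contained and minimal.
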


\begin{proof}
Let $\omega \in \mathbb T$ be a fixed arbitrary phase. The distribution of the random phase vector $u$ from Definition \ref{def:rv-u} is invariant under a (global) rotation by $\omega$: $u \stackrel{dist}{=} \omega u$. Hence, we have $\mathbb E_u \mathcal D = \omega^{n-m}\mathbb E_u \mathcal D$, which, by choosing an appropriate value for $\omega$, implies $\mathbb E_u \mathcal D=0$ as claimed. 

\end{proof}

From now on, we shall only consider diagrams having an equal number of $u$ and $\overbar u$ boxes. As we will see shortly, our main result allows one to write the expectation $\mathbb{E}_u \mathcal D$ as a weighted sum of diagrams $\mathcal D_{(\alpha, \beta, f)}$, obtained by gluing the legs attached to the $u$ and the $\overbar u$ boxes in a manner prescribed by the uniform block permutation $(\alpha, \beta, f) \in \mathcal{UBP}_n$, see Figure \ref{fig:example-E-u}. The following definition describes in detail how the diagrams $\mathcal D_{(\alpha, \beta, f)}$ are constructed.

\begin{definition}\label{def:D-UBP}
Given a diagram $\mathcal D$ containing $n$ $u$-boxes and $n$ $\overbar u$-boxes along with a uniform block permutation $(\alpha,\beta,f) \in \mathcal{UBP}_n$, the following procedure is followed to construct the diagram $\mathcal D_{(\alpha, \beta, f)}$: 
\begin{enumerate}
    \item remove the $u$ and the $\overbar u$ boxes
    \item connect the $n$ dangling wires previously attached to the $u$-boxes along the partition $\alpha$
    \item connect the $n$ dangling wires previously attached to the $\overbar u$-boxes along the partition $\beta$
    \item connect the blocks of the partitions $\alpha$ and $\beta$ using the permutation $f$
\end{enumerate}
\end{definition}

\begin{figure}[hbt!]
    \centering
    \includegraphics[align=c]{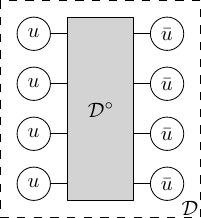} \qquad\qquad \includegraphics[align=c]{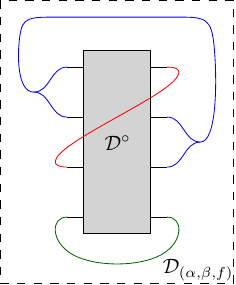} \qquad\qquad \includegraphics[align=c]{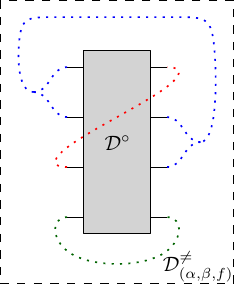}
    \caption{Left: a diagram $\mathcal D$ containing $n=4$ $u$-boxes and $m=4$ $\overbar u$-boxes. Center: the associated diagram $\mathcal D_{(\alpha, \beta, f)}$ corresponding to a UBP $(\alpha, \beta, f) \in \mathcal{UBP}_4$. The partition $\alpha = (\textcolor{blue}{12}|\textcolor{red}{3}|\textcolor{darkgreen}{4})$ is used to pair the wires corresponding to the $u$-boxes (on the left), and $\beta = (\textcolor{red}{1}|\textcolor{blue}{23}|\textcolor{darkgreen}{4})$ is used to pair the wires corresponding to the $\overbar u$ boxes (on the right). The bijective function $f$, matching the blocks of $\alpha$ and $\beta$ ($\textcolor{blue}{12} \leftrightarrow \textcolor{blue}{23}$, $\textcolor{red}{3} \leftrightarrow \textcolor{red}{1}$, $\textcolor{darkgreen}{4} \leftrightarrow \textcolor{darkgreen}{4}$) connects the blocks on the left with the blocks on the right. The wires are indexed from top to bottom. Right: the injective diagram $\mathcal D_{(\alpha,\beta,f)}^{\neq}$ corresponding to the same UBP. The indices corresponding to the dotted lines must be different.}
    \label{fig:example-E-u}
\end{figure}

Algebraically, after removing the $u$ and $\overbar u$-boxes from $\mathcal{D}$, we are left with the internal diagram $\mathcal D^\circ$ having $2n$ dangling edges, corresponding to a $2n$-tensor. Contracting the tensor $\mathcal D^\circ$ with the basis vectors corresponding to multi-indices $i,j:[n] \to [d]$ (see Figure~\ref{fig:coordinates}) yields the $\mathcal D^\circ_{i,j}$ diagram, allowing us to write

\begin{equation} \label{eq:D-vs-Dcirc-u}
\mathcal{D} = \sum_{i,j:[n] \to [d]} u_{i(1)}u_{i(2)}\cdots u_{i(n)} \overbar u_{j(1)}\overbar u_{j(2)} \cdots \overbar u_{j(n)} \mathcal{D}_{ij}^\circ.
\end{equation}

The construction procedure listed in Definition~\ref{def:D-UBP} then describes the following equation
\begin{equation}
\mathcal D_{(\alpha, \beta, f)} = \sum_{(i,j) \in \ker(\alpha,\beta,f)} \mathcal D^\circ_{i,j}
\end{equation}

where, instead of contracting $\mathcal{D}^{\circ}$ with arbitrary pairs of multi-indices $(i,j)$, we restrict ourselves to the set $\ker(\alpha,\beta,f)$, which consists of only those pairs of multi-indices $(i,j)$ which respect the given UBP $(\alpha,\beta, f)$ in the following sense:
\begin{itemize}
    \item if $x,y \in [n]$ belong to the same block of $\alpha$, then $i(x) = i(y)$, i.e., $i \in \ker (\alpha)$.
    \item if $x,y \in [n]$ belong to the same block of $\beta$, then $j(x) = j(y)$, i.e., $j \in \ker (\beta)$.
    \item if $x$ belongs to a block $a$ of $\alpha$, $y$ belongs to a block $b$ of $\beta$, and $f(a) = b$, then $i(x) = j(y)$.
\end{itemize}
For example, the diagram in the center panel of Figure \ref{fig:example-E-u} corresponds to the sum
\begin{equation}\label{eq:ex-D-sum}
\mathcal D_{\tiny \left(\!\!\begin{tabular}{c|c|c}
        \textcolor{blue}{1 2} & \textcolor{red}{3} & \textcolor{darkgreen}{4} \\
        \textcolor{blue}{2 3} & \textcolor{red}{1} & \textcolor{darkgreen}{4}
    \end{tabular}\!\!\right)}= \sum_{\textcolor{blue}{i},\textcolor{red}{j},\textcolor{darkgreen}{k}\in [d]} \mathcal D^\circ_{(\textcolor{blue}{i},\textcolor{blue}{i},\textcolor{red}{j},\textcolor{darkgreen}{k}),(\textcolor{red}{j},\textcolor{blue}{i},\textcolor{blue}{i},\textcolor{darkgreen}{k})}
\end{equation}

We now introduce a different type of diagrams $\mathcal{D}_{(\alpha, \beta, f)}^{\neq}$, which are obtained from the existing $\mathcal{D}_{(\alpha, \beta, f)}$ diagrams by imposition of an additional injectivity constraint.

\begin{definition}\label{def:D-UBP-injective}
Given a diagram $\mathcal D$ containing $n$ $u$-boxes and $n$ $\overbar{u}$-boxes along with a uniform block permutation $(\alpha,\beta,f) \in \mathcal{UBP}_n$, the \emph{injective diagram} $\mathcal D_{(\alpha, \beta, f)}^{\neq}$ corresponds to the tensor
$$\mathcal D_{(\alpha, \beta, f)}^{\neq} = \sum_{(i,j) \in \ker^{\neq}(\alpha,\beta,f)} \mathcal D^\circ_{i,j},$$
where $\ker^{\neq}(\alpha,\beta,f)$ is the set of \emph{injective} pairs of multi-indices $(i,j)$ respecting the UBP $(\alpha,\beta, f)$ in the following sense:
\begin{itemize}
    \item if $x,y \in [n]$ belong to different blocks of $\alpha$, then $i(x) \neq i(y)$
    \item if $x,y \in [n]$ belong to different blocks of $\beta$, then $j(x) \neq j(y)$.
    \item if $x,y \in [n]$ belong to the same block of $\alpha$, then $i(x) = i(y)$, i.e., $i \in \ker (\alpha)$.
    \item if $x,y \in [n]$ belong to the same block of $\beta$, then $j(x) = j(y)$, i.e., $j \in \ker (\beta)$.
    \item if $x$ belongs to a block $a$ of $\alpha$, $y$ belongs to a block $b$ of $\beta$, and $f(a) = b$, then $i(x) = j(y)$.
\end{itemize}
Graphically, the blocks of $\alpha$ and $\beta$ are connected (according to $f$) using dotted lines. Algebraically, the indices corresponding to different dotted lines should be different. 
\end{definition}

\begin{remark}
If the number of blocks in $(\alpha, \beta, f) \in \mathcal{UBP}_n$ exceeds the dimension $d$ of the underlying space $\mathbb{T}^d$ (in which the random vectors $u$ reside), the conditions stated in Definition~\ref{def:D-UBP-injective} become unattainable for any multi-index pair $i,j:[n]\rightarrow [d]$, i.e., $\ker^{\neq}(\alpha, \beta, f) $ is empty. Hence, $\mathcal{D}_{(\alpha, \beta, f)}^{\neq} = 0.$
\end{remark}

The injective diagram in the right panel of Figure \ref{fig:example-E-u} corresponds to the sum (compare to Eq.~\eqref{eq:ex-D-sum})
\begin{equation}
    \mathcal D^{\neq}_{\tiny \left(\!\!\begin{tabular}{c|c|c}
        \textcolor{blue}{1 2} & \textcolor{red}{3} & \textcolor{darkgreen}{4} \\
        \textcolor{blue}{2 3} & \textcolor{red}{1} & \textcolor{darkgreen}{4}
    \end{tabular}\!\!\right)}= \sum_{\substack{\textcolor{blue}{i},\textcolor{red}{j},\textcolor{darkgreen}{k}\in [d] \\ \textcolor{blue}{i} \neq \textcolor{red}{j} \neq \textcolor{darkgreen}{k}}} \mathcal D^\circ_{(\textcolor{blue}{i},\textcolor{blue}{i},\textcolor{red}{j},\textcolor{darkgreen}{k}),(\textcolor{red}{j},\textcolor{blue}{i},\textcolor{blue}{i},\textcolor{darkgreen}{k})}
\end{equation}

We are now in a position to state an important lemma, which relates the expectation value of a diagram to the sum over all injective diagrams.

\begin{lemma}\label{lem:Eu-sum-injectve-diagrams}
Let $\mathcal D$ be a diagram containing $n$ $u$-boxes and $n$ $\overbar u$-boxes. Then,
\begin{equation}\label{eq:Eu-sum-injective-diagrams}
\mathbb{E}_u \mathcal{D} = \sum_{(\alpha, \beta, f)\in \mathcal{UBP}_n} \mathcal{D}^{\neq}_{(\alpha, \beta, f)} 
\end{equation}
\end{lemma}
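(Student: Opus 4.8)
The plan is to begin from the index expansion \eqref{eq:D-vs-Dcirc-u} and push the expectation through the (finite) sum, reducing everything to a single moment computation for the random phases. The key computation is
$$\mathbb{E}_u\left[u_{i(1)}\cdots u_{i(n)}\,\overbar u_{j(1)}\cdots \overbar u_{j(n)}\right].$$
First I would regroup the factors of this monomial according to the value $v \in [d]$ they carry, rewriting it as $\prod_{v=1}^d u_v^{p_v}\,\overbar u_v^{q_v}$, where $p_v = |i^{-1}(v)|$ and $q_v = |j^{-1}(v)|$ record the multiplicities of $v$ among the indices $i$ and $j$ respectively. Since the coordinates $u_1, \ldots, u_d$ are independent and each phase satisfies $\mathbb{E}[u_v^{p}\,\overbar u_v^{q}] = \mathbb{E}[e^{\mathrm{i}(p-q)\theta_v}] = \delta_{p,q}$, the moment factorizes as $\prod_{v=1}^d \delta_{p_v,q_v}$. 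Hence it equals $1$ exactly when the multiset of values attained by $i$ coincides with that attained by $j$, and vanishes otherwise. This yields the intermediate identity
\begin{equation*}
\mathbb{E}_u \mathcal{D} = \sum_{\substack{i,j:[n]\to[d] \\ \mathrm{multiset}(i)\,=\,\mathrm{multiset}(j)}} \mathcal{D}^\circ_{i,j}.
\end{equation*}

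The remaining work is purely combinatorial: to reorganize the surviving pairs $(i,j)$ according to their kernel data. To each such pair I would associate the partition $\alpha = \ker(i) \in \Pi_n$, whose blocks are the level sets $i^{-1}(v)$, and likewise $\beta = \ker(j) \in \Pi_n$. The matching-multiset condition says precisely that $p_v = q_v$ for every $v$, so the block sizes of $\alpha$ and $\beta$ coincide and $\alpha, \beta$ share a common type $\lambda \vdash n$; defining $f$ to send the block $i^{-1}(v)$ of $\alpha$ to the block $j^{-1}(v)$ of $\beta$ (for each value $v$ actually attained) produces a block-size-preserving bijection, i.e.\ a uniform block permutation $(\alpha,\beta,f) \in \mathcal{UBP}_n$. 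By construction $i$ is constant on the blocks of $\alpha$ and takes distinct values across distinct blocks, symmetrically for $j$ and $\beta$, while $i$ and $j$ agree on blocks matched by $f$; these are exactly the conditions defining $\ker^{\neq}(\alpha,\beta,f)$ in Definition \ref{def:D-UBP-injective}.

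This assignment $(i,j) \mapsto (\alpha,\beta,f)$ thus partitions the set of matching pairs into the fibers $\ker^{\neq}(\alpha,\beta,f)$, and conversely any pair lying in $\ker^{\neq}(\alpha,\beta,f)$ has matching multisets and recovers the uniform block permutation it came from. Substituting this disjoint decomposition into the intermediate identity and invoking the definition of $\mathcal{D}^{\neq}_{(\alpha,\beta,f)}$ delivers the claim. The step I expect to require the most care — and hence the main obstacle — is verifying that the kernel data of a matching pair genuinely defines an element of $\mathcal{UBP}_n$: one must check that the uniformity constraint $|f(a)| = |a|$ is exactly equivalent to the equality of multisets, and that the fibers $\ker^{\neq}(\alpha,\beta,f)$ tile the matching pairs with neither overlap nor omission (the fiber being empty precisely when the number of blocks exceeds $d$), so that the reindexing introduces no double counting.
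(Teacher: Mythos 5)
Your proposal is correct and follows essentially the same route as the paper's proof: expand $\mathcal D$ via Eq.~\eqref{eq:D-vs-Dcirc-u}, observe that the phase moment survives exactly for ``matching'' index pairs, and then group the matching pairs according to their kernel data, which partitions them into the sets $\ker^{\neq}(\alpha,\beta,f)$. The only difference is one of detail: you make explicit the moment computation $\mathbb{E}[u_v^{p}\,\overbar u_v^{q}] = \delta_{p,q}$ and the verification that the fibers of the kernel-data map are precisely the injective kernels, both of which the paper asserts more tersely.
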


\begin{proof}
Using Eq.~\eqref{eq:D-vs-Dcirc-u} and the linearity of the expectation, it is clear that 
\begin{equation}\label{eq:Eu-Dcirc}
\mathbb{E}_u \mathcal{D} = \sum_{i,j:[n] \to [d]} \mathbb{E}_u[u_{i(1)}\cdots u_{i(n)} \overbar u_{j(1)} \cdots \overbar u_{j(n)}] \mathcal{D}_{ij}^\circ
\end{equation}
The expectation value $\mathbb{E}_u[u_{i(1)}\cdots u_{i(n)} \overbar u_{j(1)} \cdots \overbar u_{j(n)}]$ in the equation above is non-zero (therefore equal to 1) if and only if all the $i(\cdot)$ indices are paired to the $j(\cdot)$ indices---we shall call such $i$ and $j$ \emph{matching}. Such pairings are implemented by uniform block permutations: each $(\alpha, \beta, f) \in \mathcal{UBP}_n$ corresponds to a possible pairing, and each pairing configuration can be obtained in this way. Hence, the non-zero terms in the sum \eqref{eq:Eu-Dcirc} can be grouped according to their pairing configurations $(\alpha, \beta, f) \in \mathcal{UBP}_n$. In order to avoid double counting of pairings, one needs to assign to each UBP the matching indices $(i,j)$ corresponding to the \emph{injective} diagrams $\mathcal D^{\neq}_{(\alpha, \beta, f)}$. More precisely, the sets $\ker^{\neq}(\alpha, \beta, f)$ form a partition of the set of matching $(i,j)$, and we have
\begin{align*}
    \mathbb{E}_u \mathcal{D} &= \sum_{\substack{i,j:[n] \to [d] \\ (i,j) \text{ matching}}}  \mathcal{D}_{ij}^\circ \\
    &= \sum_{(\alpha, \beta, f)\in \mathcal{UBP}_n} \qquad \sum_{(i,j) \in \ker^{\neq}(\alpha, \beta, f)} \mathcal{D}_{ij}^\circ \\
    &=  \sum_{(\alpha, \beta, f)\in \mathcal{UBP}_n} \mathcal{D}^{\neq}_{(\alpha, \beta, f)}.
\end{align*}

\end{proof}

In order to express the desired expectation in terms of the (more intuitive) non-injective diagrams $\mathcal{D}_{(\alpha, \beta, f)}$, our next result relates the two types of diagrams. We do this by exploiting the order structure on the poset $\mathcal{UBP}_n$, see Definition \ref{def:order-UBP}.

\begin{lemma} \label{lem:Dabf-vs-Dabf-injective}
For any uniform block permutation $(\alpha, \beta, f) \in \mathcal{UBP}_n$, the following relations hold:
\begin{align}
    \mathcal{D}_{(\alpha, \beta, f)} &= \sum_{(\alpha', \beta', f') \geq (\alpha, \beta, f)} \mathcal{D}^{\neq}_{(\alpha', \beta', f')}\\
    \mathcal{D}^{\neq}_{(\alpha, \beta, f)} &=\sum_{(\alpha', \beta', f') \geq (\alpha, \beta, f)} \mathcal{D}_{(\alpha', \beta', f')} \mu_{\mathcal{U}} [(\alpha,\beta,f),(\alpha',\beta',f')].
\end{align}
\end{lemma}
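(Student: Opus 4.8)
The plan is to establish the first identity as a \emph{disjoint-union decomposition} of index sets, and then obtain the second identity by a direct application of the M\"obius inversion formula (Theorem \ref{thm:Mobius-inversion}).

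First I would attach to every pair $(i,j) \in \ker(\alpha,\beta,f)$ a canonical uniform block permutation. Given such a pair, let $\alpha_{(i,j)}$ be the partition of $[n]$ whose blocks are the level sets of $i$ (the positions sharing a common $i$-value), let $\beta_{(i,j)}$ be the level sets of $j$, and let $f_{(i,j)}$ send the block of $\alpha_{(i,j)}$ with value $v$ to the block of $\beta_{(i,j)}$ with the same value $v$. The crucial point is that $f_{(i,j)}$ is a size-preserving bijection, hence a genuine element of $\mathcal{UBP}_n$: the defining conditions of $\ker(\alpha,\beta,f)$ together with $\vert f(a) \vert = \vert a \vert$ force, for each block $a$ of $\alpha$, the matched block $f(a)$ to have the same size and the same constant index value, so the multiset $\{i(x)\}_{x\in[n]}$ equals the multiset $\{j(y)\}_{y\in[n]}$. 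Consequently, for every value $v$, the number of positions where $i = v$ equals the number where $j = v$, which is exactly what makes $f_{(i,j)}$ well defined and size preserving.

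Next I would verify that this assignment realizes precisely the coarsenings of $(\alpha,\beta,f)$. Since $(i,j) \in \ker(\alpha,\beta,f)$ forces $i$ constant on the blocks of $\alpha$ and $j$ constant on the blocks of $\beta$, each block of $\alpha$ (resp.\ $\beta$) lies inside a block of $\alpha_{(i,j)}$ (resp.\ $\beta_{(i,j)}$), and the $f$-compatibility of Definition \ref{def:order-UBP} follows by tracking index values through $f$ and $f_{(i,j)}$; hence $(\alpha,\beta,f) \leq (\alpha_{(i,j)},\beta_{(i,j)},f_{(i,j)})$. By construction distinct blocks of $\alpha_{(i,j)}$ carry distinct values, so in fact $(i,j) \in \ker^{\neq}(\alpha_{(i,j)},\beta_{(i,j)},f_{(i,j)})$. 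Conversely, for any $(\alpha',\beta',f') \geq (\alpha,\beta,f)$, every $(i,j) \in \ker^{\neq}(\alpha',\beta',f')$ lies in $\ker(\alpha,\beta,f)$, since constancy on the finer blocks of $\alpha,\beta$ and the $f$-compatibility are inherited from the coarser data, and its canonical UBP is exactly $(\alpha',\beta',f')$. This establishes the partition
\[
\ker(\alpha,\beta,f) = \bigsqcup_{(\alpha',\beta',f') \geq (\alpha,\beta,f)} \ker^{\neq}(\alpha',\beta',f'),
\]
and summing $\mathcal{D}^\circ_{ij}$ over both sides yields the first relation. The second relation is then immediate: reading the first identity as $g(x) = \sum_{y \geq x} f(y)$ with $x = (\alpha,\beta,f)$, $g = \mathcal{D}_{(\,\cdot\,)}$ and $f = \mathcal{D}^{\neq}_{(\,\cdot\,)}$ valued in the abelian group of tensors represented by $\mathcal{D}$, the M\"obius inversion formula on the finite poset $\mathcal{UBP}_n$ gives $f(x) = \sum_{y \geq x} \mu_{\mathcal U}(x,y)\, g(y)$, which is precisely the claimed expression.

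The main obstacle will be the bookkeeping of the second step, namely checking that the canonical UBP of a pair in $\ker(\alpha,\beta,f)$ is well defined (the size-preserving property, resting on the multiset equality of $i$- and $j$-values) and that the correspondence between injective index sets and coarsenings is a genuine order-respecting bijection in the sense of Definition \ref{def:order-UBP}. Once this combinatorial correspondence is pinned down, both identities follow mechanically, the second being a formal consequence of the first via Theorem \ref{thm:Mobius-inversion}.
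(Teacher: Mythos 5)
Your proposal is correct and follows essentially the same route as the paper: the paper's proof simply asserts that the sets $\ker^{\neq}(\alpha',\beta',f')$ over all coarsenings $(\alpha',\beta',f') \geq (\alpha,\beta,f)$ partition $\ker(\alpha,\beta,f)$ and then invokes M\"obius inversion on $\mathcal{UBP}_n$, exactly as you do. Your contribution is to spell out the partition fact in detail via the canonical level-set UBP attached to each pair $(i,j)$ (including the multiset-equality argument that makes $f_{(i,j)}$ size-preserving), which the paper leaves implicit.
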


\begin{proof}
The first equation is a restatement of the fact that the sets $\ker^{\neq}(\alpha', \beta',f)$, with $(\alpha', \beta', f') \geq (\alpha, \beta, f)$ form a partition of the set $\ker(\alpha,\beta,f)$. The second equation is obtained from the first one by M\"obius inversion on the poset of uniform block permutations, see Theorem \ref{thm:Mobius-inversion}.
\end{proof}

We are now in position to state and prove the main result of this section.

\begin{theorem} \label{theorem:E-u}
Let $\mathcal{D}$ be a diagram containing $n$ $u$-boxes and $n$ $\overbar u$-boxes, where $u \in \mathbb{T}^d$ is a random vector consisting of i.i.d.~random uniform phases. Then, the following holds true
\begin{equation}
\mathbb{E}_u \mathcal{D} = \sum_{(\alpha, \beta, f) \in \mathcal{UBP}_n} \mathcal{D}_{(\alpha, \beta, f)} \Cf_{\mathcal U}(\alpha, \beta, f),
\end{equation}
where the combinatorial coefficients $\Cf_{\mathcal U}$ were introduced in Definition \ref{def:Cfu}. In other words, the expectation value of $\mathcal{D}$ is a sum of pairings of $\mathcal{D}$ according to all possible uniform block permutations, weighted by the combinatorial factors $\Cf_{\mathcal U}$. 
\end{theorem}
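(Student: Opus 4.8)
The plan is to assemble the theorem directly from the two lemmas immediately preceding it, using nothing more than a reindexing of a double sum together with the definition of $\Cf_{\mathcal U}$. The starting point will be Lemma \ref{lem:Eu-sum-injectve-diagrams}, which already expresses the expectation as a sum of injective diagrams:
\begin{equation*}
\mathbb{E}_u \mathcal{D} = \sum_{(\alpha, \beta, f)\in \mathcal{UBP}_n} \mathcal{D}^{\neq}_{(\alpha, \beta, f)}.
\end{equation*}
Since the target formula involves the non-injective diagrams $\mathcal D_{(\alpha,\beta,f)}$ rather than the injective ones, the first real step is to eliminate the injective diagrams in favour of the ordinary ones. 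The tool for this is precisely the second identity in Lemma \ref{lem:Dabf-vs-Dabf-injective}, obtained there by M\"obius inversion on $\mathcal{UBP}_n$.

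First I would substitute that M\"obius-inverted expression into the sum above, producing the double sum
\begin{equation*}
\mathbb{E}_u \mathcal{D} = \sum_{(\alpha, \beta, f)\in \mathcal{UBP}_n} \; \sum_{(\alpha', \beta', f') \geq (\alpha, \beta, f)} \mathcal{D}_{(\alpha', \beta', f')} \, \mu_{\mathcal{U}} [(\alpha,\beta,f),(\alpha',\beta',f')].
\end{equation*}
Next I would swap the order of summation. The index set of this double sum is the set of all comparable pairs $(\alpha,\beta,f) \leq (\alpha',\beta',f')$ in $\mathcal{UBP}_n$, so it may equally well be organized by first fixing the larger element $(\alpha',\beta',f')$ and then letting $(\alpha,\beta,f)$ range over everything below it. After pulling the factor $\mathcal{D}_{(\alpha', \beta', f')}$ outside the inner sum, this gives
\begin{equation*}
\mathbb{E}_u \mathcal{D} = \sum_{(\alpha', \beta', f')\in \mathcal{UBP}_n} \mathcal{D}_{(\alpha', \beta', f')} \sum_{(\alpha, \beta, f) \leq (\alpha', \beta', f')} \mu_{\mathcal{U}} [(\alpha,\beta,f),(\alpha',\beta',f')].
\end{equation*}

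Finally, the inner sum is recognized verbatim as $\Cf_{\mathcal U}(\alpha', \beta', f')$ from Definition \ref{def:Cfu}, and relabelling the dummy variable $(\alpha',\beta',f')$ back to $(\alpha,\beta,f)$ yields exactly the claimed identity. I expect the only point requiring genuine care to be the summation swap: one must verify that reindexing the set of comparable pairs from ``fix the smaller, sum over larger'' to ``fix the larger, sum over smaller'' leaves the index set unchanged, which is immediate here because both descriptions enumerate the same relation $\{\,(x,y) : x \leq y\,\}$ on the finite poset $\mathcal{UBP}_n$. Everything else is a mechanical application of the cited lemmas and the definition of the combinatorial weight, with no convergence or well-definedness issues since all sums are finite.
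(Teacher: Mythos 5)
Your proposal is correct and follows exactly the same route as the paper's own proof: start from Lemma \ref{lem:Eu-sum-injectve-diagrams}, substitute the M\"obius-inverted identity of Lemma \ref{lem:Dabf-vs-Dabf-injective}, exchange the order of summation over comparable pairs in $\mathcal{UBP}_n$, and identify the inner sum with $\Cf_{\mathcal U}$ via Definition \ref{def:Cfu}. No gaps; the summation swap you flag is indeed the only step needing care, and it is justified precisely as you state.
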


\begin{proof}
By combining the results from Lemmas \ref{lem:Eu-sum-injectve-diagrams} and \ref{lem:Dabf-vs-Dabf-injective}, we get 
\begin{align*}
\mathbb{E}_u \mathcal{D} &= \sum_{(\alpha, \beta, f)\in \mathcal{UBP}_n} \mathcal{D}^{\neq}_{(\alpha, \beta, f)}  \\
&= \sum_{(\alpha, \beta, f)\in \mathcal{UBP}_n} \left[\sum_{(\alpha', \beta', f') \geq (\alpha, \beta, f)} \mathcal{D}_{(\alpha', \beta', f')} \mu_{\mathcal{U}} [(\alpha,\beta,f),(\alpha',\beta',f')] \right] \\
&= \sum_{(\alpha', \beta', f')\in \mathcal{UBP}_n} \mathcal{D}_{(\alpha', \beta', f')} \left[\sum_{(\alpha, \beta, f) \leq (\alpha', \beta', f')} \mu_{\mathcal{U}} [(\alpha,\beta,f),(\alpha',\beta',f')] \right] \\
&= \sum_{(\alpha', \beta', f') \in \mathcal{UBP}_n} \mathcal{D}_{(\alpha', \beta', f')} \Cf_{\mathcal U}(\alpha', \beta', f')
\end{align*}
\end{proof}

\begin{remark}
It is instructive to compare the two formulas from Lemma \ref{lem:Eu-sum-injectve-diagrams} and from Theorem \ref{theorem:E-u}. The former one looks simpler, since the injective diagrams have no combinatorial weight $\Cf_{\mathcal U}$ associated. However, the difficulty with Lemma \ref{lem:Eu-sum-injectve-diagrams} lies in the fact that the injectivity conditions become very intricate even for moderate values of $n$, and avoiding double counting of diagrams is a tedious task. These inconveniences are absent from the formulation in Theorem \ref{theorem:E-u}; the price to pay is the presence of the weights $\Cf_{\mathcal U}$.
\end{remark}

\begin{remark}\label{rem:multiple-integrals-u}
If a diagram $\mathcal{D}$ contains several independent vectors $u_1, u_2, \ldots, u_p$--- each distributed according to Definition \ref{def:rv-u}--- then the expectation $\mathbb{E}_{u_1}\mathbb{E}_{u_2}\ldots \mathbb{E}_{u_p} \mathcal{D}$ can be computed by simultaneous applications of Theorem \ref{theorem:E-u} for each independent vector. More precisely, if $\mathcal{D}$ contains $n_i$ $u_i$-boxes and $n_i$ $\overbar{u_i}$-boxes for $i=1,2,\ldots ,p$, then
\begin{align*}
    \mathbb{E}_{u_1}\mathbb{E}_{u_2}\ldots \mathbb{E}_{u_p} \mathcal{D} &= 
    \mathbb{E}_{u_1}\left(\mathbb{E}_{u_2}\left(\ldots \left(\mathbb{E}_{u_p}\mathcal{D}\right)\ldots\right)\right) \\
    &= \mathlarger{\mathlarger{\sum}}_{ \left\{  \substack{ (\alpha,\beta,f)_1 \in \mathcal{UBP}_{n_1} \\ . \\ . \\ (\alpha,\beta,f)_p \in \mathcal{UBP}_{n_p} }    \right\}} \left\{\prod_{i=1}^p \Cf_\mathcal{U}((\alpha,\beta,f)_i) \right\} \mathcal{D}_{(\alpha,\beta,f)_1, (\alpha,\beta,f)_2, \ldots, (\alpha,\beta,f)_p }
\end{align*}
where the diagrams $\mathcal{D}_{(\alpha,\beta,f)_1, (\alpha,\beta,f)_2, \ldots, (\alpha,\beta,f)_p }$ are constructed as in Definition \ref{def:D-UBP}, with each UBP $(\alpha,\beta,f)_i$ deciding the pairings for the corresponding set of $u_i$, $\overbar{u_i}$ boxes, for $i=1,2,\ldots ,p$.
\end{remark}

\begin{remark}
In Theorem \ref{theorem:E-u}, we have given a recipe for computing averages with respect to random variables having distribution described in Definition \ref{def:rv-u}. This continuous probability distribution (i.i.d.~random phases) can be replaced by discrete probability measures, in the spirit of \emph{quantum $t$-designs} \cite{delsarte1991spherical,renes2004symmetric,dankert2009exact}. In \cite[Appendix A]{nakata2013diagonal} it is shown that random vectors $w \in \mathbb C^d$ having i.i.d.~entries $w_i$ with discrete distribution
$$\forall k \in \{0,1,2,\ldots, n\}, \qquad \mathbb P(w_i = \omega_{n+1}^k) = \frac{1}{n+1},$$
where $\omega_{n+1} = \exp(2\pi \mathrm{i}/(n+1))$, form a \emph{diagonal unitary $n$-design}. In other words, the conclusion of Theorem \ref{theorem:E-u} still holds when replacing the continuous random variable $u$ with the discrete random variable $w$.
\end{remark}

\bigskip

We now give explicit diagrams that one obtains when applying Theorem \ref{theorem:E-u} in the cases $n=1,2$; for the lengthier case $n=3$, see Appendix \ref{sec:app-n-3}. 

\begin{example}\label{eg:complex-n-1}
In the case $n=1$, there a single UBP having coefficient $\Cf_{\mathcal U}=1$, and we obtain: 
\smallskip
\begin{center}
\includegraphics{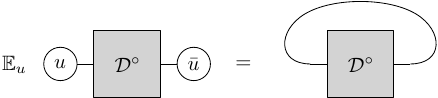}
\end{center}
\smallskip
\end{example}

\begin{example}\label{eg:complex-n-2}
In the case $n=2$, the expectation value is a sum of three diagrams:
\smallskip
\begin{center}
\includegraphics{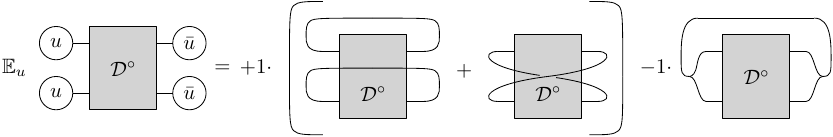}
\end{center}
\smallskip
The three diagrams on the right hand side of the equation above correspond to the following UBPs: 
${\tiny\left(\!\!\begin{tabular}{c|c}
        1 & 2\\
        1 & 2
    \end{tabular}\!\!\right)}$, 
${\tiny\left(\!\!\begin{tabular}{c|c}
        1 & 2\\
        2 & 1
    \end{tabular}\!\!\right)}$,
and, respectively,
${\tiny\left(\!\!\begin{tabular}{c}
        1 2\\
        1 2
    \end{tabular}\!\!\right)}$. The corresponding coefficients are $+1, +1, -1$. 
\smallskip
\end{example}

\bigskip    
    
Finally, let us present an explicit use of Theorem \ref{theorem:E-u}. We shall consider the following problem: given a matrix $X \in \mathcal M_d(\mathbb C)$, compute the expectation value $\mathbb E_U UXU^*$, where $U \in \mathcal U_d$ is a random diagonal unitary matrix having i.i.d.~complex phases on the diagonal. More precisely, $U = \operatorname{diag}(u)$, where $u$ is the random vector from Definition \ref{def:rv-u}, see Figure \ref{fig:E-UXUstar}, left panel. The diagram corresponding to $UXU^*$ is depicted in the top-right panel of Figure \ref{fig:E-UXUstar}; in coordinates, the diagram corresponds to the matrix $\tilde{X}$ with entries $\tilde{X}_{ij} = u_i X_{ij} \overbar {u_j}$. The result of the computation is represented in the bottom-right panel: $\mathbb E_U UXU^* = \operatorname{diag}(X)$.
\begin{figure}[hbt!]
    \centering
    \includegraphics[align=c]{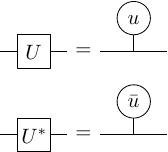} \qquad\qquad  \includegraphics[align=c]{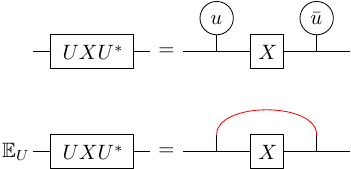}
    \caption{Left: the diagram for a random diagonal unitary matrix $U$ expressed in terms of its diagonal vector $u$, as well as its conjugate. Right, top: the diagram for $UXU^*$, where $X$ is a fixed square matrix. Right, bottom: applying Theorem \ref{theorem:E-u} consists in erasing the $u$ and the $\overbar u$ box and connecting the dangling wires with the unique UBP at $n=1$ (red wire); the result is the diagram corresponding to $\operatorname{diag}(X)$, see also Figure \ref{fig:diagonal}.}
    \label{fig:E-UXUstar}
\end{figure}
    
\section{Graphical integration formula --- real signs}\label{sec:graphical-integration-R}

In this section we shall provide a graphical method to compute the average value of diagrams containing boxes corresponding to random independent uniform $\pm 1$ signs. The content of this section will mirror perfectly that of the previous one, which can be seen as the complex version of the real case discussed here. For this reason, we shall leave many proofs and details to the reader, focusing on examples and on the differences with Section \ref{sec:graphical-integration-C}. 

We shall consider in this section random vectors $s \in \{\pm 1\}^d$, having independent and identically distributed coordinates.

\begin{definition}\label{def:rv-s}
A \emph{random sign} is a random variable $r \in \{\pm 1\}$, with $P(r = -1) =\mathbb P(r = +1) = 1/2 $. A \emph{random sign vector} is a random variable $s\in \{\pm 1\}^d$ having independent and identically distributed coordinates  
\begin{equation}\label{eq:def-rv-s}
    s=(s_1, \ldots, s_d) \text{ where } s_k \in \{\pm1\} \text{ are i.i.d. with } \mathbb P(s_i = -1) =\mathbb P(s_i = +1) = \frac 1 2 
\end{equation}
\end{definition}

Note that there is no conjugate object in this case, so we shall consider diagrams that can be written as a set of $m$ $s$-boxes, see Figure \ref{fig:example-E-s}, left panel. 

\begin{figure}[hbt!]
    \centering
    \includegraphics[align=c]{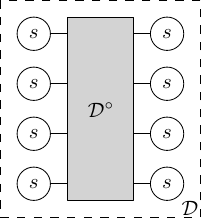} \qquad\qquad \includegraphics[align=c]{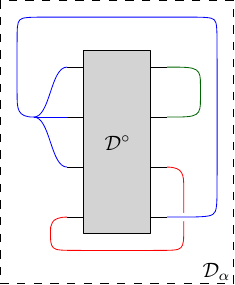} \qquad\qquad \includegraphics[align=c]{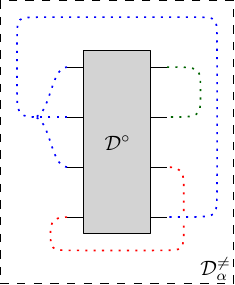}
    \caption{Left: a diagram $\mathcal D$ containing $m=2n=8$ $s$-boxes. Center: the associated diagram $\mathcal D_\alpha$ corresponding to the even partition $\alpha = (\textcolor{blue}{1 2 3 8} | \textcolor{red}{4 7} | \textcolor{darkgreen}{5 6}  )$. Right: the injective diagram $\mathcal D_{\alpha}^{\neq}$ corresponding to the same even partition. The indices corresponding to the dotted lines must be different.}
    \label{fig:example-E-s}
\end{figure}

We start with the analogue of Lemma~\ref{lemma:n neq m = 0}, which will eliminate the trivial cases of diagrams having an odd number of $s$-boxes.

\begin{lemma}
Let $\mathcal{D}$ be a diagram containing an odd number of $s$-boxes. Then, $\mathbb{E}_s \mathcal{D} = 0$.
\end{lemma}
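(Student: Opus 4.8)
The plan is to mimic exactly the symmetry argument used in Lemma \ref{lemma:n neq m = 0}, replacing the global phase rotation $u \mapsto \omega u$ with a global sign flip $s \mapsto -s$. The key observation is that the distribution of the random sign vector $s$ from Definition \ref{def:rv-s} is invariant under multiplication by $-1$: since each coordinate $s_k$ is uniform on $\{\pm 1\}$, we have $-s_k \stackrel{dist}{=} s_k$, and by independence $-s \stackrel{dist}{=} s$ as vectors.

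First I would write the diagram $\mathcal{D}$ containing $m$ $s$-boxes in coordinates, in analogy with Eq.~\eqref{eq:D-vs-Dcirc-u}. Pulling the $s$-boxes out and contracting the internal diagram $\mathcal{D}^\circ$ against basis vectors indexed by a multi-index $i : [m] \to [d]$ gives
\begin{equation*}
\mathcal{D} = \sum_{i : [m] \to [d]} s_{i(1)} s_{i(2)} \cdots s_{i(m)} \, \mathcal{D}^\circ_i,
\end{equation*}
so that $\mathcal{D}$ is homogeneous of degree $m$ in the entries of $s$. Now I would apply the distributional invariance: since $-s \stackrel{dist}{=} s$, we have $\mathbb{E}_s \mathcal{D}(s) = \mathbb{E}_s \mathcal{D}(-s)$. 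Replacing each $s_{i(k)}$ by $-s_{i(k)}$ multiplies the monomial $s_{i(1)} \cdots s_{i(m)}$ by $(-1)^m$, and hence $\mathcal{D}(-s) = (-1)^m \mathcal{D}(s)$. Therefore
\begin{equation*}
\mathbb{E}_s \mathcal{D} = (-1)^m \, \mathbb{E}_s \mathcal{D}.
\end{equation*}
When $m$ is odd, $(-1)^m = -1$, which forces $\mathbb{E}_s \mathcal{D} = -\mathbb{E}_s \mathcal{D}$, and hence $\mathbb{E}_s \mathcal{D} = 0$, as claimed.

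There is essentially no obstacle here: the argument is a one-line symmetry/parity computation once the invariance $-s \stackrel{dist}{=} s$ is recorded. The only point requiring a word of care is the justification that $\mathcal{D}(-s) = (-1)^m \mathcal{D}(s)$ holds \emph{pathwise}, not merely in expectation---this follows because every surviving monomial in the expansion of $\mathcal{D}$ has total degree exactly $m$ in the sign variables, so the global factor $(-1)^m$ can be pulled out uniformly. Alternatively, one can phrase it purely at the level of expectations by noting $\mathbb{E}_s[s_{i(1)} \cdots s_{i(m)}] = 0$ whenever some index value appears an odd number of times among $i(1), \ldots, i(m)$, which for odd $m$ must always occur by a parity count; but the global sign-flip argument is cleaner and more directly parallels the proof of Lemma \ref{lemma:n neq m = 0}.
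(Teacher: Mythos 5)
Your proof is correct and follows exactly the same route as the paper: the paper's proof also replaces $s$ by $-s$ (which has the same distribution) to obtain $\mathbb{E}_s \mathcal{D} = -\mathbb{E}_s \mathcal{D}$, mirroring Lemma \ref{lemma:n neq m = 0}. Your write-up merely spells out the coordinate expansion and the pathwise homogeneity justification that the paper leaves implicit.
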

\begin{proof}
    The proof is similar to that of Lemma \ref{lemma:n neq m = 0}: replacing the random vector $s$ by $-s$ (which has the same distribution), we obtain $\mathbb{E}_s \mathcal{D} = -\mathbb{E}_s \mathcal{D}$, proving the claim.
\end{proof}

Given the lemma above, we focus our attention on diagrams having an even number ($2n$) of $s$-boxes. The main result of this section, Theorem \ref{theorem:E-s}, expresses the expectation value of such diagrams $\mathcal D$ as weighted sums of modified diagrams $\mathcal D_\alpha$, indexed over even partitions $\alpha \in \Pi^{(2)}_{2n}$. We introduce the modified diagrams $\mathcal D_\alpha$ (resp.~the injective diagrams $\mathcal D_\alpha^{\neq}$) in a similar way as in Definition \ref{def:D-UBP} (resp.~Definition \ref{def:D-UBP-injective}): we erase the $s$-boxes and then use the even partition $\alpha$ to connect the dangling edges; in the injective case, we require that indices corresponding to different blocks of $\alpha$ should be different, see the center and right panels in Figure \ref{fig:example-E-s} for illustration. The combinatorics and the relation between the injective and non-injective diagrams are governed by the poset of even partitions, with the corresponding M\"obius and combinatorial functions being $\mu_\Pi$ and $\Cf_\Pi$ repsectively, see Section \ref{sec:combinatorics}. We state without proof the two technical combinatorial lemmas which mirror Lemmas \ref{lem:Eu-sum-injectve-diagrams} and \ref{lem:Dabf-vs-Dabf-injective} from the previous section.

\begin{lemma}
Let $\mathcal D$ be a diagram containing $2n$ $s$-boxes. Then,
\begin{equation}
\mathbb{E}_s \mathcal{D} = \sum_{\alpha\in \Pi^{(2)}_{2n}} \mathcal{D}^{\neq}_{\alpha}.
\end{equation}
\end{lemma}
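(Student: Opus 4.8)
The plan is to mirror the proof of Lemma~\ref{lem:Eu-sum-injectve-diagrams} almost verbatim, with even partitions of $[2n]$ playing the role that uniform block permutations played in the complex case. First I would write the diagram $\mathcal D$ in coordinates by pulling out all $2n$ $s$-boxes, so that $\mathcal D^\circ$ is the internal $2n$-tensor and
\begin{equation*}
\mathcal D = \sum_{i:[2n]\to[d]} s_{i(1)} s_{i(2)} \cdots s_{i(2n)} \, \mathcal D^\circ_i .
\end{equation*}
Taking expectations and using linearity gives $\mathbb E_s \mathcal D = \sum_{i} \mathbb E_s[s_{i(1)}\cdots s_{i(2n)}] \, \mathcal D^\circ_i$, which reduces the problem to evaluating the single moment $\mathbb E_s[s_{i(1)}\cdots s_{i(2n)}]$.

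The key computation is that, since the $s_k$ are independent with $\mathbb E s_k = 0$ and $s_k^2 = 1$, the moment equals $1$ exactly when every value $v\in[d]$ appearing among the indices $i(1),\dots,i(2n)$ appears an \emph{even} number of times, and equals $0$ otherwise. I would phrase this in terms of the level sets of the multi-index $i$: grouping the positions in $[2n]$ by the value they are sent to partitions $[2n]$ into blocks, and the moment is nonzero precisely when each such block has even cardinality, i.e.\ when the associated partition lies in $\Pi^{(2)}_{2n}$. Thus the surviving terms are indexed by even partitions, and collecting the multi-indices $i$ whose level-set partition is \emph{exactly} a given even partition $\alpha$ is precisely the injectivity condition defining $\ker^{\neq}(\alpha)$ (different blocks of $\alpha$ forced to distinct values, same block forced to a common value). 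This gives the grouping
\begin{equation*}
\mathbb E_s \mathcal D = \sum_{\substack{i:[2n]\to[d]\\ i \text{ has even level sets}}} \mathcal D^\circ_i = \sum_{\alpha\in\Pi^{(2)}_{2n}} \ \sum_{i\in\ker^{\neq}(\alpha)} \mathcal D^\circ_i = \sum_{\alpha\in\Pi^{(2)}_{2n}} \mathcal D^{\neq}_\alpha ,
\end{equation*}
which is the claim.

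The only subtlety, and the main thing to get right, is the bookkeeping that avoids double counting: one must check that the sets $\ker^{\neq}(\alpha)$ over $\alpha\in\Pi^{(2)}_{2n}$ form a genuine \emph{partition} of the set of multi-indices with even level sets. This holds because each such multi-index $i$ determines a unique partition of $[2n]$ (namely its level-set partition, where $x,y$ lie in the same block iff $i(x)=i(y)$), and this partition automatically has even blocks and is the unique $\alpha$ with $i\in\ker^{\neq}(\alpha)$. The argument is the exact real analogue of the complex one, the simplification being that here a single partition $\alpha$ replaces the triple $(\alpha,\beta,f)$, since there is no conjugate box and hence no left/right matching to track; consequently I would present it briefly, as the excerpt signals. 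The subsequent inversion step (the analogue of Lemma~\ref{lem:Dabf-vs-Dabf-injective}, giving the relation to the non-injective diagrams $\mathcal D_\alpha$ via $\mu_\Pi$ and ultimately the weights $\Cf_\Pi$) is then obtained by M\"obius inversion on the poset $\Pi^{(2)}_{2n}$ exactly as before.
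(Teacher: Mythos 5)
Your proposal is correct and is exactly the argument the paper intends: the paper states this lemma without proof, noting only that it mirrors Lemma~\ref{lem:Eu-sum-injectve-diagrams}, and your adaptation (even level sets replacing the matching condition, with the level-set partition of the multi-index providing the unique even partition $\alpha$ with $i\in\ker^{\neq}(\alpha)$, so no double counting occurs) is the faithful real-case analogue of that proof. Nothing is missing.
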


\begin{lemma}
Given a diagram $\mathcal{D}$ and an even partition $\alpha \in \Pi_{2n}^{(2)}$, the following relations hold:
\begin{align}
    \mathcal{D}_\alpha &= \sum_{\alpha' \geq \alpha} \mathcal{D}^{\neq}_{\alpha'}\\
    \mathcal{D}^{\neq}_\alpha &=\sum_{\alpha' \geq \alpha} \mathcal{D}_{\alpha'} \mu_\Pi(\alpha,\alpha').
\end{align}
\end{lemma}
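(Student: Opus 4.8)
The plan is to mirror the proof of Lemma \ref{lem:Dabf-vs-Dabf-injective}, replacing the poset of uniform block permutations by the poset $\Pi^{(2)}_{2n}$ of even partitions. First I would record the algebraic descriptions of the two families of diagrams: writing $\mathcal{D}^\circ$ for the internal diagram obtained by erasing the $2n$ $s$-boxes, we have $\mathcal{D}_\alpha = \sum_{i \in \ker(\alpha)} \mathcal{D}^\circ_i$ and $\mathcal{D}^{\neq}_\alpha = \sum_{i \in \ker^{\neq}(\alpha)} \mathcal{D}^\circ_i$, where $\ker(\alpha)$ (resp.~$\ker^{\neq}(\alpha)$) collects the multi-indices $i : [2n] \to [d]$ that are constant on each block of $\alpha$ (resp.~constant on each block while taking distinct values on distinct blocks).

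The first equation then reduces to a purely combinatorial statement about these index sets. To each $i \in \ker(\alpha)$ I would associate the partition $\pi_i \in \Pi_{2n}$ whose blocks are the level sets of $i$. Since $i$ is constant on the blocks of $\alpha$, each block of $\alpha$ is contained in a block of $\pi_i$, so $\alpha \leq \pi_i$; and $i$ lies in $\ker^{\neq}(\pi_i)$ by construction, while $\pi_i$ is the unique partition with this property. Crucially, because $\alpha$ is even and $\pi_i \geq \alpha$, every block of $\pi_i$ is a union of (even) blocks of $\alpha$ and is therefore itself even, so $\pi_i \in \Pi^{(2)}_{2n}$. This exhibits the sets $\ker^{\neq}(\alpha')$, as $\alpha'$ ranges over even partitions with $\alpha' \geq \alpha$, as a partition of $\ker(\alpha)$; summing $\mathcal{D}^\circ_i$ over the blocks of this partition yields the first identity $\mathcal{D}_\alpha = \sum_{\alpha' \geq \alpha} \mathcal{D}^{\neq}_{\alpha'}$.

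For the second equation I would apply the M\"obius inversion formula of Theorem \ref{thm:Mobius-inversion} to the finite poset $\Pi^{(2)}_{2n}$, taking $g(\alpha) = \mathcal{D}_\alpha$ and $f(\alpha) = \mathcal{D}^{\neq}_\alpha$ valued in the abelian group of tensors attached to the dangling edges of $\mathcal{D}^\circ$. The first identity is precisely the hypothesis $g(\alpha) = \sum_{\alpha' \geq \alpha} f(\alpha')$, so inversion gives $f(\alpha) = \sum_{\alpha' \geq \alpha} \mu^{(2)}_{\Pi}(\alpha, \alpha')\, g(\alpha')$. Here I would invoke the observation recorded earlier in Section \ref{sec:combinatorics} that, since every interval $[\alpha, \alpha']$ between even partitions already lies inside $\Pi^{(2)}_{2n}$, the M\"obius function $\mu^{(2)}_{\Pi}$ of $\Pi^{(2)}_{2n}$ is the restriction of $\mu_\Pi$; this is what licenses writing $\mu_\Pi(\alpha, \alpha')$ in the stated conclusion.

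The only genuinely delicate point is the parity check in the second paragraph, namely that the level-set partition $\pi_i$ of any index matching an even $\alpha$ is again even, so that the decomposition of $\ker(\alpha)$ is indexed by the subposet $\{\alpha' \geq \alpha\}$ of $\Pi^{(2)}_{2n}$ and not of the full lattice $\Pi_{2n}$; this is what keeps us inside the even-partition poset throughout. Everything else is a faithful transcription of the complex-case argument, and the M\"obius inversion step is formal once the first identity and the restriction property of $\mu_\Pi$ are in hand.
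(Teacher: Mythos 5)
Your proposal is correct and follows exactly the route the paper intends: it transcribes the proof of Lemma \ref{lem:Dabf-vs-Dabf-injective} to the poset $\Pi^{(2)}_{2n}$ (the paper explicitly leaves this mirroring to the reader), decomposing $\ker(\alpha)$ into the sets $\ker^{\neq}(\alpha')$ for even $\alpha' \geq \alpha$ and then applying M\"obius inversion via Theorem \ref{thm:Mobius-inversion}. Your explicit check that the level-set partition of any index in $\ker(\alpha)$ is again even, together with the restriction property of $\mu_\Pi$ noted in Section \ref{sec:combinatorics}, is precisely the detail needed to stay inside the even-partition poset, so the argument is complete.
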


We state now the main result, leaving the proof to the reader. 

\begin{theorem} \label{theorem:E-s}
Let $\mathcal{D}$ be a diagram containing $2n$ $s$-boxes, where $s \in \{\pm 1\}^d$ is a random vector consisting of i.i.d.~random uniform signs. Then, the following holds true
\begin{equation*}
\mathbb{E}_s \mathcal{D} = \sum_{\alpha \in \Pi^{(2)}_{2n}} \mathcal{D}_\alpha \Cf_\Pi(\alpha),
\end{equation*}
where the combinatorial coefficients $\Cf_{\Pi}$ were introduced in Definition \ref{def:Cfpi}. In other words, the expectation value of $\mathcal{D}$ is a sum of pairings of $\mathcal{D}$ according to all possible even partitions, weighted by the combinatorial factors $\Cf_\Pi$. 
\end{theorem}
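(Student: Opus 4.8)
The plan is to mirror the proof of Theorem~\ref{theorem:E-u} exactly, substituting the poset of even partitions $\Pi^{(2)}_{2n}$ for the poset of uniform block permutations $\mathcal{UBP}_n$ at each stage. The authors have deliberately stated the two combinatorial lemmas immediately preceding this theorem as the real analogues of Lemmas~\ref{lem:Eu-sum-injectve-diagrams} and \ref{lem:Dabf-vs-Dabf-injective}, so the entire argument reduces to chaining them together and reindexing the resulting double sum. The only genuinely probabilistic input needed is the computation of the relevant moment $\mathbb{E}_s[s_{i(1)} s_{i(2)} \cdots s_{i(2n)}]$, which underlies the first of the two stated lemmas.

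First I would recall why the moment calculation forces even partitions. Since the $s_k$ are independent with $\mathbb{E}[s_k] = 0$ and $s_k^2 = 1$ deterministically, the product $\mathbb{E}_s[\prod_{\ell=1}^{2n} s_{i(\ell)}]$ equals $1$ if every value appearing among the indices $i(1), \ldots, i(2n)$ occurs an \emph{even} number of times, and equals $0$ otherwise. This is precisely why the relevant pairings are recorded by \emph{even} partitions $\alpha \in \Pi^{(2)}_{2n}$: grouping the indices $\{1, \ldots, 2n\}$ by the value they take under $i$ produces a partition all of whose blocks have even size. The injective diagrams $\mathcal{D}^{\neq}_\alpha$ then partition the set of surviving (even-valued) multi-indices without double counting, which is the content of the first stated lemma.

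With that lemma in hand, I would substitute the second stated lemma, namely the M\"obius-inverted relation $\mathcal{D}^{\neq}_\alpha = \sum_{\alpha' \geq \alpha} \mathcal{D}_{\alpha'} \, \mu_\Pi(\alpha, \alpha')$, into the expression $\mathbb{E}_s \mathcal{D} = \sum_{\alpha \in \Pi^{(2)}_{2n}} \mathcal{D}^{\neq}_\alpha$. This yields a double sum over pairs $\alpha \leq \alpha'$ in $\Pi^{(2)}_{2n}$. Exchanging the order of summation, I would fix $\alpha'$ and collect the inner sum over all $\alpha \leq \alpha'$, giving
\begin{align*}
\mathbb{E}_s \mathcal{D} &= \sum_{\alpha \in \Pi^{(2)}_{2n}} \sum_{\alpha' \geq \alpha} \mathcal{D}_{\alpha'} \, \mu_\Pi(\alpha, \alpha') \\
&= \sum_{\alpha' \in \Pi^{(2)}_{2n}} \mathcal{D}_{\alpha'} \left[ \sum_{\alpha \leq \alpha'} \mu_\Pi(\alpha, \alpha') \right] \\
&= \sum_{\alpha' \in \Pi^{(2)}_{2n}} \mathcal{D}_{\alpha'} \, \Cf_\Pi(\alpha'),
\end{align*}
where the final equality is exactly Definition~\ref{def:Cfpi} of the combinatorial function $\Cf_\Pi$ as the sum of $\mu_\Pi$ over all even refinements. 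Renaming $\alpha'$ back to $\alpha$ gives the claimed formula.

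I do not anticipate a serious obstacle, since the argument is formally identical to the complex case and the two enabling lemmas are granted. The one point demanding mild care is ensuring that the interval sums stay \emph{inside} $\Pi^{(2)}_{2n}$ rather than drifting into the full lattice $\Pi_{2n}$: the M\"obius function used is the restriction $\mu^{(2)}_\Pi = \mu_\Pi|_{\Pi^{(2)}_{2n}}$, and one must invoke the observation made earlier in Section~\ref{sec:combinatorics} that any interval $[\alpha, \beta]$ with $\alpha, \beta$ having even blocks already lies in $\Pi^{(2)}_{2n}$, so the M\"obius inversion of Theorem~\ref{thm:Mobius-inversion} applies verbatim on the sub-poset. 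Given that the theorem is explicitly stated ``leaving the proof to the reader,'' the expected write-up is simply this reindexing of the double sum, with the even-moment computation flagged as the reason even partitions govern the combinatorics.
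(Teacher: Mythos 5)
Your proposal is correct and is precisely the argument the paper intends: the paper explicitly leaves the proof to the reader as a mirror of Theorem~\ref{theorem:E-u}, and your chaining of the two stated lemmas followed by the exchange of summation and the identification of $\Cf_\Pi$ via Definition~\ref{def:Cfpi} reproduces that proof verbatim in the even-partition setting. Your added justifications---the even-moment computation behind the first lemma and the observation that intervals of even partitions stay inside $\Pi^{(2)}_{2n}$ so the restricted M\"obius function suffices---are exactly the points the paper relies on from Section~\ref{sec:combinatorics}.
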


\begin{remark}
If a diagram $\mathcal{D}$ contains several independent vectors $s_1, s_2, \ldots, s_p$--- each distributed according to Definition \ref{def:rv-s}--- then the expectation $\mathbb{E}_{s_1}\mathbb{E}_{s_2}\ldots \mathbb{E}_{s_p} \mathcal{D}$ can be computed by simultaneous applications of Theorem \ref{theorem:E-s} for each independent vector. More precisely, if $\mathcal{D}$ contains $2n_i$ $s_i$-boxes for $i=1,2,\ldots ,p$. Then
\begin{align*}
    \mathbb{E}_{s_1}\mathbb{E}_{s_2}\ldots \mathbb{E}_{s_p} \mathcal{D} &= 
    \mathbb{E}_{s_1}\left(\mathbb{E}_{s_2}\left(\ldots \left(\mathbb{E}_{s_p}\mathcal{D}\right)\ldots\right)\right) \\
    &= \mathlarger{\mathlarger{\sum}}_{ \left\{  \substack{ \alpha_1 \in \Pi^{(2)}_{2n_1} \\ . \\ . \\ \alpha_p \in \Pi^{(2)}_{2n_p} }    \right\}} \left\{\prod_{i=1}^p \Cf_\Pi(\alpha_i) \right\} \mathcal{D}_{\alpha_1, \alpha_2, \ldots, \alpha_p }
\end{align*}
where the diagrams $\mathcal{D}_{\alpha_1, \alpha_2, \ldots, \alpha_p }$ are constructed as is explained in Figure \ref{fig:example-E-s}, with each even partition $\alpha_i$ in $\Pi^{(2)}_{2n_i}$ deciding the pairings for the corresponding set of $s_i$-boxes, for $i=1,2,\ldots ,p$.
\end{remark}

\bigskip

We now provide explicit graphical expansions which are obtained after applying Theorem \ref{theorem:E-s} for the cases when $n=1$ and $2$.

\begin{example}\label{eg:real-n-1}
In the case $n=1$, there is a single even partition having coefficient $\Cf_\Pi=1$, and we obtain: 
\smallskip
\begin{center}
\includegraphics{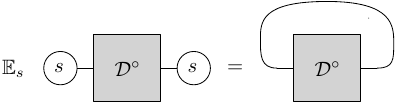}
\end{center}
\smallskip
\end{example}

\begin{example}\label{eg:real-n-2}
In the case $n=2$ there are 4 diagrams, as follows:
\smallskip
\begin{center}
\includegraphics{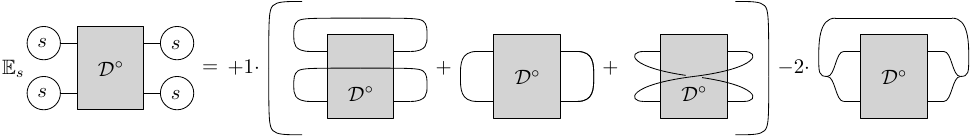}
\end{center}
\smallskip
The first three diagrams correspond to even partitions having two blocks and have coefficient $\Cf_\Pi = 1$, while the last one corresponds to the maximal partition $\mathbb 1_4$ and has coefficient $\Cf_\Pi = -2$.

\end{example}

\bigskip

As a concrete application of the graphical integration formula from Theorem \ref{theorem:E-s}, one can consider the computation of the average $\mathbb E_O OXO^*$, where $O$ is a random diagonal orthogonal matrix, having i.i.d.~signs on the diagonal. As it was the case in Section \ref{sec:graphical-integration-C}, the diagonal random matrix $O$ can be represented in terms of a sign vector $s$, and the final result is identical (see Figure \ref{fig:E-UXUstar}): $\mathbb E_O OXO^* = \operatorname{diag}(X)$.

\section{Local diagonal unitary invariant matrices} \label{sec:LDUI}
In the coming few sections, we explore certain applications of the graphical integration formulas introduced in Theorems \ref{theorem:E-u} and \ref{theorem:E-s}, which are relevant to the theory of Quantum Information. 
We focus in this section and the following one, on three particular families of bipartite matrices introduced in \cite{chruscinski2006class,johnston2019pairwise}: the \emph{local diagonal unitary invariant} (LDUI) matrices, the \emph{conjugate local diagonal unitary invariant} (CLDUI) matrices, and the \emph{local diagonal orthogonal invariant} (LDOI) matrices. These matrices have been considered as potential counter-examples to the \emph{absolutely separable vs.~absolutely PPT conjecture}, see \cite{kus2001geometry,arunachalam2015absolute,IQOQIseparability}. Our goal in this section is to recast some of the results from \cite{johnston2019pairwise} regarding CLDUI matrices in the graphical language we have introduced, as well as to develop the theory of LDUI matrices. The more general case of LDOI matrices will form the subject matter of the next section.

Consider a matrix $X \in \mathcal{M}_d(\mathbb{C}) \otimes \mathcal{M}_d(\mathbb{C})$. We are interested in calculating the expectations $\mathbb{E}_U [(U \otimes  U^*) X (U^* \otimes U)]$ and $\mathbb{E}_U [(U \otimes  U) X (U^* \otimes U^*)]$, where $U \in \mathcal{U}_d$ is a random diagonal unitary matrix $U= \text{diag}(u)$, with the random phase vector $u\in \mathbb{T}^d$ distributed according to Definition \ref{def:rv-u}.

Diagrammatically, the first step is to mould the given tensors in order to make them more tractable (see the discussion following Figure \ref{fig:separate-uubar-D} in Section \ref{sec:graphical-integration-C}). This is done in Figure \ref{fig:UU-X-UU}, which immediately brings the diagrams to a familiar form (Example \ref{eg:complex-n-2}). A simple application of the $n=2$ case of Theorem \ref{theorem:E-u} then gives the desired expectations (see Figure \ref{fig:E-UU-X-UU} below).

\begin{figure}[H]
\centering
\vspace{0.3cm}
\includegraphics[scale=1]{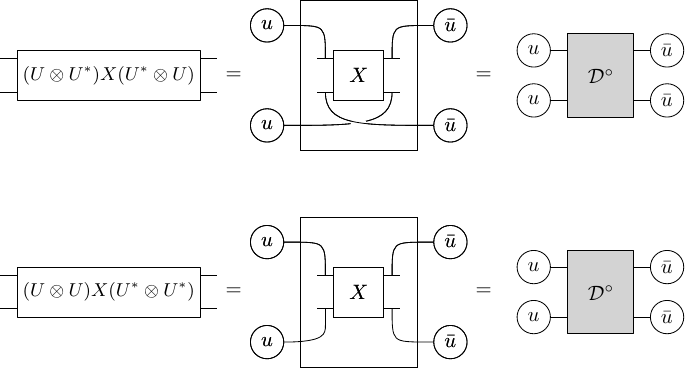}
\caption{The diagrams for $(U \otimes  U^*) X (U^* \otimes U)$ and $(U \otimes  U) X (U^* \otimes U^*)$ (Top and Bottom), reshaped in order to bring them to the standard form.}
\label{fig:UU-X-UU}
\end{figure}

\begin{figure}[H]
    \centering
    \includegraphics[scale=0.9]{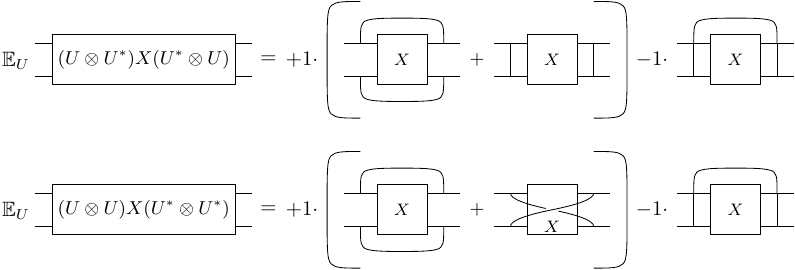}
    \caption{Expectation value of the diagrams $(U \otimes U^*) X (U^* \otimes U)$ and $(U \otimes U)X(U^* \otimes U^*)$ (Top and Bottom) obtained through a simple application of Theorem \ref{theorem:E-u} for the n=2 case. } 
    \label{fig:E-UU-X-UU}
\end{figure}

We further massage these expressions by introducing the following matrices.

\begin{definition} \label{def:X-ABC}
Given $X \in \mathcal{M}_d(\mathbb{C}) \otimes \mathcal{M}_d(\mathbb{C})$, we define the matrices $A,B,C \in \mathcal{M}_d(\mathbb{C})$ as in Figure~\ref{fig:X-ABC}.

\begin{figure}[hbt!] 
\centering
\includegraphics[scale=1]{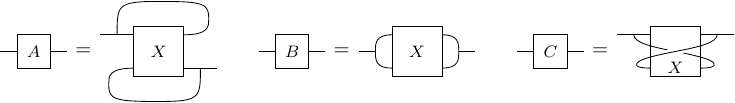}
\caption{The matrices $A,B,C \in \mathcal{M}_d(\mathbb{C})$ corresponding to the bipartite matrix $X \in \mathcal{M}_d(\mathbb{C}) \otimes \mathcal{M}_d(\mathbb{C})$.}
\label{fig:X-ABC}
\end{figure} 

\end{definition} 

Using the matrices $A,B$ and $C$, Figure \ref{fig:E-UU-X-UU} can be redrawn in the form of Figure \ref{fig:E-UU-X-UU-ABC}.

\begin{figure}[hbt!]
    \centering
    \includegraphics{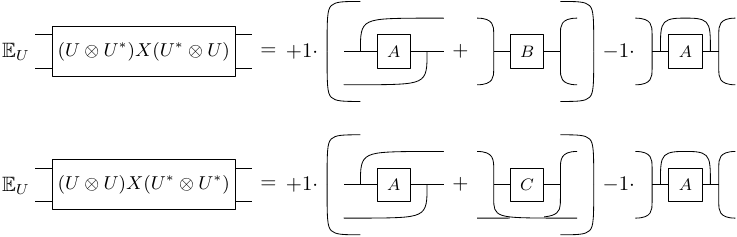}
    \caption{Expectation value of the diagrams $(U \otimes U^*) X (U^* \otimes U)$ and $(U \otimes U)X(U^* \otimes U^*)$ (Top and Bottom) expressed in terms of the associated $A,B,C$ matrices introduced in Definition \ref{def:X-ABC} and Figuure~\ref{fig:X-ABC}.}
    \label{fig:E-UU-X-UU-ABC}
\end{figure}

\begin{remark}\label{remark:X-ABC-diag}
From Definition \ref{def:X-ABC}, it is evident that the diagonal entries of $A,B$ and $C$ are equal. Thus, the matrix $A$ in the last term of the expectation value expressions in Figure \ref{fig:E-UU-X-UU-ABC} can just as well be replaced by $B$ or $C$.
\end{remark}

Our next task is to investigate the properties of bipartite matrices $X$  which stay invariant under the operations $X \mapsto \mathbb{E}_U [(U \otimes  U^*) X (U^* \otimes U)] $ and $X \mapsto \mathbb{E}_U [(U \otimes  U) X (U^* \otimes U^*)]$. We begin with the relevant definition, see \cite[Definition 5.1]{johnston2019pairwise}.

\begin{definition}[LDUI/CLDUI matrices]\label{def:LDUI/CLDUI}
A matrix $X \in \mathcal{M}_d(\mathbb{C}) \otimes \mathcal{M}_d(\mathbb{C})$ is said to be \emph{Local diagonal unitary invariant (LDUI)} (resp.~\emph{Conjugate local diagonal unitary invariant (CLDUI)}) if $(U \otimes  U) X (U^* \otimes U^*) = X$ (resp.~$(U \otimes  U^*) X (U^* \otimes U) = X$) for all diagonal unitary matrices $U\in \mathcal{U}_d$.
\end{definition}

\begin{proposition} \label{prop:LDUI/CLDUI-AC/AB}
A matrix $X \in \mathcal{M}_d(\mathbb{C}) \otimes \mathcal{M}_d(\mathbb{C})$ is LDUI (resp. CLDUI) if and only if the mappings $X \mapsto \mathbb{E}_U [(U \otimes  U) X (U^* \otimes U^*)]$ (resp. $X \mapsto \mathbb{E}_U [(U \otimes  U^*) X (U^* \otimes U)]$ leave $X$ invariant. Hence, the set of bipartite LDUI (resp.~CLDUI) matrices is in bijection with the set of matrix pairs $(A,C)$ (resp.~$(A,B)$) in $\mathcal{M}_d(\mathbb{C}) \times \mathcal{M}_d(\mathbb{C})$ satisfying $\operatorname{diag}(A)=\operatorname{diag}(C)$ (resp.~$\operatorname{diag}(A)=\operatorname{diag}(B)$).
\end{proposition}

\begin{proof} 
The ``only if" direction is trivial to show, since if $X \in \mathcal{M}_d(\mathbb{C}) \otimes \mathcal{M}_d(\mathbb{C})$ is LDUI (resp. CLDUI), then it is clear from Definition~\ref{def:LDUI/CLDUI} that it stays invariant under the given mappings. Conversely, if we assume the invariance of $X$ under the given mappings, then for any fixed diagonal unitary matrix $V\in \mathcal{U}_d$, we have 
\begin{align}
    (V \otimes  V) X (V^* \otimes V^*) &= \mathbb{E}_U [(UV \otimes  UV) X ( V^*U^* \otimes V^*U^*)] = X \nonumber \\
    (V \otimes  V^*) X (V^* \otimes V) &= \mathbb{E}_U [(UV \otimes  V^*U^*) X ( V^*U^* \otimes UV)] = X
\end{align}
where the equalities follow from the distributional equivalence $U \stackrel{dist}{=} UV$. It is now evident from Figure~\ref{fig:E-UU-X-UU-ABC} and Remark~\ref{remark:X-ABC-diag} that to each LDUI (resp. CLDUI) matrix $X$ is associated a unique matrix pair $(A,C)$ (resp. $(A,B)$) such that $\operatorname{diag}(A) = \operatorname{diag}(C)$ (resp. $\operatorname{diag}(A) = \operatorname{diag}(B)$).
\end{proof}

Proposition \ref{prop:LDUI/CLDUI-AC/AB} allows us to interpret the operation $X \mapsto \mathbb{E}_U [(U \otimes  U) X (U^* \otimes U^*)]$ (resp. $X \mapsto \mathbb{E}_U [(U \otimes  U^*) X (U^* \otimes U)])$ as an orthogonal projection from the $d^4$-dimensional $\mathbb{C}$-Hilbert space $\mathcal{M}_d(\mathbb{C}) \otimes \mathcal{M}_d(\mathbb{C})$ onto the smaller $(2d^2 - d)$-dimensional subspace of LDUI (resp. CLDUI) matrices.

We now investigate the conditions which ensure that a given LDUI/CLDUI matrix $X$ is \textit{separable}, i.e., there exists a family of vectors $\{ v_k, w_k \}_{k \in I} \subseteq \mathbb{C}^d$ for a finite index set $I$, such that 
\begin{equation}
    X = \sum_k v_k v^*_k \otimes w_k w^*_k
\end{equation}
\begin{figure}[hbt!]
    \centering
    \includegraphics{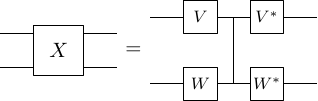}
    \caption{Separability condition for a matrix $X \in \mathcal{M}_d(\mathbb{C}) \otimes \mathcal{M}_d(\mathbb{C})$.}
    \label{fig:X-sep}
\end{figure}

Graphically, this decomposition is equivalent to the diagram given in Figure \ref{fig:X-sep}, where $V$ and $W$ are matrices in $\mathcal{M}_{d \times \vert I \vert}(\mathbb{C})$ with columns given by the vectors $\{ {v_k} \}_{k\in I}$ and $\{ {w_k} \}_{k \in I}$ respectively. It is important to note that the size of the index set $I$ (equivalently, the dimension of vector space corresponding to the 4-valent wire in the center of Figure \ref{fig:X-sep}) corresponds to the \emph{length of the PCP decomposition} from \cite[Section 5.1]{johnston2019pairwise}. Clearly, if $X$ admits the given decomposition, it is trivial to write down the diagrams for the corresponding matrices $A,B$ and $C$ from Definition \ref{def:X-ABC} (see Figure \ref{fig:X-ABC-sep}). Conversely, if there exist matrices $V,W \in \mathcal{M}_{d \times \vert I \vert}(\mathbb{C})$ such that the matrix pair $(A,C)$ (resp.~$(A,B)$) can be decomposed as in Figure~\ref{fig:X-ABC-sep}, then $X$ in $\mathcal{M}_d(\mathbb{C}) \otimes \mathcal{M}_d(\mathbb{C})$ given by the diagram in Figure \ref{fig:X-sep} is clearly separable (though not necessarily LDUI/CLDUI). The LDUI (resp.~CLDUI) matrix associated with the pair $(A,C)$ (resp.~$(A,B)$) is then obtained from $X$ via the operation $X \mapsto \mathbb{E}_U [(U \otimes  U) X (U^* \otimes U^*)]$ (resp.~$X\mapsto\mathbb{E}_U [(U \otimes  U^*) X (U^* \otimes U)])$ which preserves separability. We state the above discussion more precisely below, using the terminology introduced in \cite[Definition 3.1]{johnston2019pairwise}, generalizing the classical notion of \emph{completely positive matrices} \cite{abraham2003completely} (not to be confused with completely positive maps from operator algebra).


\begin{figure}[hbt!]
    \centering
    \includegraphics{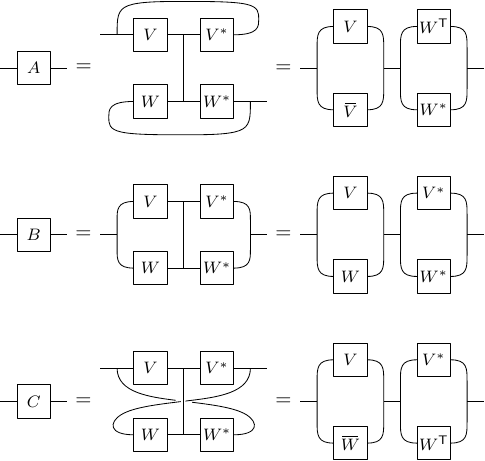}
    \caption{Separability condition for a matrix $X \in \mathcal{M}_d(\mathbb{C}) \otimes \mathcal{M}_d(\mathbb{C})$, expressed in terms of the associated matrices $A,B$ and $C \in \mathcal{M}_d(\mathbb{C})$ }
    \label{fig:X-ABC-sep}
\end{figure}

\begin{definition}[Pairwise completely positive matrices] \label{def:pcp}
A pair $(Z,Z')$ in $\mathcal{M}_d(\mathbb{C}) \times \mathcal{M}_d(\mathbb{C})$ is said to be \emph{pairwise completely positive} if there exist matrices $V,W \in \mathcal{M}_{d,d'}(\mathbb{C})$ (for $d' \in \mathbb{N}$) such that
\begin{equation}
    Z = (V \odot \overbar{V})(W \odot \overbar{W})^* \qquad Z' = (V \odot W) (V \odot W)^*
\end{equation}
\end{definition}

\begin{lemma}[Separability of LDUI/CLDUI matrices]\label{lemma:LDUI/CLDUI-sep}
An LDUI (resp.~CLDUI) matrix $X$ in $\mathcal{M}_d(\mathbb{C}) \otimes \mathcal{M}_d(\mathbb{C})$ is separable if and only if the associated matrix pair $(A,C)$ (resp.~$(A,B)$) in $\mathcal{M}_d(\mathbb{C}) \times \mathcal{M}_d(\mathbb{C})$ is pairwise completely positive.
\end{lemma}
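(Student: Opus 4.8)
The plan is to assemble the lemma from three facts already at hand: the matrix identities of Eq.~\eqref{eq:X-ABC-sep}, the bijection between (C)LDUI matrices and their defining pairs (Remark~\ref{remark:LDUI/CLDUI-AC/AB}), and the fact that the twirling maps $X \mapsto \mathbb{E}_U[(U\otimes U^*)X(U^*\otimes U)]$ and $X \mapsto \mathbb{E}_U[(U\otimes U)X(U^*\otimes U^*)]$ preserve separability. I would treat the CLDUI case first, where the pair $(A,B)$ matches Definition~\ref{def:pcp} verbatim, and then transfer the conclusion to the LDUI case by conjugating the second witness.

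For the ``only if'' implication in the CLDUI case, I would start from a separable CLDUI matrix $X=\sum_k v_kv_k^*\otimes w_kw_k^*$ and collect the $v_k$ (resp.~$w_k$) into the columns of a matrix $V$ (resp.~$W$), as in Figure~\ref{fig:X-sep}. Reading off the associated matrices of Definition~\ref{def:X-ABC} then gives $A=(V\odot\overbar V)(W\odot\overbar W)^*$ and $B=(V\odot W)(V\odot W)^*$ by Eq.~\eqref{eq:X-ABC-sep}, which is exactly the statement that $(A,B)$ is pairwise completely positive. Note that this direction uses only separability of $X$, not its invariance.

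For the ``if'' implication, suppose $(A,B)$ is PCP with witnesses $V,W$. I would form the separable matrix $\tilde X=\sum_k v_kv_k^*\otimes w_kw_k^*$ built from the columns of $V,W$; by Eq.~\eqref{eq:X-ABC-sep} its associated pair is again $(A,B)$. Applying the CLDUI twirl $P:=\mathbb{E}_U[(U\otimes U^*)(\cdot)(U^*\otimes U)]$ produces a CLDUI matrix $P(\tilde X)$ that is still separable, because each conjugation $(U\otimes U^*)(\cdot)(U^*\otimes U)=(U\otimes U^*)(\cdot)(U\otimes U^*)^*$ is by a product unitary and hence separability-preserving, and the (closed, convex) cone of separable matrices is stable under averaging. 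Since $P$ fixes the defining pair (Figure~\ref{fig:E-UU-X-UU-ABC}), $P(\tilde X)$ and $X$ are CLDUI matrices attached to the same pair $(A,B)$, so the bijection of Remark~\ref{remark:LDUI/CLDUI-AC/AB} forces $P(\tilde X)=X$; thus $X$ is separable.

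The LDUI case with pair $(A,C)$ follows by the identical scheme once I note that Eq.~\eqref{eq:X-ABC-sep} yields $C=(V\odot\overbar W)(V\odot\overbar W)^*$ for a separable $X$; since $\overbar W\odot W=W\odot\overbar W$ leaves $A$ untouched, replacing $W$ by $\overbar W$ converts the PCP requirement $C=(V\odot W)(V\odot W)^*$ into the separability witness and back. The step I expect to be the crux is this reverse implication, and within it the twin assertions that the twirl $P$ preserves separability and fixes the pair $(A,B)$ (resp.~$(A,C)$); granting these, the bijection supplies the reconstruction $P(\tilde X)=X$ for free.
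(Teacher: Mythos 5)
Your proposal is correct and follows essentially the same route as the paper: the forward direction reads off the PCP witnesses from Eq.~\eqref{eq:X-ABC-sep}, and the reverse direction builds the separable matrix of Figure~\ref{fig:X-sep} from the witnesses and applies the separability-preserving twirl to recover the invariant matrix. You merely make explicit two points the paper leaves implicit---that the twirl output equals $X$ via the bijection of Remark~\ref{remark:LDUI/CLDUI-AC/AB}, and that the LDUI case reduces to Definition~\ref{def:pcp} by the substitution $W \mapsto \overbar W$---which is a welcome clarification, not a different argument.
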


Next, we analyse the constraints required on matrices $A,B$ and $C$ to ensure that the corresponding LDUI/CLDUI matrix $X$ and its partial transpose $X^{\Gamma}=[\operatorname{id} \otimes \operatorname{transp}](X)$ are positive semi-definite.

\begin{lemma} \label{lemma:LDUI-psd/ppt}
Let $X$ in $\mathcal{M}_d(\mathbb{C}) \otimes \mathcal{M}_d(\mathbb{C})$ be an LDUI matrix with the associated matrix pair $(A,C)$ in $\mathcal{M}_d(\mathbb{C}) \times \mathcal{M}_d(\mathbb{C})$. Then, 
\begin{enumerate}
    \item $X$ is self-adjoint if and only if $A$ is real and $C$ is self-adjoint.
    \item $X$ is positive semi-definite if and only if $A$ is entrywise non-negative, $C$ is self-adjoint and and $A_{ij}A_{ji} \geq \vert C_{ij} \vert^2$ for $i,j \in [d]$.
    \item $X^{\Gamma}$ is positive semi-definite if and only if $A$ is entrywise non-negative and $C$ is positive semi-definite. 
\end{enumerate}
\end{lemma}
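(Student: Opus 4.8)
The plan is to make the block structure of an LDUI matrix completely explicit and then reduce all three claims to elementary $1\times 1$ and $2\times 2$ positivity conditions. First I would record the coordinate form of $X$ in terms of the pair $(A,C)$: reading off Figure~\ref{fig:E-UU-X-UU-ABC} (equivalently, identifying which entries $X_{(i,k),(j,l)}$ survive the twirl $\mathbb E_U[(U\otimes U)\,\cdot\,(U^*\otimes U^*)]$ via the $n=2$ formula of Example~\ref{eg:complex-n-2}), an LDUI matrix is
\[
X = \sum_{i,k} A_{ik}\, e_i e_i^* \otimes e_k e_k^* \;+\; \sum_{i \neq k} C_{ik}\, e_i e_k^* \otimes e_k e_i^*,
\]
where I set $C_{ii} := A_{ii}$, so the only nonzero entries are $X_{(i,k),(i,k)} = A_{ik}$ and $X_{(i,k),(k,i)} = C_{ik}$.

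The key structural observation, to be checked by applying $X$ to each basis vector, is that $X$ is block-diagonal for the orthogonal decomposition
\[
\mathbb{C}^d \otimes \mathbb{C}^d = \bigoplus_{i} \mathbb{C}(e_i \otimes e_i) \;\oplus\; \bigoplus_{\{i,k\}:\, i\neq k} \operatorname{span}(e_i \otimes e_k,\, e_k \otimes e_i):
\]
on the one-dimensional diagonal summands it acts as the scalar $A_{ii}$, and on the two-dimensional summand indexed by $\{i,k\}$ it acts, in the basis $(e_i\otimes e_k,\, e_k\otimes e_i)$, as
\[
M_{\{i,k\}} = \begin{pmatrix} A_{ik} & C_{ik} \\ C_{ki} & A_{ki} \end{pmatrix}.
\]
Then $X$ is self-adjoint (resp.~positive semi-definite) iff every block is. For (1), self-adjointness of all blocks is equivalent to $A_{ik}\in\mathbb{R}$ for all $i,k$ together with $C_{ik}=\overline{C_{ki}}$, i.e.~$A$ real and $C$ self-adjoint. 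For (2), positivity of the Hermitian block $M_{\{i,k\}}$ is equivalent to $A_{ik},A_{ki}\ge 0$ and $\det M_{\{i,k\}} = A_{ik}A_{ki} - |C_{ik}|^2 \ge 0$; collecting these over all blocks (the diagonal summands contributing $A_{ii}\ge 0$) gives exactly the stated conditions $A$ entrywise non-negative, $C$ self-adjoint, and $A_{ij}A_{ji}\ge |C_{ij}|^2$.

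For (3) the decisive step is to recompute the block structure after partial transposition. Since $[\operatorname{id}\otimes\operatorname{transp}]$ sends $e_k e_i^*$ in the second leg to $e_i e_k^*$, one gets
\[
X^\Gamma = \sum_{i,k} A_{ik}\, e_i e_i^* \otimes e_k e_k^* \;+\; \sum_{i\neq k} C_{ik}\, e_i e_k^* \otimes e_i e_k^*.
\]
The crucial effect is that the partial transpose untangles the two-dimensional blocks and instead couples together all the diagonal vectors: $X^\Gamma$ becomes block-diagonal for the coarser decomposition $\mathbb{C}^d\otimes\mathbb{C}^d = D \oplus \bigoplus_{m\neq n}\mathbb{C}(e_m\otimes e_n)$ with $D = \operatorname{span}(e_i\otimes e_i)$; it acts as the scalar $A_{mn}$ on each line $\mathbb{C}(e_m\otimes e_n)$ ($m\neq n$), and on $D$ (identified with $\mathbb{C}^d$ via $e_i\otimes e_i\mapsto e_i$) as precisely the matrix $C$, since the $(i,m)$ entry of this block is $C_{im}$, using $C_{mm}=A_{mm}$. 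Hence $X^\Gamma\ge 0$ iff $A_{mn}\ge 0$ for $m\neq n$ and $C\ge 0$; as $C\ge 0$ forces $A_{mm}=C_{mm}\ge 0$, this is $A$ entrywise non-negative and $C$ positive semi-definite.

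I expect the main obstacle to be conceptual rather than computational: one must recognize that the positivity of $X$ and of $X^\Gamma$ are governed by genuinely different block decompositions. For $X$ the relevant blocks are the $2\times 2$ matrices mixing $A$ and $C$ entrywise, whereas the partial transpose reorganizes the data so that $A$ controls the now-scalar off-diagonal lines while the \emph{entire} matrix $C$ controls the diagonal subspace $D$. Once this swap between the $A$- and $C$-data is identified, each part collapses to the standard positivity criterion for $2\times 2$ Hermitian matrices, respectively for the single matrix $C$. A minor point to handle carefully is the bookkeeping on the diagonal, where $A_{ii}=C_{ii}$ keeps the constraints in (1)--(3) consistent on the overlap.
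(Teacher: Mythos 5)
Your proposal is correct and takes essentially the same approach as the paper: the paper's computation of $v^* X v$ and $v^* X^\Gamma v$ (carried out graphically) groups terms into exactly your $2\times 2$ blocks $\left(\begin{smallmatrix} A_{ij} & C_{ij}\\ C_{ji} & A_{ji}\end{smallmatrix}\right)$ together with the form $\operatorname{diag}(v)^* C \operatorname{diag}(v)$, which is precisely your block decomposition expressed through quadratic forms rather than invariant subspaces. Both arguments then conclude with the standard positivity criteria for $2\times 2$ Hermitian blocks (for $X$) and for the single matrix $C$ acting on the diagonal subspace (for $X^\Gamma$).
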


\begin{proof}
    Part (1) is trivially proved by taking adjoint of the diagram given in the bottom panel of Figure \ref{fig:E-UU-X-UU-ABC}. For part (2), we consider an arbitrary vector $v = \sum_{i,j=1}^d v_{ij} e_i \otimes e_j \in \mathbb{C}^d \otimes \mathbb{C}^d$ and compute
    \begin{align*}
    v^* X v &= \includegraphics[align=c]{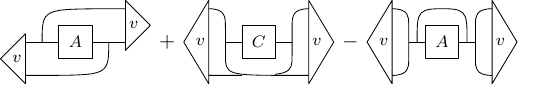} \\
        &= \sum_{i,j=1}^d A_{ij}\vert v_{ij} \vert^2 + \sum_{i,j=1}^d C_{ij}\overbar{v_{ij}}v_{ji} - \sum_{i=1}^d A_{ii} \vert v_{ii} \vert^2 \\
        &= \sum_{i=1}^d C_{ii} \vert v_{ii} \vert^2 + \sum_{i\neq j=1}^d \left(A_{ij}\vert v_{ij} \vert^2 + C_{ij}\overbar{v_{ij}}v_{ji}\right) \\
        &= \sum_{i=1}^d C_{ii} \vert v_{ii} \vert^2 + \mathlarger{\sum}_{i<j=1}^d 
        \begin{array}{c@{}@{}}
         \left(\begin{matrix}
                \overbar{v_{ij}} & \overbar{v_{ji}}
                \end{matrix}\right)  \\
                \hspace{0.01cm}
        \end{array}
        \left(\begin{matrix}
                    A_{ij} & C_{ij}  \\
                    C_{ji} & A_{ji}
                  \end{matrix}\right)
        \left(\begin{matrix}
                v_{ij} \\
                v_{ji}
                \end{matrix}\right)
    \end{align*}

The above expression is non-negative if and only if the conditions in part (2) are met. 

For the third part, we compute (see Figure \ref{fig:E-UU-X-UU-ABC}, bottom panel)
\begin{align*}
v^* X^\Gamma v &= \includegraphics[align=c]{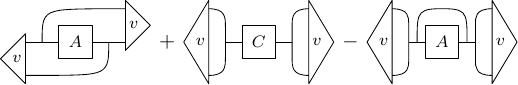}\\
    &= \sum_{i,j=1}^d A_{ij}\vert v_{ij} \vert^2 + \sum_{i,j=1}^d C_{ij}\overbar{v_{ii}}v_{jj} - \sum_{i=1}^d A_{ii} \vert v_{ii} \vert^2 \\
    &= \sum_{i\neq j=1}^d A_{ij} \vert v_{ij} \vert^2+ \operatorname{diag}(v)^* C \operatorname{diag}(v) \\
    &=\includegraphics[align=c]{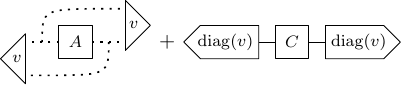}\\    
\end{align*}
where $\operatorname{diag}(v) = \sum_{i=1}^d v_{ii} e_i \in \mathbb{C}^d$ and the dotted wires correspond to an injective diagram (i.e.~the respective indices must be different). The above expression is non-negative if and only if the conditions in part (3) are met. This completes the proof of the lemma.
\end{proof}

Note that the partial transpose of an LDUI matrix gives a CLDUI matrix (and vice-versa):
$$X = \mathbb E_U(U \otimes U^*)X(U^* \otimes U) \iff X^\Gamma = [\mathbb E_U(U \otimes U^*)X(U^* \otimes U)]^\Gamma = \mathbb E_U(U \otimes U)X^\Gamma(U^* \otimes U^*). $$
We can thus immediately write down the analogue of Lemma \ref{lemma:LDUI-psd/ppt} for CLDUI states; this statement is contained in \cite[Theorem 5.2]{johnston2019pairwise}.

\begin{lemma} \label{lemma:CLDUI-psd/ppt}
Let $X$ in $\mathcal{M}_d(\mathbb{C}) \otimes \mathcal{M}_d(\mathbb{C})$ be a CLDUI matrix with the associated matrix pair $(A,B)$ in $\mathcal{M}_d(\mathbb{C}) \times \mathcal{M}_d(\mathbb{C})$. Then, 
\begin{enumerate}
    \item $X$ is self-adjoint if and only if $A$ is real and $B$ is self-adjoint.
    \item $X$ is positive semi-definite if and only if $A$ is entrywise non-negative and $B$ is positive semi-definite.
    \item $X^{\Gamma}$ is positive semi-definite if and only if $A$ is entrywise non-negative, $B$ is self-adjoint and $A_{ij}A_{ji} \geq \vert B_{ij} \vert^2$ for $i,j \in [d]$.
\end{enumerate}
\end{lemma}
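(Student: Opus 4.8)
The plan is to obtain Lemma \ref{lemma:CLDUI-psd/ppt} directly from Lemma \ref{lemma:LDUI-psd/ppt} by exploiting the partial transpose duality already noted before the statement, rather than redoing the diagrammatic computation. The key observation is the displayed equivalence $X = \mathbb{E}_U(U \otimes U^*)X(U^* \otimes U) \iff X^\Gamma = \mathbb{E}_U(U \otimes U)X^\Gamma(U^* \otimes U^*)$, which says that partial transposition is a linear bijection swapping the cone of LDUI matrices with the cone of CLDUI matrices. So first I would establish how the associated matrix triple transforms under $X \mapsto X^\Gamma$. From Definition \ref{def:X-ABC} and Figure \ref{fig:X-ABC}, partial transposition in the second tensor factor interchanges the roles of the two ket/bra legs on the second system; tracing through the definitions of $A,B,C$, one checks that if $X$ is CLDUI with associated pair $(A,B)$, then $X^\Gamma$ is LDUI with associated pair $(A,B)$ as well, i.e. the matrix $B$ attached to a CLDUI matrix plays exactly the role that the matrix $C$ plays for the LDUI matrix $X^\Gamma$ (the matrix $A$, being essentially the ``diagonal block data'', is unchanged). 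This is the crux of the argument and I expect it to be the main obstacle, since it requires carefully matching the two diagrams in Figure \ref{fig:E-UU-X-UU-ABC} under the graphical partial transpose operation from Figure \ref{fig:transpositions}.

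Once this correspondence $B \leftrightarrow C$ is in place, the three parts follow mechanically. For part (1), $X$ is self-adjoint if and only if $X^\Gamma$ is self-adjoint (partial transposition commutes with taking adjoints on the relevant real-linear structure), so by Lemma \ref{lemma:LDUI-psd/ppt}(1) applied to the LDUI matrix $X^\Gamma$ with pair $(A,B)$, this holds if and only if $A$ is real and $B$ is self-adjoint, which is exactly the claim. For part (2), $X$ is positive semi-definite, and applying Lemma \ref{lemma:LDUI-psd/ppt}(3) to $X^\Gamma$ (noting $(X^\Gamma)^\Gamma = X$), the condition ``$(X^\Gamma)^\Gamma$ positive semi-definite'' translates into ``$A$ entrywise non-negative and $B$ positive semi-definite'', matching the assertion. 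For part (3), $X^\Gamma$ positive semi-definite corresponds via Lemma \ref{lemma:LDUI-psd/ppt}(2) applied to $X^\Gamma$ to ``$A$ entrywise non-negative, $B$ self-adjoint, and $A_{ij}A_{ji} \geq \vert B_{ij}\vert^2$'', again exactly as stated.

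The cross-matching of the three parts is worth highlighting: the partial transpose swaps the PSD and PPT conditions, so part (2) of the CLDUI lemma corresponds to part (3) of the LDUI lemma and vice versa, with part (1) self-corresponding. This dual pairing is precisely why the hypotheses on $B$ in Lemma \ref{lemma:CLDUI-psd/ppt} read off as the $C$-hypotheses of Lemma \ref{lemma:LDUI-psd/ppt} with the PSD and PPT rows interchanged.

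An equally valid alternative, which avoids relying on the duality argument, is to prove the lemma from scratch by the same method used in Lemma \ref{lemma:LDUI-psd/ppt}: take an arbitrary $v = \sum_{i,j} v_{ij}\, e_i \otimes e_j$, compute $v^* X v$ and $v^* X^\Gamma v$ directly from the bottom panel of Figure \ref{fig:E-UU-X-UU-ABC} (the CLDUI expansion in terms of $A$ and $B$), and read off the non-negativity conditions. In that approach the computation of $v^* X v$ would produce $\sum_{i,j} A_{ij}\vert v_{ij}\vert^2 + \sum_{i,j} B_{ij}\overbar{v_{ii}} v_{jj} - \sum_i A_{ii}\vert v_{ii}\vert^2$, which reduces to a diagonal quadratic form in $B$ plus a manifestly non-negative off-diagonal sum when $A \geq 0$, giving part (2); and $v^* X^\Gamma v$ would yield the two-by-two block positivity conditions on $\left(\begin{smallmatrix} A_{ij} & B_{ij} \\ B_{ji} & A_{ji}\end{smallmatrix}\right)$ giving part (3). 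I would favor the duality argument for brevity, since the direct computation merely re-runs the algebra already displayed in the proof of Lemma \ref{lemma:LDUI-psd/ppt} with the roles of the PSD and PPT computations exchanged.
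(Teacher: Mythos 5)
Your proposal is correct and takes essentially the same approach as the paper: the paper likewise observes that partial transposition interchanges LDUI and CLDUI matrices (fixing $A$ and swapping the roles of $B$ and $C$) and then deduces Lemma \ref{lemma:CLDUI-psd/ppt} immediately from Lemma \ref{lemma:LDUI-psd/ppt}. Your explicit matching of the three parts under this duality, including the swap of the PSD and PPT conditions between parts (2) and (3), is precisely the argument the paper leaves implicit.
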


Finally, we see how the trace expression for an LDUI/CLDUI matrix $X$ manifests itself as the entrywise sum of the elements of the associated matrix $A$.

\begin{lemma}\label{lemma:LDUI/CLDUI-Tr}
Let $X$ in $\mathcal{M}_d(\mathbb{C}) \otimes \mathcal{M}_d(\mathbb{C})$ be an LDUI (resp.~CLDUI) matrix with the associated matrix pair $(A,C)$ (resp.~$(A,B)$) in $\mathcal{M}_d(\mathbb{C}) \times \mathcal{M}_d(\mathbb{C})$. Then $\Tr(X)=\sum_{i,j=1}^d A_{ij}$.
\end{lemma}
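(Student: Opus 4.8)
The plan is to exploit the fact that both sides of the claimed identity are linear functionals of $X$, so it suffices to pin down the precise relationship between the matrix $A$ and the bipartite matrix $X$, and then to read off the trace diagrammatically. First I would record the entrywise description of $A$ encoded in Figure \ref{fig:X-ABC}: writing $X = \sum_{i_1,i_2,j_1,j_2} X_{(i_1 i_2),(j_1 j_2)} \ket{i_1 i_2}\bra{j_1 j_2}$, the defining diagram identifies the two ket indices with the two bra indices within each tensor factor, so that $A_{ij} = X_{(ij),(ij)}$. A quick way to justify this is to check it on the spanning family of rank-one tensors $v_k v_k^* \otimes w_k w_k^*$: from \eqref{eq:X-ABC-sep} one reads off $A_{ij} = \sum_k |(v_k)_i|^2 |(w_k)_j|^2$, which is exactly the $(ij),(ij)$ diagonal entry of $\sum_k v_k v_k^* \otimes w_k w_k^*$, and linearity extends the identification to all $X$. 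Crucially, this description of $A$ is the same whether $X$ is LDUI or CLDUI, since $A$ is the first member of both associated pairs, so a single argument settles both cases.

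With $A_{ij} = X_{(ij),(ij)}$ in hand, the statement is immediate in coordinates, as $\sum_{i,j} A_{ij} = \sum_{i,j} X_{(ij),(ij)} = \Tr(X)$. I would nevertheless present the computation graphically, to stay in keeping with the rest of the section. The diagram of $A$ (Figure \ref{fig:X-ABC}) joins each input wire of $X$ to the corresponding output wire through a copy (delta) tensor, whose free leg is an index of $A$. Summing all entries of $A$ then amounts to capping both free legs with the all-ones covector $\sum_i \bra{i}$, and here I would invoke the elementary graphical move that capping the external leg of a copy tensor with the all-ones covector collapses it into a plain wire connecting its two remaining legs (algebraically, $\sum_c \delta_{abc} = \delta_{ab}$). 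Applying this to both copy tensors turns the capped diagram of $A$ into $X$ with its first input wire connected to its first output wire and its second input wire to its second output wire, which is precisely the trace diagram computing $\Tr(X)$ (cf.\ the rightmost panel of Figure \ref{fig:wires}).

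There is no serious obstacle here; the lemma is essentially a bookkeeping statement, and the main result of the section (Theorem \ref{theorem:E-u}) is not even needed, since $A$ is defined directly from $X$ prior to any averaging. The only points requiring a little care are (i) reading off from Figure \ref{fig:X-ABC} that $A$ reshapes the diagonal of $X$ rather than, say, a partial trace, and (ii) making the ``copy tensor capped by all-ones equals a wire'' step explicit, so that the collapse to the trace diagram is unambiguous. It is also worth remarking that this identification of $A$ with the diagonal of $X$ is consistent with Remark \ref{remark:X-ABC-diag}: the common diagonal of $A$, $B$ and $C$ is precisely $A_{ii} = B_{ii} = C_{ii} = X_{(ii),(ii)}$, the genuine diagonal of $X$ restricted to matched indices.
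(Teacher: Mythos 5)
Your proof is correct, but it takes a genuinely different route from the paper's. The paper proves the lemma by using the LDUI/CLDUI structure of $X$: it writes $X$ as the three-term diagrammatic expansion of Figure \ref{fig:E-UU-X-UU-ABC} (one term carrying $A$, one carrying $B$ or $C$, and one diagonal correction term), traces each term, and observes that the traces of the last two terms---$\sum_i B_{ii}$ (resp.~$\sum_i C_{ii}$) and $-\sum_i A_{ii}$---cancel by Remark \ref{remark:X-ABC-diag}, leaving $\sum_{i,j} A_{ij}$. You instead bypass the invariance of $X$ entirely: you identify, directly from Definition \ref{def:X-ABC}, that $A_{ij} = X_{(ij),(ij)}$ is the reshaped diagonal of $X$ (your verification on the family $v v^* \otimes w w^*$ via Eq.~\eqref{eq:X-ABC-sep} plus linearity is legitimate, since such rank-one product matrices do span $\mathcal{M}_d(\mathbb{C}) \otimes \mathcal{M}_d(\mathbb{C})$ over $\mathbb{C}$), whence $\Tr(X) = \sum_{i,j} X_{(ij),(ij)} = \sum_{i,j} A_{ij}$ holds for \emph{every} bipartite matrix, invariant or not. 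Your graphical rendering---capping the two copy tensors in the diagram of $A$ with all-ones covectors and using $\sum_c \delta_{abc} = \delta_{ab}$ to collapse them into plain wires, recovering the trace diagram of Figure \ref{fig:wires}---is also sound. What each approach buys: yours is more elementary and strictly more general, making no use of Theorem \ref{theorem:E-u}, of the LDUI/CLDUI hypothesis, or of the cancellation in Remark \ref{remark:X-ABC-diag}, and it exposes the lemma as pure bookkeeping about the definition of $A$; the paper's proof stays inside the parametrization of invariant matrices by their pairs $(A,C)$/$(A,B)$, which is the form used throughout the section, and so doubles as a consistency check of that parametrization under the trace.
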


\begin{proof}
    The trace expressions for a CLDUI matrix with matrix pair $(A,B)$ and an LDUI matrix with matrix pair $(A,C)$ are given in Figure \ref{fig:X-ABC-Tr} (Top and Bottom panel respectively). The last two terms in both the expressions cancel on account of Remark \ref{remark:X-ABC-diag} and we are left with just the first term which equals $\sum_{i,j=1}^d A_{ij}$.
    
    \begin{figure}[hbt!]
        \centering
        \includegraphics{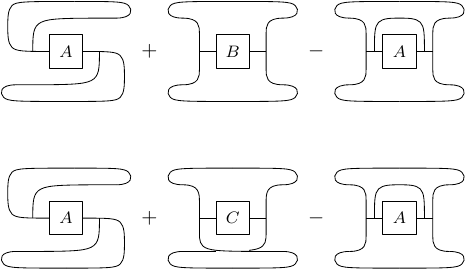}
        \caption{Trace expressions for a CLDUI (top) / LDUI (bottom) matrix expressed in terms of the associated matrices $A,B$ and $C$. The two last diagrams cancel, and one is left in both cases with the first diagram.}
        \label{fig:X-AB-AC-Tr}
    \end{figure}
    
\end{proof}

\section{Local diagonal orthogonal invariant matrices}

Let us discuss now the case of \emph{local diagonal orthogonal invariant} (LDOI) matrices. The analysis here will perfectly mirror the previous one, with the only distinction being the replacement of the random diagonal unitary matrix $U\in \mathcal{U}_d$ with the random diagonal orthogonal matrix $O\in \mathcal{O}_d$, i.e., for $X\in \mathcal{M}_d(\mathbb{C}) \otimes \mathcal{M}_d(\mathbb{C})$, we will be interested in computing the average $\mathbb{E}_O [(O \otimes O)X(O \otimes O)]$, where $O\in \mathcal{O}_d$ is a random diagonal orthogonal matrix $O=\text{diag}(s)$, with the random sign vector $s\in \{\pm 1\}^d$ distributed according to Definition \ref{def:rv-s}. 

\begin{figure}[hbt!]
    \centering
    \includegraphics{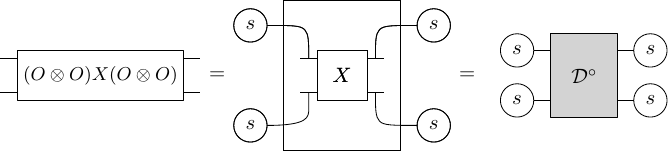}
    \caption{The diagram for $(O \otimes O)X(O \otimes O)$, reshaped in order to bring it to the standard form.}
    \label{fig:OO-X-OO}
\end{figure}

As usual, the initial step is to rearrange the given tensor so as to bring it to the standard form (see Figure \ref{fig:OO-X-OO}), which can be dealt with through a simple application of the $n=2$ case of Theorem \ref{theorem:E-s} (see Example \ref{eg:real-n-2}). Then, with the help of matrices $A,B$ and $C$ in $\mathcal{M}_d(\mathbb{C})$ introduced in Definition \ref{def:X-ABC}, the resulting expectation can be transmuted in the form of Figure \ref{fig:E-OO-X-OO}.

\begin{figure}[hbt!]
    \centering
    \includegraphics{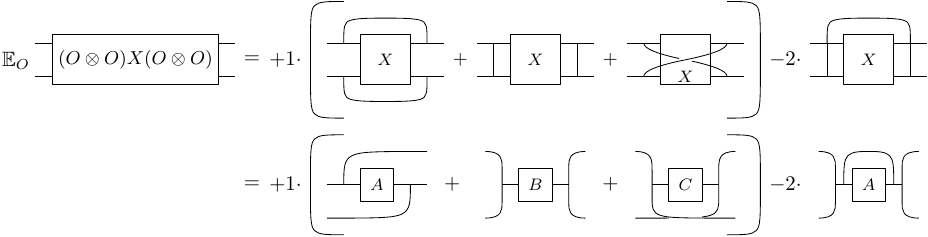}
    \caption{Expectation value of $(O \otimes O)X(O \otimes O)$, expressed in terms of the associated matrices $A,B$ and $C$ from Definition \ref{def:X-ABC}.}
    \label{fig:E-OO-X-OO}
\end{figure}

\begin{definition}[LDOI matrices]\label{def:LDOI}
A matrix $X \in \mathcal{M}_d(\mathbb{C}) \otimes \mathcal{M}_d(\mathbb{C})$ is said to be \emph{Local diagonal orthogonal invariant (LDOI)} if $(O \otimes  O) X (O \otimes O) = X$ for all diagonal orthogonal matrices $O \in \mathcal{O}_d$.
\end{definition}

We now state the equivalent of Proposition~\ref{prop:LDUI/CLDUI-AC/AB} for LDOI matrices, leaving an analogous proof to the good sense of the reader.

\begin{proposition} \label{prop:LDOI-ABC}
A matrix $X \in \mathcal{M}_d(\mathbb{C}) \otimes \mathcal{M}_d(\mathbb{C})$ is LDOI if and only if the mapping $X \mapsto \mathbb{E}_O [(O \otimes  O) X (O \otimes O)]$ leave $X$ invariant. Hence, the set of bipartite LDOI matrices is in bijection with the set of matrix triples $(A,B,C)$ in $\mathcal{M}_d(\mathbb{C}) \times \mathcal{M}_d(\mathbb{C}) \times \mathcal{M}_d(\mathbb{C})$ satisfying $\operatorname{diag}(A)=\operatorname{diag}(B)=\operatorname{diag}(C)$.
\end{proposition}

Thus, we can view the operation $X \mapsto \mathbb{E}_O [(O \otimes O)X(O \otimes O)]$ as an orthogonal projection from the bigger $d^4$-dimensional $\mathbb{C}$ Hilbert space $\mathcal{M}_d(\mathbb{C}) \otimes \mathcal{M}_d(\mathbb{C})$ onto the smaller $\mathbb{C}$ subspace of LDOI matrices with dimension $3d^2 - 2d$. The explicit matrix structure (upto rearrangement of basis elements) of an LDOI matrix $X$ with the associated triple $(A,B,C)$ splits nicely into the block diagonal form given in Eq.~\eqref{eq:LDOI-block}. For $d=3$, this corresponds to the $3\times 3$ block structure in Eq.~\eqref{eq:LDOI-block-3}.
\begin{equation} \label{eq:LDOI-block}
    X = B \oplus \bigoplus_{i<j} \left( \begin{array}{cc}
        A_{ij} & C_{ij} \\
        C_{ji} & A_{ji}
    \end{array}   \right)
\end{equation}
\begin{equation} \label{eq:LDOI-block-3}
    X =  \left(\mathcode`0=\cdot
\begin{array}{ *{3}{c} | *{3}{c} | *{3}{c} }
   A_{11} & 0 & 0 & 0 & B_{12} & 0 & 0 & 0 & B_{13} \\
   0 & A_{12} & 0 & C_{12} & 0 & 0 & 0 & 0 & 0 \\
   0 & 0 & A_{13} & 0 & 0 & 0 & C_{13} & 0 & 0 \\\hline
   0 & C_{21} & 0 & A_{21} & 0 & 0 & 0 & 0 & 0 \\
   B_{21} & 0 & 0 & 0 & A_{22} & 0 & 0 & 0 & B_{23} \\
   0 & 0 & 0 & 0 & 0 & A_{23} & 0 & C_{23} & 0 \\ \hline
   0 & 0 & C_{31} & 0 & 0 & 0 & A_{31} & 0 & 0 \\
   0 & 0 & 0 & 0 & 0 & C_{32} & 0 & A_{32} & 0 \\
   B_{31} & 0 & 0 & 0 & B_{32} & 0 & 0 & 0 & A_{33} \\
  \end{array}
\right)
\end{equation}

\begin{remark}\label{remark:LDOI>LDUI}
Local diagonal orthogonal invariance generalizes the notion of local diagonal unitary invariance for bipartite matrices in $\mathcal{M}_d(\mathbb{C}) \otimes \mathcal{M}_d(\mathbb{C})$.
This can be seen, for instance, by considering an arbitrary matrix triple $(A,B,C)$ defining an LDOI matrix $X$ and replacing either $B$, or $C$ with its corresponding diagonal counterpart. This yields either an LDUI matrix (if $B$ is replaced), or a CLDUI matrix (if $C$ is replaced), see Figures \ref{fig:E-UU-X-UU-ABC},\ref{fig:E-OO-X-OO} and Eq.~\eqref{eq:LDOI-block}. This shows that the set of LDOI matrices strictly contains the set of LDUI/CLDUI matrices.
\end{remark}

Next, we move on to the separability analysis. If a given LDOI matrix $X$ is separable, we can write down the expressions for the associated matrices $A,B,C$ as we did before for the LDUI/CLDUI matrices, getting the exact same decomposition as in Figure~\ref{fig:X-ABC-sep}. Conversely, if $A,B,C$ admit this decomposition, the separability of the matrix $X$ in $\mathcal{M}_d(\mathbb{C}) \otimes \mathcal{M}_d(\mathbb{C})$ given by the diagram in Figure \ref{fig:X-sep} implies the separability of the desired LDOI matrix via the separability preserving map $X \mapsto \mathbb{E}_O [(O \otimes O)X(O \otimes O)]$. To state this result more precisely, we first introduce a generalization of Definition \ref{def:pcp} to \textit{triplewise completely positive matrices.}

\begin{definition}[Triplewise completely positive matrices] \label{def:tcp}
A matrix triple $(Z,Z',Z'')$ in $\mathcal{M}_d(\mathbb{C}) \times \mathcal{M}_d(\mathbb{C}) \times \mathcal{M}_d(\mathbb{C})$ is said to be \emph{triplewise completely positive} if there exist matrices $V,W \in \mathcal{M}_{d,d'}(\mathbb{C})$ (for $d' \in \mathbb{N}$) such that
\begin{equation}
    Z = (V \odot \overbar{V})(W \odot \overbar{W})^* \qquad Z' = (V \odot W) (V \odot W)^* \qquad Z'' = (V \odot \overbar{W})(V \odot \overbar{W})^*
\end{equation}
\end{definition}

\begin{lemma}[Separability of LDOI matrices] \label{lemma:LDOI-sep}
An LDOI matrix $X$ in $\mathcal{M}_d(\mathbb{C}) \otimes \mathcal{M}_d(\mathbb{C})$ is separable if and only if the associated matrix triple $(A,B,C)$ in $\mathcal{M}_d(\mathbb{C}) \times \mathcal{M}_d(\mathbb{C}) \times \mathcal{M}_d(\mathbb{C})$ is triplewise completely positive.
\end{lemma}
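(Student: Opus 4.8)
The plan is to establish the separability equivalence via a two-directional argument, leveraging the separability-preserving nature of the twirling operation $X \mapsto \mathbb{E}_O[(O\otimes O)X(O\otimes O)]$, exactly as was done for the LDUI/CLDUI case in Lemma \ref{lemma:LDUI/CLDUI-sep}. The key observation is that triplewise complete positivity is precisely the condition that makes the three defining matrices $A,B,C$ simultaneously arise from a single pair of matrices $V,W$ via the Hadamard-product formulas in Eq.~\eqref{eq:X-ABC-sep}.

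First I would prove the forward direction. Suppose the LDOI matrix $X$ is separable, so that $X = \sum_{k\in I} v_k v_k^* \otimes w_k w_k^*$ for some finite family $\{v_k, w_k\}_{k\in I} \subseteq \mathbb{C}^d$. Assembling these into matrices $V,W \in \mathcal{M}_{d,|I|}(\mathbb{C})$ with columns $v_k$ and $w_k$, the matrix $X$ is exactly the one given diagrammatically in Figure \ref{fig:X-sep}. Since $X$ is LDOI, it is fixed by the twirl, so its associated triple $(A,B,C)$ (extracted via Definition \ref{def:X-ABC}) can be read off directly from the separable decomposition. Tracing through the diagrams in Figure \ref{fig:X-ABC-sep} yields precisely the three identities $A = (V\odot\overbar V)(W\odot\overbar W)^*$, $B = (V\odot W)(V\odot W)^*$, and $C = (V\odot\overbar W)(V\odot\overbar W)^*$ of Eq.~\eqref{eq:X-ABC-sep}, which is exactly the defining condition for $(A,B,C)$ to be triplewise completely positive per Definition \ref{def:tcp}.

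Next I would prove the converse. Suppose the triple $(A,B,C)$ is triplewise completely positive, witnessed by matrices $V,W \in \mathcal{M}_{d,d'}(\mathbb{C})$. I would form the (manifestly separable) bipartite matrix $\widetilde X = \sum_{k=1}^{d'} v_k v_k^* \otimes w_k w_k^*$ given by the diagram in Figure \ref{fig:X-sep}; note that $\widetilde X$ need not itself be LDOI. However, applying the twirling projection $X := \mathbb{E}_O[(O\otimes O)\widetilde X (O\otimes O)]$ produces an LDOI matrix, and because conjugation by $O\otimes O$ sends product states to product states (it acts as $v\mapsto Ov$, $w\mapsto Ow$ on each tensor factor) while the expectation is a convex average, the twirl preserves separability—so the resulting $X$ is separable. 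Finally, since $\widetilde X$ and $X$ share the same associated triple (the twirl only adjusts off-diagonal structure so as to land in the LDOI subspace, leaving the $A,B,C$ data from Definition \ref{def:X-ABC} invariant, cf.\ Remark \ref{remark:LDOI-ABC}), the separable LDOI matrix $X$ is the one associated to $(A,B,C)$, completing the equivalence.

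The main obstacle I anticipate is the bookkeeping in the converse direction: one must verify carefully that the triple associated to the twirled matrix $X$ coincides with the original triple $(A,B,C)$, i.e.\ that the separability-preserving projection does not alter the $A,B,C$ data. This follows because $\widetilde X$ already has the correct $A,B,C$ values built into its diagram and the twirl acts as the orthogonal projection onto the LDOI subspace (which is the identity on matrices already expressible through a triple with $\operatorname{diag}(A)=\operatorname{diag}(B)=\operatorname{diag}(C)$), but making this precise requires matching the diagrammatic extraction of Definition \ref{def:X-ABC} against Figure \ref{fig:E-OO-X-OO}. The analogous subtlety was handled implicitly in Lemma \ref{lemma:LDUI/CLDUI-sep}, so I would structure the argument to parallel that proof, merely noting that the third matrix $C$ now genuinely participates (rather than being collapsed to its diagonal as in the LDUI case) because the full orthogonal twirl retains all three of $A$, $B$, and $C$.
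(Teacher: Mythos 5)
Your proposal is correct and follows essentially the same route as the paper, which establishes this lemma in the discussion preceding it: the forward direction reads off the TCP decomposition of $(A,B,C)$ from the separable decomposition of $X$ via Eq.~\eqref{eq:X-ABC-sep}, and the converse builds the manifestly separable matrix of Figure~\ref{fig:X-sep} from the TCP witnesses $V,W$ and applies the separability-preserving twirl $X \mapsto \mathbb{E}_O[(O\otimes O)X(O\otimes O)]$ to land on the LDOI matrix associated with $(A,B,C)$. The consistency check you flag (that the twirl leaves the $A,B,C$ data unchanged) is exactly the point the paper handles implicitly through Figure~\ref{fig:E-OO-X-OO} and Remark~\ref{remark:LDOI-ABC}.
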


Some further properties of triplewise completely positive matrices are presented in Appendix \ref{sec:app-tcp}. We now state the analogue of Lemma \ref{lemma:LDUI-psd/ppt} and \ref{lemma:CLDUI-psd/ppt} for LDOI matrices.

\begin{lemma} \label{lemma:LDOI-psd-ppt}
Let $X$ in $\mathcal{M}_d(\mathbb{C}) \otimes \mathcal{M}_d(\mathbb{C})$ be an LDOI matrix with the associated matrix triple $(A,B,C)$ in $\mathcal{M}_d(\mathbb{C}) \times \mathcal{M}_d(\mathbb{C}) \times \mathcal{M}_d(\mathbb{C})$. Then 
\begin{enumerate}
    \item $X$ is self-adjoint if and only if $A$ is real and $B,C$ are self-adjoint.
    \item $X$ is positive semi-definite if and only if $A$ is entrywise non-negative, $B$ is positive semi-definite, $C$ is self-adjoint and $A_{ij}A_{ji} \geq \vert C_{ij} \vert^2$ for $i,j \in [d]$.
    \item $X^\Gamma$ is positive semi-definite if and only if $A$ is entry wise non-negative, $B$ is self-adjoint, $C$ is positive semi-definite, and $A_{ij}A_{ji} \geq \vert B_{ij} \vert^2$ for $i,j \in [d]$.
\end{enumerate}
\end{lemma}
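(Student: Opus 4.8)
The plan is to mimic the structure already used in the proofs of Lemmas \ref{lemma:LDUI-psd/ppt} and \ref{lemma:CLDUI-psd/ppt}, namely to compute the quadratic form $v^* X v$ (and $v^* X^\Gamma v$) for an arbitrary vector $v = \sum_{i,j} v_{ij}\, e_i \otimes e_j \in \mathbb{C}^d \otimes \mathbb{C}^d$ directly from the diagram for $X$ in Figure \ref{fig:E-OO-X-OO}, and then to read off the positivity conditions. Part (1) is settled immediately by taking the adjoint of that diagram: self-adjointness of $X$ forces the $A$-block (which carries both index pairs diagonally) to be real, and the $B$- and $C$-blocks to be self-adjoint, exactly as in the LDUI/CLDUI cases. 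The key structural input is the explicit block-diagonal decomposition \eqref{eq:LDOI-block}, which I would exploit throughout: since $X$ decomposes as $B \oplus \bigoplus_{i<j} \left(\begin{smallmatrix} A_{ij} & C_{ij} \\ C_{ji} & A_{ji} \end{smallmatrix}\right)$, positivity of $X$ reduces to simultaneous positivity of each summand, which is precisely what the stated conditions encode.

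For part (2), I would organize the computation so that the diagonal contributions (indices $i=j$) are collected separately from the off-diagonal ones. Following the LDUI computation verbatim but keeping $B$ in place of the diagonal-restricted $A$, the quadratic form splits as
\begin{equation*}
v^* X v = \sum_{i,j=1}^d B_{ij}\, \overbar{v_{ii}} v_{jj} + \sum_{i \neq j} \left( A_{ij} |v_{ij}|^2 + C_{ij}\, \overbar{v_{ij}} v_{ji} \right).
\end{equation*}
The first sum is $\operatorname{diag}(v)^* B\, \operatorname{diag}(v)$, which is non-negative for all $v$ exactly when $B$ is positive semi-definite. The second sum is a direct sum over the two-dimensional blocks indexed by $i<j$, each of which is the quadratic form associated to $\left(\begin{smallmatrix} A_{ij} & C_{ij} \\ C_{ji} & A_{ji} \end{smallmatrix}\right)$ acting on $(v_{ij}, v_{ji})$; requiring positivity of this Hermitian $2 \times 2$ matrix gives $A_{ij} \geq 0$, $A_{ji} \geq 0$, and $A_{ij} A_{ji} \geq |C_{ij}|^2$ (together with $C_{ji} = \overbar{C_{ij}}$, which is the self-adjointness of $C$ already assumed). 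Combining the two pieces yields precisely the conditions in part (2).

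For part (3), I would use the observation, made in the text just after Lemma \ref{lemma:LDUI-psd/ppt}, that partial transposition interchanges the roles of $B$ and $C$: graphically, $\Gamma$ swaps the wirings that produce the $B$-block and the $C$-block, so $X^\Gamma$ is again an LDOI-type matrix whose associated triple has $B$ and $C$ exchanged. Part (3) then follows from part (2) simply by swapping $B \leftrightarrow C$ in the stated conditions, reproducing $A$ entrywise non-negative, $C$ positive semi-definite, $B$ self-adjoint, and $A_{ij} A_{ji} \geq |B_{ij}|^2$. Alternatively one can recompute $v^* X^\Gamma v$ from scratch, as in the LDUI proof, and obtain the same split. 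I expect the only genuine subtlety—the ``main obstacle'' such as it is—to be bookkeeping: correctly tracking which diagonal terms coming from the $A$-, $B$-, and $C$-blocks coincide (by Remark \ref{remark:X-ABC-diag}, $\operatorname{diag}(A) = \operatorname{diag}(B) = \operatorname{diag}(C)$) so that the diagonal entries are counted exactly once and get attributed to $B$ rather than double-counted, and ensuring the $2\times 2$ block positivity criterion is applied consistently with the self-adjointness constraint on $C$. Once the decomposition \eqref{eq:LDOI-block} is invoked, the argument is essentially a clean merge of the two earlier lemmas with no new analytic difficulty.
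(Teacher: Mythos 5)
Your proposal is correct and follows essentially the same route as the paper: part (1) by taking the adjoint, part (2) by computing $v^* X v$ and splitting it into $\operatorname{diag}(v)^* B \operatorname{diag}(v)$ plus the $2\times 2$ Hermitian blocks $\left(\begin{smallmatrix} A_{ij} & C_{ij} \\ C_{ji} & A_{ji} \end{smallmatrix}\right)$ over $i<j$, and part (3) by noting that partial transposition exchanges $B$ and $C$. Your additional appeal to the block decomposition \eqref{eq:LDOI-block} is a harmless (and clarifying) restatement of the same structure the paper's quadratic-form computation exposes.
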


\begin{proof}
     Part (1) is trivially proved by taking adjoint of the expression in Figure~\ref{fig:E-OO-X-OO}. \\
     For part (2), we compute $v^* X v $ for an arbitrary $v = \sum_{i,j=1}^d v_{ij} e_i \otimes e_j \in \mathbb{C}^d \otimes \mathbb{C}^d$.
\begin{align*}
    v^* Xv  &= \includegraphics[align=c]{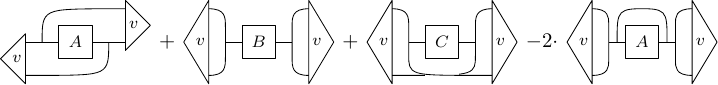} \\
    &= \sum_{i,j=1}^d A_{ij}\vert v_{ij} \vert^2 + \sum_{i,j=1}^d B_{ij} \overbar{v_{ii}}v_{jj} + \sum_{i,j=1}^d C_{ij}\overbar{v_{ij}}v_{ji} -2 \sum_{i=1}^d A_{ii} \vert v_{ii} \vert^2 \\
    &= \sum_{i,j=1}^d B_{ij} \overbar{v_{ii}}v_{jj} + \sum_{i\neq j=1}^d \left(A_{ij}\vert v_{ij} \vert^2 + C_{ij}\overbar{v_{ij}}v_{ji}\right) \\ 
    &= \operatorname{diag}(v)^* B \operatorname{diag}(v) + \mathlarger{\sum}_{i<j=1}^d 
        \begin{array}{c@{}@{}}
         \left(\begin{matrix}
                \overbar{v_{ij}} & \overbar{v_{ji}}
                \end{matrix}\right)  \\
                \hspace{0.01cm}
        \end{array}
        \left(\begin{matrix}
                    A_{ij} & C_{ij}  \\
                    C_{ji} & A_{ji}
                  \end{matrix}\right)
        \left(\begin{matrix}
                v_{ij} \\
                v_{ji}
                \end{matrix}\right)
\end{align*} where $\operatorname{diag}(v) = \sum_{i=1}^d v_{ii} e_i \in \mathbb{C}^d$. This expression is non-negative if and only if the conditions in part (2) are fulfilled. For part (3), we note that the partial transposition of an LDOI matrix simply interchanges the matrices $B$ and $C$ with each other, thus instantly yielding the desired result. 
\end{proof}

The trace expression for an LDOI matrix $X$ in $\mathcal{M}_d(\mathbb{C}) \otimes \mathcal{M}_d(\mathbb{C})$ is identical to its counterpart for an LDUI/CLDUI matrix. 

\begin{lemma}\label{lemma:LDOI-Tr}
Let $X$ in $\mathcal{M}_d(\mathbb{C}) \otimes \mathcal{M}_d(\mathbb{C})$ be an LDOI matrix with the associated matrix triple $(A,B,C)$ in $\mathcal{M}_d(\mathbb{C}) \times \mathcal{M}_d(\mathbb{C}) \times \mathcal{M}_d(\mathbb{C})$. Then $\Tr(X)=\sum_{i,j=1}^d A_{ij}$.
\end{lemma}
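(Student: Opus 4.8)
The plan is to mimic the proof of Lemma \ref{lemma:LDUI/CLDUI-Tr}, now starting from the four-term expansion of the expectation value recorded in Figure \ref{fig:E-OO-X-OO}. Since the trace is linear, I would compute $\Tr(X)$ by taking the trace of each of the four diagrams appearing there; these arise from the four even partitions in the $n=2$ case of Theorem \ref{theorem:E-s} (see Example \ref{eg:real-n-2}), carrying coefficients $+1,+1,+1,-2$.

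First I would read off the trace of each diagram, which amounts graphically to connecting the input and output wires. The diagram built from $A$ places the entry $A_{ij}$ on the $(i,j)$ diagonal of the tensor-product basis, so its trace is $\sum_{i,j=1}^d A_{ij}$. The diagram built from $B$ (which sends $e_i\otimes e_i$ to $e_j\otimes e_j$ with weight $B_{ij}$) contributes $\Tr(B)=\sum_{i=1}^d B_{ii}$, the diagram built from $C$ (which swaps $e_i\otimes e_j$ to $e_j\otimes e_i$ with weight $C_{ij}$, hence contributes to the trace only when $i=j$) gives $\sum_{i=1}^d C_{ii}$, and the fourth diagram, coming from the coarsest even partition $\mathbb 1_4$ with coefficient $-2$, contributes $-2\sum_{i=1}^d A_{ii}$. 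The same four contributions can be seen directly from the expression for $v^* X v$ computed in the proof of Lemma \ref{lemma:LDOI-psd-ppt}.

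Then I would invoke Remark \ref{remark:X-ABC-diag}, which guarantees $\operatorname{diag}(A)=\operatorname{diag}(B)=\operatorname{diag}(C)$. Consequently $\Tr(B)=\Tr(C)=\sum_{i=1}^d A_{ii}$, and the last three contributions cancel:
\begin{equation*}
\sum_{i=1}^d A_{ii} + \sum_{i=1}^d A_{ii} - 2\sum_{i=1}^d A_{ii} = 0.
\end{equation*}
What survives is exactly the first diagram, giving $\Tr(X)=\sum_{i,j=1}^d A_{ij}$, as claimed. This threefold cancellation is the precise analogue of the twofold cancellation of the two last diagrams in the proof of Lemma \ref{lemma:LDUI/CLDUI-Tr}.

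Alternatively, and even more directly, I could read the result off the block decomposition in Eq.~\eqref{eq:LDOI-block}: the trace is the diagonal of $B$ together with the diagonals of the $2\times 2$ blocks, namely $\sum_i B_{ii} + \sum_{i<j}(A_{ij}+A_{ji}) = \sum_i A_{ii} + \sum_{i\neq j}A_{ij} = \sum_{i,j}A_{ij}$, again using $B_{ii}=A_{ii}$. I do not expect any genuine obstacle here: the statement is a routine verification, and the only point requiring care is the bookkeeping of the $-2$ coefficient attached to the coarsest partition and the resulting cancellation against the diagonal entries of $B$ and $C$.
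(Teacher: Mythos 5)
Your proof is correct and takes essentially the same route as the paper's: expand $\Tr(X)$ over the four diagrams of Figure~\ref{fig:E-OO-X-OO}, note that the $B$-, $C$-, and coarsest-partition terms contribute $\Tr(B)+\Tr(C)-2\sum_{i=1}^d A_{ii}=0$ by Remark~\ref{remark:X-ABC-diag}, and what remains is the first diagram, $\sum_{i,j=1}^d A_{ij}$. Your alternative reading via the block decomposition in Eq.~\eqref{eq:LDOI-block} is a pleasant shortcut, but the main argument coincides with the paper's proof.
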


\begin{proof}
    We can write down the trace expression for an LDOI matrix $X$ with the matrix triple $(A,B,C)$ as follows. 
    \begin{figure}[hbt!]
        \centering
        \includegraphics{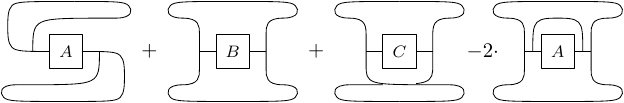}
        \caption{Trace expressions for a CLDUI (top) / LDUI (bottom) matrix expressed in terms of the associated matrices $A,B$ and $C$. The two last diagrams cancel, and one is left in both cases with the first diagram.}
        \label{fig:X-ABC-Tr}
    \end{figure}
    The last three terms in this expression cancel on account of Remark \ref{remark:X-ABC-diag} and we are left with just the first term which equals $\sum_{i,j=1}^d A_{ij}$.
\end{proof}

\section{Diagonal twirling of linear maps between matrix algebras}\label{sec:twirl}

We consider in this section three different \emph{twirling} operations for linear maps between matrix algebras, obtained by averaging with respect to two \emph{independent} random diagonal unitary matrices. The second case discussed here has already been considered in \cite{yu2016bounds,harris2018schur} in relation to the theory of Schur multipliers and mixed unitary matrices; the other two cases are new. More general classes of linear maps, obtained by averaging with respect to one diagonal unitary matrix, shall be considered in a future work \cite{singh2020diagonal}. 

We now introduce three twirling operations, acting on linear maps $\Phi: \mathcal M_d(\mathbb C) \to \mathcal M_d(\mathbb C)$:
\begin{align}
    \label{eq:twirl-1}\mathcal T_\equal(\Phi)(X) &= \int_{U,V} U \Phi(V^*XV)U^* \mathrm{d}U \mathrm{d}V\\
    \label{eq:twirl-2}\mathcal T_\shortparallel(\Phi)(X) &= \int_{U,V} U \Phi(U^*XV^*)V \mathrm{d}U \mathrm{d}V\\
    \label{eq:twirl-3}\mathcal T_\times(\Phi)(X) &= \int_{U,V} U \Phi(V^*XU^*)V \mathrm{d}U \mathrm{d}V,    
\end{align}
where $\mathrm{d}U$, $\mathrm{d}V$ denote the Haar measure on the group of diagonal unitary matrices. In other words, $U = \operatorname{diag}(u)$ (resp.~$V = \operatorname{diag}(v)$), for $u,v$ independent random variables as in Definition \ref{def:rv-u}. 

The action of the twirling operators $\mathcal T_{\equal, \shortparallel, \times}$ on linear maps shall be analyzed on the level of their Choi-Jamio{\l}kowski matrices, see \cite[Section 2.2.2]{watrous2018theory}:
$$\mathcal M_d(\mathbb C) \otimes \mathcal M_d(\mathbb C) \ni J(\Phi) := \sum_{i,j=1}^d \Phi(e_ie_j^*) \otimes e_ie_j^*,$$
for an orthonormal basis $\{e_i\}_{i=1}^d$ of $\mathbb C^d$. The equation $\Phi(X) = [\operatorname{id} \otimes \operatorname{Tr}](J(\Phi)(\mathbb{I}_d \otimes X^{\mathsf T}))$ retrieves the action of $\Phi$ on an input $X$ in $\mathcal{M}_d(\mathbb{C})$ from its Choi-Jamio{\l}kowski matrix $J(\Phi)$, which is diagrammatically represented in Figure \ref{fig:Choi-V}, left panel. 

\begin{figure}[H]
    \centering
    \includegraphics[align=c]{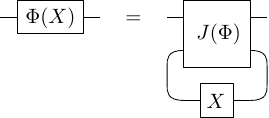}\qquad\qquad\qquad
    \includegraphics[align=c]{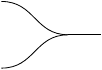}\qquad\qquad\qquad
    \includegraphics[align=c]{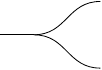}
    \caption{Left: the action of a linear map $\Phi: \mathcal M_d(\mathbb C) \to \mathcal M_d(\mathbb C)$ in terms of its Choi-Jamio{\l}kowski matrix $J(\Phi)$. Center and right: the copy isometry $V$ and its dual $V^*$.}
    \label{fig:Choi-V}
\end{figure}

\begin{figure}
    \centering
    \includegraphics{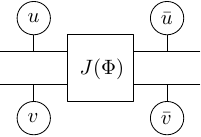}\qquad\qquad
    \includegraphics{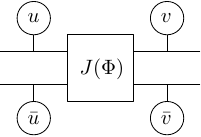}\qquad\qquad
    \includegraphics{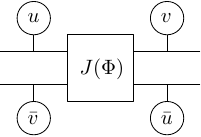}
    \caption{The effect of twirling on a linear map $\Phi: \mathcal M_d(\mathbb C) \to \mathcal M_d(\mathbb C)$ with diagonal unitary operators $U=\operatorname{diag}(u)$ and $V=\operatorname{diag}(v)$. From left to right, the diagrams correspond to the Choi-Jamio{\l}kovski matrices of the integrands in Eqs.~\eqref{eq:twirl-1}, \eqref{eq:twirl-2}, \eqref{eq:twirl-3}.}
    \label{fig:twirl-choi}
\end{figure}

\begin{figure}[htb!]
    \centering
    \includegraphics{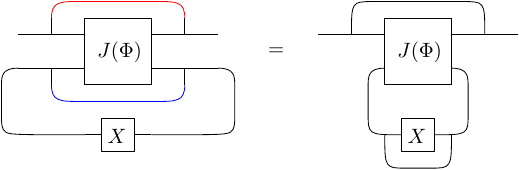}\\ \vspace{.5cm}
    \includegraphics{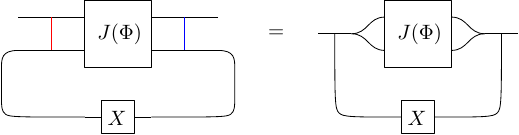}\\ \vspace{.5cm}
    \includegraphics{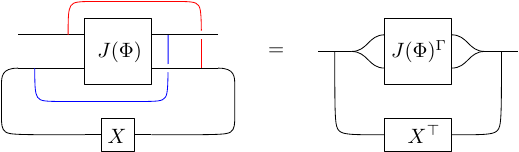}
    \caption{The expectation values of the diagrams from Figure \ref{fig:twirl-choi}, seen as linear maps $\Phi: \mathcal M_d(\mathbb C) \to \mathcal M_d(\mathbb C)$. From top to bottom, we draw the diagrams of the matrices $T_\equal(\Phi)(X)$, $\mathcal T_\shortparallel(\Phi)(X)$, and $\mathcal T_\times(\Phi)(X)$. The averages are computed by applying twice Theorem \ref{theorem:E-u}, once for $\textcolor{red}{\mathbb E_u}$ and a second time for $\textcolor{blue}{\mathbb E_v}$.}
    \label{fig:twirl-final}
\end{figure}

We state now the main result of this section which provides explicit formulas for the ``twirled'' linear maps $\mathcal T_{\equal, \shortparallel, \times}(\Phi)$; the case of the map $\mathcal T_\shortparallel(\Phi)$ has already been considered in \cite[Section V]{yu2016bounds} and \cite[Lemma III.4]{harris2018schur}.

\begin{proposition}
The actions of twirling operations described in \eqref{eq:twirl-1}-\eqref{eq:twirl-3} on linear maps $\Phi: \mathcal M_d(\mathbb C) \to \mathcal M_d(\mathbb C)$ are, respectively: 
\begin{align}
    \label{eq:action-1}\mathcal T_\equal(\Phi)(X) &= \operatorname{diag}(\Phi(\operatorname{diag}(X))\\
    \label{eq:action-2}\mathcal T_\shortparallel(\Phi)(X) &= [V^*J(\Phi)V] \odot X\\
    \label{eq:action-3}\mathcal T_\times(\Phi)(X) &= [V^*J(\Phi)^\Gamma V] \odot X^\top,    
\end{align}
where $V : \mathbb C^d \to \mathbb C^d \otimes \mathbb C^d$ is the \emph{copy isometry} defined by its action on the standard basis $V e_i = e_i \otimes e_i$ and $\Gamma$ denotes the partial transposition: $A^\Gamma = [\operatorname{id} \otimes \operatorname{transp}](A)$. 
\end{proposition}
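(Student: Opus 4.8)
The plan is to compute each twirled map on the level of its action on an arbitrary input $X \in \mathcal{M}_d(\mathbb{C})$, exploiting that $u$ and $v$ are independent. Since each of the three diagrams in Figure \ref{fig:twirl-choi} contains exactly one $u$-box, one $\overbar u$-box, one $v$-box and one $\overbar v$-box, Remark \ref{rem:multiple-integrals-u} lets me apply Theorem \ref{theorem:E-u} once for $\mathbb{E}_u$ and once for $\mathbb{E}_v$, each time in the trivial $n=1$ instance of Example \ref{eg:complex-n-1}. Concretely, this means the average over $u$ glues the single leg attached to $u$ to the single leg attached to $\overbar u$, and likewise for $v$; passing from Figure \ref{fig:twirl-choi} to Figure \ref{fig:twirl-final} is exactly this double gluing. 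Algebraically, the two $n=1$ averages produce the Kronecker constraints $\mathbb{E}_u[u_a \overbar u_k] = \delta_{ak}$ and $\mathbb{E}_v[v_b \overbar v_l] = \delta_{bl}$, which is all I will need.

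For $\mathcal{T}_\equal$ in \eqref{eq:twirl-1} the computation is immediate and does not even require expanding $\Phi$: by linearity of $\Phi$ and independence, $\mathcal{T}_\equal(\Phi)(X) = \mathbb{E}_U\!\left[U\,\Phi\!\left(\mathbb{E}_V[V^*XV]\right)U^*\right]$, and the inner and outer averages are precisely the computation $\mathbb{E}_U UXU^* = \operatorname{diag}(X)$ recorded in Figure \ref{fig:E-UXUstar}. The first average sends $X$ to $\operatorname{diag}(X)$, and the second sends $\Phi(\operatorname{diag}(X))$ to $\operatorname{diag}(\Phi(\operatorname{diag}(X)))$, yielding \eqref{eq:action-1}.

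For $\mathcal{T}_\shortparallel$ and $\mathcal{T}_\times$ I would expand $\Phi$ linearly as $\Phi(Y) = \sum_{k,l} Y_{kl}\,\Phi(e_k e_l^*)$ and read off a single matrix entry. In the parallel case $(U^*XV^*)_{kl} = \overbar u_k X_{kl} \overbar v_l$ and $\bigl(U\,\Phi(e_k e_l^*)\,V\bigr)_{ab} = u_a (\Phi(e_k e_l^*))_{ab}\, v_b$, so after applying the two $n=1$ averages the constraints $\delta_{ak}$, $\delta_{bl}$ collapse the double sum to $(\mathcal{T}_\shortparallel(\Phi)(X))_{ab} = (\Phi(e_a e_b^*))_{ab}\,X_{ab}$. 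The analogous bookkeeping for the crossed case, where the roles of $u$ and $v$ are swapped inside the argument of $\Phi$, gives $(\mathcal{T}_\times(\Phi)(X))_{ab} = (\Phi(e_b e_a^*))_{ab}\,X_{ba}$. It remains to recognize these coefficients as entries of a compressed Choi matrix: a direct check against $J(\Phi) = \sum_{i,j}\Phi(e_i e_j^*)\otimes e_i e_j^*$ and the copy isometry $Ve_i = e_i\otimes e_i$ shows $(V^*J(\Phi)V)_{ab} = (\Phi(e_a e_b^*))_{ab}$, while partial transposition of the second tensor leg turns $e_i e_j^*$ into $e_j e_i^*$ and hence gives $(V^*J(\Phi)^\Gamma V)_{ab} = (\Phi(e_b e_a^*))_{ab}$. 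Substituting and using $X_{ba} = (X^\top)_{ab}$ produces the Hadamard-product formulas \eqref{eq:action-2} and \eqref{eq:action-3}.

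The genuinely delicate part is not the averaging---which is the trivial $n=1$ case---but the identification of the resulting diagram with the copy-isometry compression $V^*J(\Phi)V$, together with correctly tracking the partial transpose $\Gamma$ and the transpose on $X$ in the crossed case. The input row index $k$ of $\Phi$ gets forced onto the output row $a$ and the input column $l$ onto the output column $b$ in the parallel case, whereas in the crossed case they are forced onto $b$ and $a$ respectively; getting this pairing exactly right---equivalently, orienting the glued legs correctly in Figure \ref{fig:twirl-final}---is where the only real care is needed.
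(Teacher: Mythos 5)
Your proposal is correct and takes essentially the same route as the paper: both proofs reduce the averaging to the trivial $n=1$ case of Theorem \ref{theorem:E-u}, applied once for $\mathbb{E}_u$ and once for $\mathbb{E}_v$ (Remark \ref{rem:multiple-integrals-u}), and then identify the resulting glued wires with $\operatorname{diag}(\Phi(\operatorname{diag}(X)))$ and the compressions $V^*J(\Phi)V$ and $V^*J(\Phi)^\Gamma V$. The only difference is presentational: the paper reads Eqs.~\eqref{eq:action-1}--\eqref{eq:action-3} off the averaged diagrams in Figure \ref{fig:twirl-final}, whereas you verify the same identifications entrywise in coordinates (and handle $\mathcal T_\equal$ via nested expectations together with Figure \ref{fig:E-UXUstar}), correctly tracking the index pairings, including the partial transpose and the transpose on $X$ in the crossed case.
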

\begin{proof}
The first step of the proof is to consider the action of the twirling maps on the Choi-Jamio{\l}kowski matrices, which are represented graphically in Figure \ref{fig:twirl-choi} for the three cases.

The next step is to compute the averages of the three diagrams from Figure \ref{fig:twirl-choi} with respect to the uniform distributions over the random phase vectors $u,v$ (see Definition \ref{def:rv-u}). The averaging can be done graphically, by applying twice Theorem \ref{theorem:E-u}, once for $\mathbb E_u$ and once for $\mathbb E_v$, see also Remark \ref{rem:multiple-integrals-u}. The resulting diagrams are depicted in Figure \ref{fig:twirl-final}, from which Eqs.~\eqref{eq:action-1},\eqref{eq:action-2},\eqref{eq:action-3} can be immediately obtained. 
\end{proof}

\begin{remark}
The same result holds when replacing the Haar random diagonal unitary matrices $U,V$ with independent random diagonal sign matrices $S,T$. Indeed, Theorems \ref{theorem:E-u} and \ref{theorem:E-s} give identical formulas in the case of a single random matrix/vector ($n=1$). 
\end{remark}

\section{Conclusions and future directions}
We have developed a graphical calculus to compute expectations of tensor network diagrams consisting of random vectors with uniform, i.i.d.~components on the unit circle in the complex plane. While doing so, the natural partial order structure on the set of uniform block permutations was critically exploited. A near replica of this approach was implemented for calculating averages of diagrams consisting of random vectors with complex phases replaced by uniform, i.i.d.~random signs, where the poset of even partitions provided the necessary combinatorial structure. Applications of these results were presented in the analysis of several families of bipartite matrices with special local diagonal unitary/orthogonal invariance property. Notably, the notion of triplewise complete positivity was introduced to study the separability problem for these matrices. Finally, the twirling of linear maps between matrix algebras by independent diagonal unitary matrices provided an apt exhibit of the utility of our results.

From a combinatorial perspective, one could further investigate whether there exists some sort of a relationship between certain kinds of probability spaces and combinatorial structures. More precisely, can we consider averaging of a diagram containing random vectors from a different probability space, in order to obtain more exotic combinatorial objects such as \emph{non-crossing} partitions or \emph{non-crossing} uniform block permutations? 

The most obvious extension of our work would be to look for more applications of the graphical calculi, especially in cases where the involved number of random vectors is large $(\geq 3)$ and the computations become too cumbersome to do by brute force algebra. Decoupling by random diagonal unitaries and random diagonal unitary $t$-designs present compelling opportunities in this direction \cite{nakata2013diagonal, nakata2014diagonal, nakata2017unitary, nakata2017decoupling}. One could also look at the generalization of the notion of local diagonal unitary/orthogonal invariance in a multipartite setting. Separability analysis of the relevant matrices in this context might lead to more general notions of complete positivity of sets of tensors.

Even in the bipartite setting, the situation is far from being settled, and there are several unexplored avenues. For instance, even though some elementary results on triplewise completely positive matrices are presented in Appendix~\ref{sec:app-tcp}, one would like to ask for a more complete characterization of these matrix triples, so as to gain insights on the separability properties of the associated LDOI matrices. Development of sufficient conditions to guarantee triplewise complete positivity of a given matrix triple would certainly be desirable. One could also study explicit decompositions of given matrix triples, and ask questions regarding the existence and uniqueness of such decompositions. Significant work along some of these directions has been performed in \cite{singh2020diagonal,Singh2020entanglement,singh2020ppt2}.

\bigskip

\noindent\textit{Acknowledgements.} The work of SS was partially supported by an INSPIRE Scholarship for Higher Education by the Department of
Science and Technology, Government of India. We would like to thank Jeremy Levick and Mizanur Rahaman for engaging discussions regarding the potential applications of the graphical calculus to quantum information theory. 

\bigskip

\bibliographystyle{alpha}
\bibliography{references}

\newcommand{\etalchar}[1]{$^{#1}$}
\begin{thebibliography}{NHMW17b}

\bibitem[AJR15]{arunachalam2015absolute}
Srinivasan Arunachalam, Nathaniel Johnston, and Vincent Russo.
\newblock Is absolute separability determined by the partial transpose?
\newblock {\em Quantum Information \& Computation}, 15(7-8):694--720, 2015.

\bibitem[AN03]{abraham2003completely}
Berman Abraham and Shaked-monderer Naomi.
\newblock {\em Completely positive matrices}.
\newblock World Scientific, 2003.

\bibitem[AO08]{aguiar2008hopf}
Marcelo Aguiar and Rosa~C Orellana.
\newblock The hopf algebra of uniform block permutations.
\newblock {\em Journal of Algebraic Combinatorics}, 28(1):115--138, 2008.

\bibitem[BB11]{Bergholm2011circuit}
Ville Bergholm and Jacob~D Biamonte.
\newblock Categorical quantum circuits.
\newblock {\em Journal of Physics A: Mathematical and Theoretical},
  44(24):245304, May 2011.

\bibitem[BC17]{bridgeman2017hand}
Jacob~C Bridgeman and Christopher~T Chubb.
\newblock Hand-waving and interpretive dance: an introductory course on tensor
  networks.
\newblock {\em Journal of Physics A: Mathematical and Theoretical},
  50(22):223001, 2017.

\bibitem[CGGPG13]{collins2013matrix}
Beno{\^\i}t Collins, Carlos~E Gonz{\'a}lez-Guill{\'e}n, and David
  P{\'e}rez-Garc{\'\i}a.
\newblock Matrix product states, random matrix theory and the principle of
  maximum entropy.
\newblock {\em Communications in Mathematical Physics}, 320:663--677, 2013.

\bibitem[CK06]{chruscinski2006class}
Dariusz Chru{\'s}ci{\'n}ski and Andrzej Kossakowski.
\newblock Class of positive partial transposition states.
\newblock {\em Physical Review A}, 74(2):022308, 2006.

\bibitem[CK17]{coecke2017picturing}
Bob Coecke and Aleks Kissinger.
\newblock {\em Picturing quantum processes}.
\newblock Cambridge University Press, 2017.

\bibitem[CN10]{collins2010random}
Beno{\^\i}t Collins and Ion Nechita.
\newblock Random quantum channels {I}: graphical calculus and the {B}ell state
  phenomenon.
\newblock {\em Communications in Mathematical Physics}, 297(2):345--370, 2010.

\bibitem[CN11]{collins2011gaussianization}
Beno{\^\i}t Collins and Ion Nechita.
\newblock Gaussianization and eigenvalue statistics for random quantum channels
  ({III}).
\newblock {\em The Annals of Applied Probability}, pages 1136--1179, 2011.

\bibitem[CN16]{collins2016random}
Benoit Collins and Ion Nechita.
\newblock Random matrix techniques in quantum information theory.
\newblock {\em Journal of Mathematical Physics}, 57(1), 2016.

\bibitem[CP67]{clifford-semigroups}
A.H. Clifford and G.B. Preston.
\newblock {\em The Algebraic Theory of Semigroups, Volume II}.
\newblock Number v. 2 in American Mathematical Society. Mathematical surveys.
  American Mathematical Society, 1967.

\bibitem[CW02]{Kai2002realignment}
Kai Chen and Ling-An Wu.
\newblock A matrix realignment method for recognizing entanglement.
\newblock {\em Quantum Information and Computation}, 3, 06 2002.

\bibitem[DCEL09]{dankert2009exact}
Christoph Dankert, Richard Cleve, Joseph Emerson, and Etera Livine.
\newblock Exact and approximate unitary 2-designs and their application to
  fidelity estimation.
\newblock {\em Physical Review A}, 80(1):012304, 2009.

\bibitem[DGS91]{delsarte1991spherical}
Philippe Delsarte, Jean-Marie Goethals, and Johan~Jacob Seidel.
\newblock Spherical codes and designs.
\newblock In {\em Geometry and Combinatorics}, pages 68--93. Elsevier, 1991.

\bibitem[E{\etalchar{+}}16]{eynard2016counting}
Bertrand Eynard et~al.
\newblock Counting surfaces.
\newblock {\em Progress in Mathematical Physics}, 70, 2016.

\bibitem[FKN19]{fukuda2019rtni}
Motohisa Fukuda, Robert Koenig, and Ion Nechita.
\newblock {RTNI} - {A} symbolic integrator for {H}aar-random tensor networks.
\newblock {\em Journal of Physics A: Mathematical and Theoretical},
  52(42):425303, 2019.

\bibitem[HLP{\etalchar{+}}18]{harris2018schur}
Samuel~J Harris, Rupert~H Levene, Vern~I Paulsen, Sarah Plosker, and Mizanur
  Rahaman.
\newblock Schur multipliers and mixed unitary maps.
\newblock {\em Journal of Mathematical Physics}, 59(11):112201, 2018.

\bibitem[HNQ{\etalchar{+}}16]{hayden2016holographic}
Patrick Hayden, Sepehr Nezami, Xiao-Liang Qi, Nathaniel Thomas, Michael Walter,
  and Zhao Yang.
\newblock Holographic duality from random tensor networks.
\newblock {\em Journal of High Energy Physics}, 2016(11):9, 2016.

\bibitem[{IQO}]{IQOQIseparability}
{IQOQI Vienna}.
\newblock List of open quantum problems --- {P}roblem 15: Separability from
  spectrum.
\newblock Available at
  \url{https://oqp.iqoqi.univie.ac.at/separability-from-spectrum/}.

\bibitem[JM19]{johnston2019pairwise}
Nathaniel Johnston and Olivia MacLean.
\newblock Pairwise completely positive matrices and conjugate local diagonal
  unitary invariant quantum states.
\newblock {\em Electronic Journal of Linear Algebra}, 35:156--180, 2019.

\bibitem[K{\.Z}01]{kus2001geometry}
Marek Ku{\'s} and Karol {\.Z}yczkowski.
\newblock Geometry of entangled states.
\newblock {\em Physical Review A}, 63(3):032307, 2001.

\bibitem[Law98]{lawson-isemigroups}
M.V. Lawson.
\newblock {\em Inverse Semigroups: The Theory of Partial Symmetries}.
\newblock World Scientific, 1998.

\bibitem[Mau13]{maurice-ubp}
Rémi Maurice.
\newblock A polynomial realization of the hopf algebra of uniform block
  permutations.
\newblock {\em Advances in Applied Mathematics}, 51(2):285 -- 308, 2013.

\bibitem[NHMW17a]{nakata2017decoupling}
Yoshifumi Nakata, Christoph Hirche, Ciara Morgan, and Andreas Winter.
\newblock Decoupling with random diagonal unitaries.
\newblock {\em Quantum}, 1:18, 2017.

\bibitem[NHMW17b]{nakata2017unitary}
Yoshifumi Nakata, Christoph Hirche, Ciara Morgan, and Andreas Winter.
\newblock Unitary 2-designs from random x-and z-diagonal unitaries.
\newblock {\em Journal of Mathematical Physics}, 58(5):052203, 2017.

\bibitem[NM13]{nakata2013diagonal}
Yoshifumi Nakata and Mio Murao.
\newblock Diagonal-unitary 2-design and their implementations by quantum
  circuits.
\newblock {\em International Journal of Quantum Information}, 11(07):1350062,
  2013.

\bibitem[NM14]{nakata2014diagonal}
Yoshifumi Nakata and Mio Murao.
\newblock Diagonal quantum circuits: their computational power and
  applications.
\newblock {\em The European Physical Journal Plus}, 129(7):152, 2014.

\bibitem[OEI]{oeis}
OEIS.
\newblock The on-line encyclopedia of integer sequences, published
  electronically at.
\newblock \url{https://oeis.org}.

\bibitem[Or{\'u}14]{orus2014practical}
Rom{\'a}n Or{\'u}s.
\newblock A practical introduction to tensor networks: Matrix product states
  and projected entangled pair states.
\newblock {\em Annals of Physics}, 349:117--158, 2014.

\bibitem[Pen71]{penrose1971applications}
Roger Penrose.
\newblock Applications of negative dimensional tensors.
\newblock {\em Combinatorial mathematics and its applications}, 1:221--244,
  1971.

\bibitem[RBKSC04]{renes2004symmetric}
Joseph~M Renes, Robin Blume-Kohout, Andrew~J Scott, and Carlton~M Caves.
\newblock Symmetric informationally complete quantum measurements.
\newblock {\em Journal of Mathematical Physics}, 45(6):2171--2180, 2004.

\bibitem[Rud00]{Rudolph2000realignment}
Oliver Rudolph.
\newblock A separability criterion for density operators.
\newblock {\em Journal of Physics A: Mathematical and General},
  33(21):3951--3955, may 2000.

\bibitem[Sin20]{Singh2020entanglement}
Satvik Singh.
\newblock Can entanglement hide behind triangle-free graphs?
\newblock {\em preprint arXiv:2010.11891}, 2020.

\bibitem[SN20a]{singh2020diagonal}
Satvik Singh and Ion Nechita.
\newblock Diagonal unitary and orthogonal symmetries in quantum theory.
\newblock {\em preprint arXiv:2010.07898}, 2020.

\bibitem[SN20b]{singh2020ppt2}
Satvik Singh and Ion Nechita.
\newblock The {PPT}$^2$ conjecture holds for all choi-type maps.
\newblock {\em preprint arXiv:2011.03809}, 2020.

\bibitem[Sta78]{stanley1978exponential}
Richard~P Stanley.
\newblock Exponential structures.
\newblock {\em Studies in Applied Mathematics}, 59(1):73--82, 1978.

\bibitem[Sta11a]{stanley-combinatorics-1}
Richard~P. Stanley.
\newblock {\em Enumerative Combinatorics: Volume 1}.
\newblock Cambridge University Press, USA, 2nd edition, 2011.

\bibitem[Sta11b]{stanley-combinatorics-2}
Richard~P. Stanley.
\newblock {\em Enumerative Combinatorics: Volume 2}.
\newblock Cambridge University Press, USA, 2nd edition, 2011.

\bibitem[Ste06]{steinberg-mobius}
Benjamin Steinberg.
\newblock Möbius functions and semigroup representation theory.
\newblock {\em Journal of Combinatorial Theory, Series A}, 113(5):866 -- 881,
  2006.

\bibitem[Vid08]{Vidal2008manybodysimulations}
G.~Vidal.
\newblock Class of quantum many-body states that can be efficiently simulated.
\newblock {\em Phys. Rev. Lett.}, 101:110501, Sep 2008.

\bibitem[VMC08]{Verstraete2008renormalization}
F.~Verstraete, V.~Murg, and J.I. Cirac.
\newblock Matrix product states, projected entangled pair states, and
  variational renormalization group methods for quantum spin systems.
\newblock {\em Advances in Physics}, 57(2):143--224, 2008.

\bibitem[Wat18]{watrous2018theory}
John Watrous.
\newblock {\em The Theory of Quantum Information}.
\newblock Cambridge University Press, 2018.

\bibitem[WBC15]{wood2015tensor}
Christopher~J Wood, Jacob~D Biamonte, and David~G Cory.
\newblock Tensor networks and graphical calculus for open quantum systems.
\newblock {\em Quantum Information \& Computation}, 15(9-10):759--811, 2015.

\bibitem[YDX16]{yu2016bounds}
Nengkun Yu, Runyao Duan, and Quanhua Xu.
\newblock Bounds on the distance between a unital quantum channel and the
  convex hull of unitary channels.
\newblock {\em IEEE Transactions on Information Theory}, 63(2):1299--1310,
  2016.

\bibitem[Zvo97]{zvonkin1997matrix}
Alexander Zvonkin.
\newblock Matrix integrals and map enumeration: an accessible introduction.
\newblock {\em Mathematical and Computer Modelling}, 26(8-10):281--304, 1997.

\end{thebibliography}

\appendix
\section{Application of Theorem~\ref{theorem:E-u} for the n=3 case}\label{sec:app-n-3}

If a diagram $\mathcal{D}$ has 3 $u$-boxes and 3 $\overbar u$-boxes, distributed according to Definition~\ref{def:rv-u}, then the number of diagrams in the expectation equals the number of uniform block permutations of order 3, i.e., $\vert \mathcal{UBP}_3 \vert = 16$, and we obtain the diagrams in Figure \ref{fig:E-u-3}. All the diagrams in the first bracket have coefficients $\Cf_\mathcal{U} = 1$, and correspond to permutations of the form ${\tiny\left(\!\!\begin{tabular}{c|c|c}
        1 & 2 & 3 \\
        $\ast$ & $\ast$ & $\ast$
    \end{tabular}\!\!\right)}$. The second bracket consists of diagrams which correspond to the UBPs of type $\lambda = 1^1 2^1$, each having a coefficient $\Cf_\mathcal{U} = -1$. Finally, the coarsest UBP of order 3 gives the last diagram with coefficient $\Cf_\mathcal{U} = +4$.

\begin{figure}[htb!]
    \centering
    \includegraphics[width=16cm]{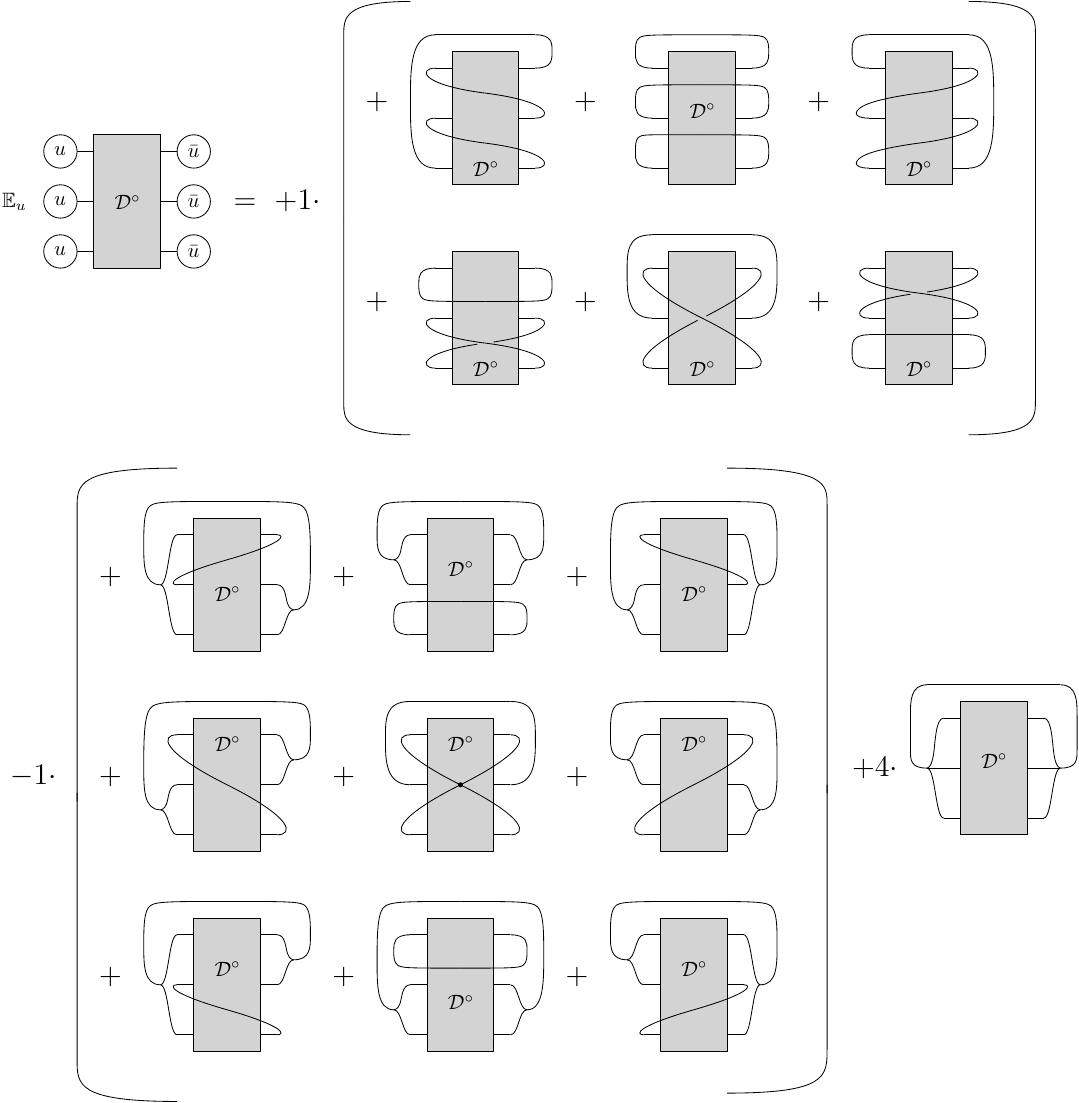}
    \caption{Applying Theorem \ref{theorem:E-u} in the case $n=3$. The result is a weighted sum of 16 diagrams.}
    \label{fig:E-u-3}
\end{figure}

\section{Triplewise completely positive matrices}\label{sec:app-tcp}
In this appendix, we collect some elementary results on triplewise completely positive (TCP) matrices and highlight their connection with the properties of the corresponding class of separable bipartite LDOI matrices.
The properties we discuss are analogous to those of pairwise completely positive (PCP) matrices presented in \cite{johnston2019pairwise}. A more extensive analysis is reserved for a future work \cite{singh2020diagonal}. All matrix triples considered in this appendix are in $\mathcal{M}_d(\mathbb{C}) \times \mathcal{M}_d(\mathbb{C}) \times \mathcal{M}_d(\mathbb{C})$.

\begin{lemma}
For $x,y \geq 0$ and triplewise completely positive $(A_1,B_1,C_1)$ and $(A_2,B_2,C_2)$, the triple $(xA_1 + yA_2, xB_1 + yB_2, xC_1 + yC_2)$ is triplewise completely positive, i.e., the set of triplewise completely positive matrices forms a convex cone.
\end{lemma}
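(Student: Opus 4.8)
The plan is to produce an explicit witness $(V, W)$ for the triplewise complete positivity of the convex combination, assembled from the factorizations of the two given triples by concatenating columns and absorbing the nonnegative weights as square roots. First I would unpack the hypothesis: by Definition~\ref{def:tcp}, triplewise complete positivity of $(A_1, B_1, C_1)$ and of $(A_2, B_2, C_2)$ supplies matrices $V_1, W_1 \in \mathcal{M}_{d, d_1}(\mathbb{C})$ and $V_2, W_2 \in \mathcal{M}_{d, d_2}(\mathbb{C})$ with $A_i = (V_i \odot \overbar{V_i})(W_i \odot \overbar{W_i})^*$, $B_i = (V_i \odot W_i)(V_i \odot W_i)^*$, and $C_i = (V_i \odot \overbar{W_i})(V_i \odot \overbar{W_i})^*$ for $i = 1, 2$.

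Next I would define, by horizontal concatenation of columns,
\[
V := \bigl[\, \sqrt{x}\, V_1 \;\big|\; \sqrt{y}\, V_2 \,\bigr], \qquad W := \bigl[\, W_1 \;\big|\; W_2 \,\bigr] \in \mathcal{M}_{d, d_1 + d_2}(\mathbb{C}),
\]
which is legitimate precisely because $x, y \geq 0$ makes $\sqrt{x}, \sqrt{y}$ real and nonnegative, and $d_1 + d_2 \in \mathbb{N}$. The verification rests on two elementary observations: the entrywise operations $\odot$ and $\overbar{\,\cdot\,}$ respect column concatenation (so, e.g., $V \odot \overbar{V} = [\, x\,(V_1 \odot \overbar{V_1}) \mid y\,(V_2 \odot \overbar{V_2})\,]$, using that $\sqrt{x}$ is real), and for column-block matrices one has the orthogonality identity $[P_1 \mid P_2][Q_1 \mid Q_2]^* = P_1 Q_1^* + P_2 Q_2^*$. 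Feeding the blocks through each of the three defining products then reproduces exactly $x A_1 + y A_2$, $x B_1 + y B_2$, and $x C_1 + y C_2$, so $(V, W)$ witnesses the claim.

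There is no genuine obstacle here; the computation is routine, and the single point requiring care is \emph{where} the square-root scalars are attached. Since $A$ enters through $V \odot \overbar{V}$ while $B$ and $C$ enter through $(V \odot \,\cdot\,)(V \odot \,\cdot\,)^*$, all three products are ``quadratic'' in $V$ in the sense that each carries one factor of $V$ and one of $\overbar{V}$; scaling the columns of $V$ by $\sqrt{x}$ and $\sqrt{y}$ (equivalently, the columns of $W$) therefore makes every block pick up the \emph{linear} factor $x$, resp.~$y$, rather than its square, which is exactly what yields a true convex combination. Finally, I would remark that this additivity, combined with the scaling $x = 0$ or $y = 0$ (and $V = W = 0$ for the zero triple), shows closure under arbitrary nonnegative linear combinations, i.e.\ that the set of triplewise completely positive matrices is a convex cone.
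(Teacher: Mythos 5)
Your proof is correct and matches the paper's argument essentially verbatim: the paper also forms the witness $V = [\,\sqrt{x}\,V_1 \mid \sqrt{y}\,V_2\,]$, $W = [\,W_1 \mid W_2\,] \in \mathcal{M}_{d,d_1+d_2}(\mathbb{C})$ and checks the three block products. Your additional remark on why the weights enter linearly (each defining product containing exactly one $V$ and one $\overbar{V}$ factor) is a nice elaboration of the verification the paper leaves as ``easy to check.''
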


\begin{proof}
Let $V_1, W_1$ in $\mathcal{M}_{d,d_1}(\mathbb{C})$ and $V_2, W_2$ in $\mathcal{M}_{d,d_2}(\mathbb{C})$ form the TCP decomposition of $(A_1, B_1, C_1)$ and $(A_2, B_2, C_2)$ respectively, see Definition \ref{def:tcp}. Then it is easy to check that the block matrices $V = [ \, \sqrt{x} \, V_1 \, | \, \sqrt{y} \, V_2 \,], \, W = [  \, W_1 \, | \,  W_2 \,]$ in $\mathcal{M}_{d,d_1 + d_2}(\mathbb{C})$ form the TCP decomposition for the convex combination $(xA_1 + yA_2, xB_1 + yB_2, xC_1 + yC_2)$.
\end{proof}

\begin{lemma}
Let $(A,B,C)$ be triplewise completely positive, $D$ be diagonal and entrywise non-negative and $P$ be a permutation matrix in $\mathcal{M}_d(\mathbb C)$. Then both $(DAD^*,DBD^*,DCD^*)$ and $(PAP^*,PBP^*,PCP^*)$ are triplewise completely positive.
\end{lemma}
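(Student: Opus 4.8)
The plan is to exhibit, in each of the two cases, explicit matrices $V', W' \in \mathcal{M}_{d,d'}(\mathbb{C})$ realizing the transformed triple as a triplewise completely positive decomposition in the sense of Definition \ref{def:tcp}, obtained by absorbing $D$ (respectively $P$) into the original factors. So I would start by fixing $V, W \in \mathcal{M}_{d,d'}(\mathbb{C})$ with $A = (V \odot \overbar V)(W \odot \overbar W)^*$, $B = (V \odot W)(V \odot W)^*$, and $C = (V \odot \overbar W)(V \odot \overbar W)^*$, as guaranteed by the hypothesis. The two elementary facts driving everything are the commutation rules between left multiplication and the Hadamard product: for a diagonal matrix $D$ one has $D(M \odot N) = (DM) \odot N = M \odot (DN)$, whereas for a permutation matrix $P$ one has the stronger identity $P(M \odot N) = (PM) \odot (PN)$, since permuting rows commutes with entrywise multiplication. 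Both follow immediately by comparing $(i,j)$ entries, and I would record them first.

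For the permutation case I would simply take $V' = PV$ and $W' = PW$. Applying $P(M \odot N) = (PM) \odot (PN)$ gives $V' \odot \overbar{V'} = P(V \odot \overbar V)$, $V' \odot W' = P(V \odot W)$, and $V' \odot \overbar{W'} = P(V \odot \overbar W)$. Substituting these into the three defining products and factoring the trailing $P^*$ out of each adjoint then yields $PAP^*$, $PBP^*$, and $PCP^*$ respectively, establishing that $(PAP^*, PBP^*, PCP^*)$ is triplewise completely positive.

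The diagonal case is where I expect the only genuine subtlety to lie. Since $D$ is diagonal and entrywise non-negative, its entrywise square root $D^{1/2}$ is a well-defined real non-negative diagonal matrix satisfying $(D^{1/2})^* = D^{1/2}$ and $D^{1/2} D^{1/2} = D$. I would set $V' = D^{1/2} V$ and $W' = D^{1/2} W$. The crucial observation is that each of the three defining products is \emph{bilinear}, in the sense that each of its two Hadamard factors carries exactly one copy of $V$ or $W$ (possibly conjugated); hence each factor contributes one power of $\sqrt{d_i}$ in row $i$, and the two combine into a single $d_i$. Concretely, $(V' \odot \overbar{V'})_{ij} = d_i\,|V_{ij}|^2$, so $V' \odot \overbar{V'} = D(V \odot \overbar V)$, and likewise $V' \odot W' = D(V \odot W)$ and $V' \odot \overbar{W'} = D(V \odot \overbar W)$. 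Substituting and using $D^* = D$ then produces $DAD^*$, $DBD^*$, and $DCD^*$.

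The main obstacle, such as it is, is purely getting this scaling right: the naive choice $V' = DV$ would generate a spurious factor $d_i^2$ rather than $d_i$, so recognizing that $D^{1/2}$ (not $D$) is the correct substitution — and that the bilinearity of the three products is exactly what lets the two square roots recombine — is the heart of the argument. Everything else is routine entrywise bookkeeping, and since no rank or dimension constraints are disturbed ($V', W'$ remain $d \times d'$), no further conditions need checking.
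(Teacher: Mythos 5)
Your proposal is correct and matches the paper's proof essentially verbatim: the paper also takes $\sqrt{D}\,V$, $\sqrt{D}\,W$ (square root entrywise) for the diagonal case and $PV$, $PW$ for the permutation case. Your write-up merely makes explicit the entrywise Hadamard-product bookkeeping that the paper leaves as ``easy to check.''
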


\begin{proof}
If $V,W$ form the TCP decompositive of $(A,B,C)$ as in Definition \ref{def:tcp}, then it is clear that the matrices $V_1 = \sqrt{D}\, V, \, W_1 = \sqrt{D}\, W$ and $V_2 = PV , \, W_2 = PW$ form the TCP decomposition of $(DAD^*,DBD^*,DCD^*)$ and $(PAP^*,PBP^*,PCP^*)$ respectively, where the square root is understood to be taken entrywise.
\end{proof}

Next, we note some necessary conditions that a matrix triple $(A,B,C)$ must satisfy in order to have a chance at being triplewise completely positive. In what follows, $\Vert . \Vert_1$ and $\Vert . \Vert_{\operatorname{Tr}}$ denote the entrywise and trace norm on $\mathcal{M}_d(\mathbb{C})$ respectively.

\begin{lemma} \label{lemma:tcp-properties}
Triplewise complete positivity of $(A,B,C)$ entails that
\begin{enumerate}
    \item both $(A,B)$ and $(A,C)$ are pairwise completely positive.
    \item $\operatorname{diag}(A) = \operatorname{diag}(B) = \operatorname{diag}(C)$.
    \item $A$ is entrywise non-negative and $B, C$ are positive semi-definite.
    \item $A_{ij}A_{ji} \geq \vert B_{ij} \vert^2 $ and $A_{ij}A_{ji} \geq \vert C_{ij} \vert^2$ for $i,j \in [d]$.
    \item $\Vert A \Vert_1 - \Vert A \Vert_{\operatorname{Tr}} \geq \Vert B \Vert_1 - \Vert B \Vert_{\operatorname{Tr}}$ and $\Vert A \Vert_1 - \Vert A \Vert_{\operatorname{Tr}} \geq \Vert C \Vert_1 - \Vert C \Vert_{\operatorname{Tr}}$.    
\end{enumerate}
\end{lemma}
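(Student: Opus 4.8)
The plan is to work directly from the TCP decomposition guaranteed by Definition~\ref{def:tcp}: fix $V,W \in \mathcal M_{d,d'}(\mathbb C)$ with $A = (V\odot\overbar V)(W\odot\overbar W)^*$, $B = (V\odot W)(V\odot W)^*$ and $C = (V\odot\overbar W)(V\odot\overbar W)^*$, and establish each item by manipulating these three expressions.

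First I would dispatch part (1), from which several of the remaining items follow for free. The pair $(A,B)$ is pairwise completely positive essentially by inspection, since the displayed formulas for $A$ and $B$ are exactly those of Definition~\ref{def:pcp} with the same $V,W$. For $(A,C)$, the key observation is that replacing $W$ by $\overbar W$ leaves $A$ unchanged (because $\overbar W \odot W = W \odot \overbar W$) while turning the formula for $C$ into the ``$B$-type'' expression $(V\odot\overbar W)(V\odot\overbar W)^*$; hence $(V,\overbar W)$ is a PCP decomposition of $(A,C)$. With part (1) in hand, parts (2), (3) and the inequalities in (4) and (5) all reduce to the corresponding properties of PCP pairs established in \cite{johnston2019pairwise}, applied once to $(A,B)$ and once to $(A,C)$.

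For completeness I would also record the short direct arguments for (2)--(4). Writing out diagonal entries gives $A_{ii}=B_{ii}=C_{ii}=\sum_k |V_{ik}|^2|W_{ik}|^2$, which is (2); the off-diagonal formula $A_{ij}=\sum_k |V_{ik}|^2|W_{jk}|^2 \ge 0$ shows $A$ is entrywise non-negative, while $B$ and $C$ are manifestly Gram matrices and hence positive semi-definite, giving (3). For (4), grouping $B_{ij}=\sum_k (V_{ik}\overbar{W_{jk}})(W_{ik}\overbar{V_{jk}})$ and applying Cauchy--Schwarz yields $|B_{ij}|^2 \le \bigl(\sum_k |V_{ik}|^2|W_{jk}|^2\bigr)\bigl(\sum_k |W_{ik}|^2|V_{jk}|^2\bigr) = A_{ij}A_{ji}$, and the analogous regrouping $C_{ij}=\sum_k (V_{ik}W_{jk})(\overbar{W_{ik}}\,\overbar{V_{jk}})$ gives $|C_{ij}|^2 \le A_{ij}A_{ji}$.

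The only genuinely non-routine item is part (5), which compares the entrywise norm $\|\cdot\|_1$ with the trace norm $\|\cdot\|_{\operatorname{Tr}}$ and does not follow from any entrywise manipulation. Here I would not attempt a self-contained proof; instead I would invoke part (1) together with the corresponding norm inequality for PCP pairs from \cite{johnston2019pairwise}, which states precisely $\|A\|_1 - \|A\|_{\operatorname{Tr}} \ge \|B\|_1 - \|B\|_{\operatorname{Tr}}$ for any PCP pair $(A,B)$. Applying this to $(A,B)$ and to $(A,C)$ gives both halves of (5). The main obstacle is thus isolating part (5) and recognising that, unlike the others, it cannot be read off the decomposition directly but must be routed through the PCP machinery; the bookkeeping in the reduction---matching the norm conventions and confirming that $A$ plays the same role in both pairs---is where care is needed.
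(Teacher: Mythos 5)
Your proof is correct and takes essentially the same route as the paper's: part (1) is established by noting that $(V,W)$ and $(V,\overbar{W})$ furnish PCP decompositions of $(A,B)$ and $(A,C)$ respectively, and parts (2)--(5) are then delegated to \cite[Theorem 3.2]{johnston2019pairwise} applied to each pair. Your additional entrywise verifications of (2)--(4) (including the Cauchy--Schwarz argument for (4)) are correct but are extras that the paper's proof omits.
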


\begin{proof}
If $V,W$ form the TCP decomposition of $(A,B,C)$, then it is clear from Definition~\ref{def:tcp} that $V,W$ and $V,\overbar{W}$ form the PCP decompositions for the pairs $(A,B)$ and $(A,C)$ respectively. This proves part (1). All the other parts then follow from \cite[Theorem 3.2]{johnston2019pairwise}.
\end{proof}

It is worth noting that part (1) in the above Lemma is equivalent to saying that separability of an LDOI matrix with the associated triple $(A,B,C)$ implies separability of the CLDUI (resp.~LDUI) matrix with the associated  pair $(A,B)$ (resp.~$(A,C)$). Parts (3,4) are linked with the fact that separability of a bipartite matrix $X$ imply that both $X$ and its partial transpose $X^\Gamma$ are positive semi-definite, see Lemma~\ref{lemma:LDOI-psd-ppt}. Finally, part (5) is equivalent to the realignment criterion of separability for bipartite matrices \cite{Kai2002realignment, Rudolph2000realignment}.  Indeed, by defining the realignment map on $\mathcal{M}_d(\mathbb C) \otimes \mathcal{M}_d(\mathbb C)$ as $R(e_i e^*_j \otimes e_k e^*_l) = e_i e^*_k \otimes e_j e^*_l $, it is easy to see that for a separable CLDUI (resp. LDUI) matrix $X$ (resp. $Y$) with the associated pair $(A,B)$ (resp. $(A,C)$), the ``realigned" matrices $R(X)$ (resp. $R(Y)$) acquire the block structure given in Eq.~\eqref{eq:realign-LDUI/CLDUI} (see also Eq.~\eqref{eq:LDOI-block}), and the criteria $\Vert R(X) \Vert_{\operatorname{Tr}} \leq \operatorname{Tr}(X)$, $\Vert R(Y) \Vert_{\operatorname{Tr}} \leq \operatorname{Tr}(Y)$ translate into the conditions stated in part (5).

\begin{equation} \label{eq:realign-LDUI/CLDUI}
    R(X) =  A \oplus \, \, \bigoplus_{i<j} \left(   \begin{array}{cc}
        B_{ij} & 0  \\
        0 & B_{ji}
    \end{array}  \right) \qquad R(Y) = A \oplus \, \, \bigoplus_{i<j} \left(   \begin{array}{cc}
        0 & C_{ij}  \\
        C_{ji} & 0
    \end{array}  \right)
\end{equation}

We now strengthen the inequality in part (5) of Lemma~\ref{lemma:tcp-properties} for TCP matrices.

\begin{lemma}
If $(A,B,C)$ is triplewise completely positive, then the following inequality holds: 
\begin{equation}
    \Vert A \Vert_1 - \Vert A \Vert_{\operatorname{Tr}} \geq \sum_{i\neq j=1}^d \operatorname{max}\{ \vert B_{ij} \vert, \vert C_{ij} \vert \} \nonumber
\end{equation}
\end{lemma}
\begin{proof}
Since $(A,B,C)$ is TCP, the associated LDOI matrix $X$ is separable. Hence, the realigned matrix $R(X)$, with the block diagonal structure given in Eq.~\eqref{eq:realign-LDOI}, must satisfy $\Vert R(X) \Vert_{\operatorname{Tr}} \leq \operatorname{Tr}(X)$, i.e., $\Vert A \Vert_{\operatorname{Tr}} + \sum_{i<j} 2\operatorname{max}\{ \vert B_{ij} \vert, \vert C_{ij} \vert \} \leq \Vert A \Vert_1$, which is equivalent to the inequality stated in the lemma.  
\begin{equation} \label{eq:realign-LDOI}
    R(X) =  A \oplus \, \, \bigoplus_{i<j} \left(   \begin{array}{cc}
        B_{ij} & C_{ij}  \\
        C_{ji} & B_{ji}
    \end{array}  \right)
\end{equation}
\end{proof}

\begin{lemma}
For matrices $A,B,C$ in $\mathcal{M}_d(\mathbb{C})$, 
\begin{itemize}
    \item if $B$ (resp.~$C$) is diagonal, $(A,B,C)$ is triplewise completely positive if and only if $(A,C)$ (resp.~$(A,B)$) is pairwise completely positive.
    \item if $A$ is diagonal and entrywise non-negative, $(A,B,C)$ is triplewise completely positive if and only if $B$ and $C$ are also diagonal and equal to $A$.
\end{itemize}
\end{lemma}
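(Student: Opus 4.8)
Throughout I would work with an explicit column decomposition: writing $v_k, w_k$ for the columns of $V, W \in \mathcal{M}_{d,d'}(\mathbb{C})$, a TCP decomposition of $(A,B,C)$ amounts entrywise to
$$A_{ij} = \sum_k |v_k(i)|^2|w_k(j)|^2, \qquad B_{ij} = \sum_k v_k(i)w_k(i)\overbar{v_k(j)w_k(j)}, \qquad C_{ij} = \sum_k v_k(i)\overbar{w_k(i)}\,\overbar{v_k(j)}w_k(j).$$
These formulas, together with Lemma \ref{lemma:tcp-properties}, drive both parts. For the first bullet I would reduce to the case $B$ diagonal, since conjugating $W \mapsto \overbar W$ interchanges the second and third defining identities (it sends the decomposition of $(A,B,C)$ to one of $(A,C,B)$), so $(A,B,C)$ is TCP iff $(A,C,B)$ is, and the $C$-diagonal case follows by symmetry. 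The forward implication is then immediate from Lemma \ref{lemma:tcp-properties}(1). For the converse I would start from a PCP decomposition $V,\tilde W$ of $(A,C)$ and set $W := \overbar{\tilde W}$; then $(V,W)$ is a genuine TCP decomposition of a triple $(A,B',C)$ whose first and third entries are the prescribed $A$ and $C$, leaving only the middle matrix $B'$ to be corrected to its diagonal part.

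\textbf{The main obstacle} — the step I expect to require the most care — is this correction, which I would perform by a \emph{dephasing average}. A direct check with the formulas above shows that the simultaneous substitution $v_k \mapsto U_\phi v_k,\ w_k \mapsto U_\phi w_k$ with $U_\phi = \operatorname{diag}(e^{\mathrm{i}\phi_1},\ldots,e^{\mathrm{i}\phi_d})$ leaves $A$ and $C$ unchanged while sending $B'_{ij} \mapsto e^{2\mathrm{i}(\phi_i-\phi_j)}B'_{ij}$. Concatenating the copies indexed by $\phi$ on the grid $\{\pi k/M : 0 \le k < M\}^d$ (with $M \ge 2$, each copy rescaled by $N^{-1/4}$ where $N = M^d$) yields a finite TCP decomposition of $(A,B'',C)$ with $B''_{ij} = B'_{ij}\, N^{-1}\sum_\phi e^{2\mathrm{i}(\phi_i-\phi_j)} = B'_{ij}\,\delta_{ij}$, the orthogonality of roots of unity annihilating the off-diagonal entries. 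Thus $B''$ is the diagonal part of $B'$, and since every TCP triple satisfies $\operatorname{diag}A = \operatorname{diag}B' = \operatorname{diag}C$ by Lemma \ref{lemma:tcp-properties}(2), this coincides with the prescribed diagonal $B$; here I would note that $B$ is to be read as the diagonal matrix whose diagonal agrees with that of $A$, as forced by the LDOI correspondence of Remark \ref{remark:LDOI-ABC}. Hence $(A,B,C)$ is TCP.

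\textbf{Second bullet.} Here I would argue from the support of the decomposition, with no averaging needed. If $A$ is diagonal and $(A,B,C)$ is TCP, then $\sum_k |v_k(i)|^2|w_k(j)|^2 = A_{ij} = 0$ for $i \ne j$; as all summands are non-negative, this forces $v_k(i) = 0$ or $w_k(j) = 0$ for each $k$ and each $i \ne j$. Substituting this disjunction into the formulas for $B_{ij}$ and $C_{ij}$ makes every off-diagonal term vanish, so $B$ and $C$ are diagonal, and since the three matrices share a common diagonal they all coincide with $A$. For the converse, writing $A = \operatorname{diag}(a_1,\ldots,a_d)$ with $a_i \ge 0$, I would simply check that $V = W = \operatorname{diag}(a_1^{1/4},\ldots,a_d^{1/4})$ realizes $A = B = C$ in all three defining identities, exhibiting $(A,A,A)$ as TCP. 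Apart from the dephasing step in the first converse, every verification reduces to a one-line substitution into the column formulas.
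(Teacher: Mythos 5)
Your argument is correct, but it takes a genuinely different route from the paper's, which settles both bullets almost without computation by passing to the bipartite picture: when $B$ is diagonal, the LDOI matrix associated with $(A,B,C)$ \emph{is} the LDUI matrix associated with $(A,C)$ (Remark~\ref{remark:LDOI>LDUI}, Eq.~\eqref{eq:LDOI-block}), so ``$(A,B,C)$ TCP $\Leftrightarrow$ $(A,C)$ PCP'' is simply one and the same bipartite matrix being separable, read through Lemma~\ref{lemma:LDOI-sep} on one side and Lemma~\ref{lemma:LDUI/CLDUI-sep} on the other; the second bullet follows from part (4) of Lemma~\ref{lemma:tcp-properties} together with the observation that a non-negative diagonal LDOI matrix is separable. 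You instead work entirely at the level of the decomposition matrices $V,W$. Your forward implications essentially coincide with the paper's, but your converse of the first bullet contains a construction the paper never needs: conjugating the PCP decomposition of $(A,C)$ into the third slot and then dephasing over a finite phase grid, with orthogonality of roots of unity projecting the middle matrix onto its diagonal while fixing $A$ and $C$. This checks out: the action of $U_\phi$ on the three matrices, the $N^{-1/4}$ scaling (each matrix entry is quartic in the rescaled entries, so each copy carries weight $N^{-1}$), and the vanishing of the off-diagonal phase sums are all correct, and you rightly flag that $B$ must be read as $\operatorname{diag}(A)$ --- without that convention the ``if'' direction is false, since TCP forces equal diagonals. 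What your route buys is an explicit, finite TCP decomposition (of length $d'M^d$) and independence from the separability machinery; in effect you re-prove, at the decomposition level, that the LDUI twirl preserves separability, much in the spirit of the phase-absorption trick in Lemma~\ref{lemma:AB-ABB}. What the paper's route buys is brevity and the structural explanation: the lemma becomes a tautology once the two bipartite matrices are identified. Your second bullet (support argument for the forward direction, the explicit decomposition $V=W=\operatorname{diag}(a_1^{1/4},\ldots,a_d^{1/4})$ for the converse) is likewise correct, and marginally more self-contained than the paper's appeal to Lemma~\ref{lemma:tcp-properties}(4) and Eq.~\eqref{eq:LDOI-block}.
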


\begin{proof}
For part (1), it suffices to note that if $B$ (resp.~$C$) is diagonal, the LDOI matix associated with the triple $(A,B,C)$ is equal to the LDUI (resp.~CLDUI) matrix associated with the pair $(A,C)$ (resp.~$(A,B)$), see Remark~\ref{remark:LDOI>LDUI}. 

For part (2), we observe that part (4) of Lemma~\ref{lemma:tcp-properties} already ensures that the off diagonal entries of $B$ and $C$ vanish if $(A,B,C)$ is TCP with $A$ being diagonal. Conversely, if $A=B=C$ are diagonal and entrywise non-negative, then the corresponding LDOI matrix is diagonal (see Eq.~\eqref{eq:LDOI-block}) and hence separable. 
\end{proof}

From the above Lemma, we can deduce that $(P, \operatorname{diag}(P), \operatorname{diag}(P))$ is TCP for all entrywise non-negative matrices $P$ in $\mathcal{M}_d(\mathbb{C})$. Combining this with the convexity of the set of TCP matrices yields the following corollary.

\begin{corollary} \label{corollary:ABC-P}
If $(A,B,C)$ is triplewise completely positive and $P$ is entrywise non-negative, then $(A+P, B+\operatorname{diag}(P), C+\operatorname{diag}(P))$ is triplewise completely positive.
\end{corollary}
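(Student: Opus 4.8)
The plan is to follow the two-step reduction announced just before the statement: realize the target triple as a sum of two triplewise completely positive (TCP) triples and then invoke the convex-cone property established in the first Lemma of this appendix. Concretely, I would write
\[
(A+P,\ B+\operatorname{diag}(P),\ C+\operatorname{diag}(P)) = (A,B,C) + (P,\operatorname{diag}(P),\operatorname{diag}(P)).
\]
The first summand is TCP by hypothesis, so the entire argument reduces to proving that $(P,\operatorname{diag}(P),\operatorname{diag}(P))$ is TCP whenever $P$ is entrywise non-negative; the conclusion is then immediate from convexity.

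To handle the remaining claim I would produce an explicit TCP decomposition. Decompose $P = \sum_{i,j=1}^d P_{ij}\, e_i e_j^*$ with all $P_{ij} \ge 0$, and treat each elementary term separately. For a fixed pair $(i,j)$ set $V = \sqrt{P_{ij}}\, e_i$ and $W = e_j$, viewed as single-column matrices in $\mathcal{M}_{d,1}(\mathbb{C})$. Since $e_i \odot e_j = \delta_{ij} e_i$, one checks directly that $(V \odot \overbar{V})(W \odot \overbar{W})^* = P_{ij}\, e_i e_j^*$, while both $(V \odot W)(V \odot W)^*$ and $(V \odot \overbar{W})(V \odot \overbar{W})^*$ equal $P_{ij}\,\delta_{ij}\, e_i e_i^* = \operatorname{diag}(P_{ij} e_i e_j^*)$. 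Thus each elementary triple $\bigl(P_{ij} e_i e_j^*,\ \operatorname{diag}(P_{ij} e_i e_j^*),\ \operatorname{diag}(P_{ij} e_i e_j^*)\bigr)$ is TCP, and summing over all $(i,j)$ via the convex-cone Lemma (using linearity of $\operatorname{diag}$) yields that $(P,\operatorname{diag}(P),\operatorname{diag}(P))$ is TCP. Alternatively, since $\operatorname{diag}(P)$ is diagonal, part (1) of the preceding Lemma reduces the claim to the pairwise complete positivity of $(P,\operatorname{diag}(P))$, which admits the same rank-one construction.

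There is no genuine obstacle here; the only points requiring care are bookkeeping ones. First, the entrywise non-negativity of $P$ is exactly what is needed for the scalars $\sqrt{P_{ij}}$ to be real, so that $V$ is a bona fide complex matrix and $V \odot \overbar{V} = P_{ij}\, e_i$. Second, one must track the fact that an off-diagonal elementary matrix contributes nothing to the diagonal, which is precisely the mechanism by which the $B$- and $C$-components collapse to $\operatorname{diag}(P)$ rather than to $P$. With these verified, combining the two TCP triples through the convex cone completes the proof.
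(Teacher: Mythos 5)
Your proof is correct, and its skeleton---writing the target triple as $(A,B,C)+(P,\operatorname{diag}(P),\operatorname{diag}(P))$ and invoking the convex-cone lemma---is exactly the paper's. The difference is in how the sub-claim that $(P,\operatorname{diag}(P),\operatorname{diag}(P))$ is triplewise completely positive gets established. The paper deduces it from the lemma immediately preceding the corollary: since $\operatorname{diag}(P)$ is diagonal, TCP of that triple reduces to pairwise complete positivity of $(P,\operatorname{diag}(P))$, which holds because the associated invariant bipartite matrix is diagonal with non-negative entries and hence separable; this route passes through the separability/LDOI correspondence. You instead build an explicit decomposition straight from Definition \ref{def:tcp}: for each entry $P_{ij}\geq 0$, the single-column matrices $V=\sqrt{P_{ij}}\,e_i$ and $W=e_j$ realize the elementary triple $\bigl(P_{ij}e_ie_j^*,\operatorname{diag}(P_{ij}e_ie_j^*),\operatorname{diag}(P_{ij}e_ie_j^*)\bigr)$, and finitely many applications of the cone lemma (together with linearity of $\operatorname{diag}$) assemble these into $(P,\operatorname{diag}(P),\operatorname{diag}(P))$; equivalently, one can take a single pair $V,W\in\mathcal M_{d,d^2}(\mathbb C)$ whose $(i,j)$-th columns are $\sqrt{P_{ij}}\,e_i$ and $e_j$. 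Your computations ($e_i\odot e_j=\delta_{ij}e_i$, reality of $W$ forcing $Z'=Z''$) are all correct, so the argument stands. What each approach buys: yours is self-contained and purely algebraic, never invoking separability or the preceding lemma; the paper's is shorter given the machinery it has already built, and it situates the fact inside the LDOI picture that motivates the whole appendix.
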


For $A$ in $\mathcal{M}_d(\mathbb C)$, the equivalence of complete positivity of $A$ and pairwise complete positivity of $(A,A)$ was proven in \cite[Theorem 3.4]{johnston2019pairwise}. Lemma~\ref{lemma:AB-ABB} (stated and proved below) can be used to extend this equivalence to triplewise complete positivity of $(A,A,A)$.

\begin{lemma} \label{lemma:AB-ABB}
For $A,B$ in $\mathcal{M}_d(\mathbb C)$, $(A,B,B)$ is triplewise completely positive if and only if $(A,B)$ is pairwise completely positive.
\end{lemma}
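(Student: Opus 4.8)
The plan is to prove the two implications separately, with the forward direction being essentially free and the reverse direction requiring a single genuine idea. For the forward implication, I would simply observe that a TCP decomposition $V, W \in \mathcal{M}_{d,d'}(\mathbb{C})$ of $(A,B,B)$ satisfies, among its three defining identities from Definition \ref{def:tcp}, the two equations $A = (V \odot \overbar{V})(W \odot \overbar{W})^*$ and $B = (V \odot W)(V \odot W)^*$; but these are precisely the conditions of Definition \ref{def:pcp}, so the very same pair $(V,W)$ witnesses pairwise complete positivity of $(A,B)$. (This is also the content of part (1) of Lemma \ref{lemma:tcp-properties}.)

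For the reverse implication, suppose $(A,B)$ is PCP, witnessed by $V, W \in \mathcal{M}_{d,d'}(\mathbb{C})$. The obstruction to promoting this to a TCP decomposition of $(A,B,B)$ is the third identity, namely whether $B = (V \odot \overbar{W})(V \odot \overbar{W})^*$: the given decomposition only guarantees $(V \odot W)(V \odot W)^* = B$, and replacing $W$ by $\overbar{W}$ changes this matrix in general. The key idea I would use is that one may \emph{absorb the phases of $W$ into $V$}. Concretely, write $W = |W| \odot \Theta$ entrywise, where $|W|$ is the matrix of moduli and $\Theta$ is a phase matrix ($|\Theta_{ik}| = 1$, with the convention $\Theta_{ik}=1$ whenever $W_{ik}=0$), and define $\tilde V := V \odot \Theta$ and $\tilde W := |W|$. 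Since $|\Theta_{ik}|=1$, one checks that $\tilde V \odot \overbar{\tilde V} = V \odot \overbar{V}$ and $\tilde W \odot \overbar{\tilde W} = W \odot \overbar{W}$, so the first identity is unchanged, and $\tilde V \odot \tilde W = V \odot W$, so the second identity is unchanged. Hence $(\tilde V, \tilde W)$ is still a PCP decomposition of $(A,B)$, but now with $\tilde W$ entrywise real and nonnegative.

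The reason this resolves the obstruction is that for a real matrix $\tilde W$ one has $\overbar{\tilde W} = \tilde W$, so $\tilde V \odot \overbar{\tilde W} = \tilde V \odot \tilde W$ and therefore
$$(\tilde V \odot \overbar{\tilde W})(\tilde V \odot \overbar{\tilde W})^* = (\tilde V \odot \tilde W)(\tilde V \odot \tilde W)^* = B.$$
Thus all three defining identities of Definition \ref{def:tcp} hold for $(\tilde V, \tilde W)$ with the triple $(A,B,B)$, showing it is triplewise completely positive. The only real content of the argument is the phase-absorption step, and I expect this to be the point to get right: it is precisely what decouples the modulus information (which the $A$-identity sees through $\overbar{V}, \overbar{W}$) from the phase information (which the $B$-identity sees through $V \odot W$), and it lets one force the conjugated third component to coincide with the second. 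A remark worth adding is that, via Lemma \ref{lemma:LDOI-sep}, this is the algebraic shadow of the statement that the separable LDOI matrix built from $V,W$ can be arranged to have triple $(A,B,B)$ rather than $(A,B,C)$ with $C \neq B$.
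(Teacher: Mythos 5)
Your proof is correct and takes essentially the same route as the paper: the paper's argument also absorbs the phases of $W$ into $V$, defining $V'_{ij} = V_{ij}\operatorname{phase}(W_{ij})$ and $W'_{ij} = \vert W_{ij}\vert$ (your $\tilde V, \tilde W$), checking that this is still a PCP decomposition of $(A,B)$, and concluding from $\overbar{W'} = W'$ that it is simultaneously a TCP decomposition of $(A,B,B)$.
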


\begin{proof}
The ``only if'' direction of the proof is evident from Lemma~\ref{lemma:tcp-properties}, part (1). Conversely, assume that $(A,B)$ is PCP with $V,W$ in $\mathcal{M}_{d,d'}(\mathbb C)$ forming its PCP decomposition, i.e.,
\begin{equation*}
    A = (V \odot \overbar{V}) (W \odot \overbar{W})^* \qquad B = (V \odot W) (V \odot W)^*
\end{equation*}

Now, define matrices $V', W'$ in $\mathcal{M}_d(\mathbb C)$ entrywise as follows: $V'_{ij} = V_{ij}\operatorname{phase}(W_{ij})$ and $W'_{ij} = \vert W_{ij} \vert$, where $\operatorname{phase}(W_{ij})$ are the complex phases of the entries of $W$, i.e.~$W_{ij} = \vert W_{ij} \vert \operatorname{phase}(W_{ij})$. It is easy to see that $V',W'$ also form a PCP decomposition of $(A,B)$. Moreover, since $W'$ is entrywise non-negative, $W' = \overbar{W'}$, which shows that $V',W'$ form a TCP decomposition of $(A,B,B)$ as well.
\end{proof}

\begin{corollary}
A matrix $A$ in $\mathcal{M}_d(\mathbb C)$ is completely positive if and only if $(A,A)$ is pairwise completely positive if and only if $(A,A,A)$ is triplewise completely positive.
\end{corollary}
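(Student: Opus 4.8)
The plan is to obtain this three-way equivalence as a straightforward concatenation of two equivalences that are already available to us, so that essentially no fresh computation is required. The corollary links three properties of a single matrix $A$: complete positivity of $A$, pairwise complete positivity of the pair $(A,A)$, and triplewise complete positivity of the triple $(A,A,A)$. I would prove it by establishing the first two of these as equivalent, then the last two, and finally chaining.

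For the first link, I would simply invoke \cite[Theorem 3.4]{johnston2019pairwise}, which is exactly the statement that a matrix $A \in \mathcal{M}_d(\mathbb{C})$ is completely positive if and only if $(A,A)$ is pairwise completely positive. No new argument is needed here. For the second link, I would specialize Lemma \ref{lemma:AB-ABB} to the case $B = A$. That lemma asserts, for arbitrary $A, B \in \mathcal{M}_d(\mathbb{C})$, that $(A,B,B)$ is triplewise completely positive if and only if $(A,B)$ is pairwise completely positive; setting $B = A$ yields immediately that $(A,A,A)$ is triplewise completely positive if and only if $(A,A)$ is pairwise completely positive. Since Lemma \ref{lemma:AB-ABB} is stated for all $A, B$, this specialization is manifestly legitimate.

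Combining the two equivalences gives the full statement: complete positivity of $A$ $\iff$ pairwise complete positivity of $(A,A)$ $\iff$ triplewise complete positivity of $(A,A,A)$. I do not expect any genuine obstacle in this deduction, since all the substantive content has been front-loaded: into the cited Theorem 3.4 of \cite{johnston2019pairwise}, and into the proof of Lemma \ref{lemma:AB-ABB}, where the key maneuver is to rotate the complex phases of the entries of $W$ into $V$ so that the modified matrix $W'$ becomes entrywise non-negative, hence $W' = \overbar{W'}$, thereby upgrading a PCP decomposition of $(A,B)$ to a TCP decomposition of $(A,B,B)$. The corollary itself is thus a one-line deduction from these two prior results.
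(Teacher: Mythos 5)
Your proposal is correct and matches the paper's own argument exactly: the paper also obtains this corollary by citing \cite[Theorem 3.4]{johnston2019pairwise} for the equivalence of complete positivity of $A$ with pairwise complete positivity of $(A,A)$, and then specializing Lemma \ref{lemma:AB-ABB} to $B=A$ to link pairwise complete positivity of $(A,A)$ with triplewise complete positivity of $(A,A,A)$. Nothing further is needed.
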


\begin{corollary}
For $A,B$ in $\mathcal{M}_d(\mathbb C)$, if $A \geq B$ (entrywise) and $B$ is completely positive with $\operatorname{diag}(A) = \operatorname{diag}(B)$, then $(A,B,B)$ is triplewise completely positive.
\end{corollary}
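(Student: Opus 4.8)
The plan is to realize $(A,B,B)$ as a non-negative, diagonal-preserving perturbation of the triple $(B,B,B)$, for which triplewise complete positivity is already available, and then invoke Corollary \ref{corollary:ABC-P}.

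First I would record that $(B,B,B)$ is triplewise completely positive. Indeed, $B$ is completely positive by hypothesis, so the preceding corollary (asserting the equivalence of complete positivity of $B$, pairwise complete positivity of $(B,B)$, and triplewise complete positivity of $(B,B,B)$) applies directly. Next I would set $P := A - B$. Since $A \geq B$ entrywise, the matrix $P$ is entrywise non-negative; and since $\operatorname{diag}(A) = \operatorname{diag}(B)$, the diagonal of $P$ vanishes, $\operatorname{diag}(P) = 0$. Applying Corollary \ref{corollary:ABC-P} to the triplewise completely positive triple $(B,B,B)$ and the entrywise non-negative matrix $P$ then shows that $(B+P,\, B+\operatorname{diag}(P),\, B+\operatorname{diag}(P))$ is triplewise completely positive. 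But $B+P = A$ and $B+\operatorname{diag}(P) = B + 0 = B$, so this triple is precisely $(A,B,B)$, which finishes the argument.

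There is essentially no obstacle here: the content is entirely encoded in the two preceding results, and the single point deserving attention is the elementary observation that the hypothesis $\operatorname{diag}(A) = \operatorname{diag}(B)$ forces $\operatorname{diag}(P)$ to vanish, so that the perturbation leaves the two $B$-slots of the triple untouched while upgrading the first slot from $B$ to $A$. As a sanity check, one could alternatively run the argument through Lemma \ref{lemma:AB-ABB}, reducing the claim to the pairwise statement that $(A,B)$ is pairwise completely positive; but the route via Corollary \ref{corollary:ABC-P} is more direct, since it produces the desired triple in one step rather than requiring a separate pairwise decomposition.
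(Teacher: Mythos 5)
Your proof is correct and follows exactly the paper's argument: set $P = A - B$, observe that $P$ is entrywise non-negative with $\operatorname{diag}(P)=0$ and that $(B,B,B)$ is triplewise completely positive, then apply Corollary~\ref{corollary:ABC-P}. The only difference is that you spell out the intermediate steps slightly more explicitly, which is fine.
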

\begin{proof}
Define $P = A-B$. Then, the assumptions of the corollary guarantee that $P$ is entrwise non-negative with $\operatorname{diag}(P)=0$ and $(B,B,B)$ is triplewise completely positive. An application of Corollary~\ref{corollary:ABC-P} then gives the desired result.
\end{proof}

The comparision matrix $M(X)$ of $X$ in $\mathcal{M}_d(\mathbb C)$ is defined entrywise as follows: \[
    M(X)_{ij} = 
\begin{cases}
    \vert X_{ij} \vert, \quad \text{if } i = j\\
    -\vert X_{ij} \vert, \quad \text{if } i \neq j 
\end{cases}
\]
In \cite[Theorem 4.4]{johnston2019pairwise}, a sufficient condition for pairwise complete positivty of a pair $(A,B)$ was derived using positive semi-definiteness of the comparision matrix $M(B)$. We use Lemma~\ref{lemma:AB-ABB} to extend this result for triplewise completely positive matrices. 

\begin{corollary}
Let $A,B,C=B$ be matrices in $\mathcal{M}_d(\mathbb C)$ such that the conditions mentioned in parts (2-4) of Lemma~\ref{lemma:tcp-properties} are satisfied. If in addition, $M(B)$ is positive semi-definite, then $(A,B,B)$ is triplewise completely positive.
\end{corollary}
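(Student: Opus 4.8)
The plan is to reduce the triplewise statement to a pairwise one and then invoke the existing sufficient condition for pairwise complete positivity. The key structural feature of the hypothesis is that $C=B$, so the triple under consideration is exactly of the shape $(A,B,B)$ treated in Lemma~\ref{lemma:AB-ABB}. That lemma asserts that $(A,B,B)$ is triplewise completely positive if and only if $(A,B)$ is pairwise completely positive, and its proof is constructive: starting from a PCP decomposition $V,W$ of $(A,B)$, one absorbs the phases of the entries of $W$ into $V$ to obtain a new decomposition $V',W'$ with $W'$ entrywise non-negative, so that $W'=\overbar{W'}$ promotes the PCP data into a genuine TCP decomposition of $(A,B,B)$. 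Consequently it suffices to establish pairwise complete positivity of $(A,B)$.

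Next I would verify that the remaining hypotheses are precisely the input required by the known sufficient criterion \cite[Theorem 4.4]{johnston2019pairwise}. Specializing parts (2--4) of Lemma~\ref{lemma:tcp-properties} to the case $C=B$ yields exactly: $\operatorname{diag}(A)=\operatorname{diag}(B)$ from part (2); $A$ entrywise non-negative and $B$ positive semi-definite from part (3); and $A_{ij}A_{ji}\geq\vert B_{ij}\vert^2$ for all $i,j$ from part (4). These coincide with the necessary conditions for $(A,B)$ to be pairwise completely positive recorded in \cite[Theorem 3.2]{johnston2019pairwise}, and they are precisely the standing assumptions of the sufficient condition in \cite[Theorem 4.4]{johnston2019pairwise}. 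The single extra hypothesis of the corollary, namely positive semi-definiteness of the comparison matrix $M(B)$, is exactly the additional ingredient that theorem requires.

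The argument then concludes in one line: \cite[Theorem 4.4]{johnston2019pairwise} furnishes a PCP decomposition of $(A,B)$, and Lemma~\ref{lemma:AB-ABB} upgrades it to a TCP decomposition of $(A,B,B)$. I do not expect a genuine obstacle here; the only point requiring care is the bookkeeping of the previous paragraph, namely checking that the conditions inherited from Lemma~\ref{lemma:tcp-properties} really do reduce, upon setting $C=B$, to the pairwise conditions on $(A,B)$, so that the cited PCP theorem applies verbatim and no hidden hypothesis is silently assumed.
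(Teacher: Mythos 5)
Your proof is correct and follows exactly the route the paper intends: the paper's own (implicit) argument is precisely to apply \cite[Theorem 4.4]{johnston2019pairwise} to obtain pairwise complete positivity of $(A,B)$ from the specialized conditions (2--4) together with $M(B)\succeq 0$, and then to invoke Lemma~\ref{lemma:AB-ABB} to upgrade the PCP decomposition to a TCP decomposition of $(A,B,B)$. Your bookkeeping check that parts (2--4) of Lemma~\ref{lemma:tcp-properties} with $C=B$ collapse to the standing hypotheses of the cited pairwise theorem is exactly the point on which the corollary rests, and no hidden hypothesis is missed.
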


Finally, we observe that since positivity under partial transposition is equivalent to separability for positive semi-definite matrices in $\mathcal{M}_2(\mathbb{C}) \otimes \mathcal{M}_2(\mathbb{C})$, the necessary conditions for triplewise complete positivity stated in Lemma~\ref{lemma:tcp-properties} are also sufficient when $d=2$. We state this more precisely in the following lemma.

\begin{lemma}
Let $A,B,C$ in $\mathcal{M}_2(\mathbb C)$ be such that the conditions mentioned in parts (2-4) of Lemma~\ref{lemma:tcp-properties} are satisfied. Then $(A,B,C)$ is triplewise completely positive.
\end{lemma}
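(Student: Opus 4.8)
The plan is to reduce the statement to the Peres--Horodecki criterion via the dictionary between triplewise complete positivity and separability of the associated bipartite matrix. Concretely, conditions (2--4) of Lemma~\ref{lemma:tcp-properties} force, in particular, $\operatorname{diag}(A) = \operatorname{diag}(B) = \operatorname{diag}(C)$, so by Remark~\ref{remark:LDOI-ABC} the triple $(A,B,C)$ is the data of a genuine LDOI matrix $X \in \mathcal{M}_2(\mathbb{C}) \otimes \mathcal{M}_2(\mathbb{C})$ (with block structure as in Eq.~\eqref{eq:LDOI-block}). By Lemma~\ref{lemma:LDOI-sep}, establishing that $(A,B,C)$ is triplewise completely positive is \emph{equivalent} to proving that $X$ is separable. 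The key leverage at $d=2$ is that separability coincides with the PPT property: a positive semi-definite matrix in $\mathcal{M}_2(\mathbb{C}) \otimes \mathcal{M}_2(\mathbb{C})$ is separable if and only if its partial transpose is also positive semi-definite. Thus it suffices to show that both $X \geq 0$ and $X^\Gamma \geq 0$.

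The second step is to verify the PPT condition directly from the hypotheses using Lemma~\ref{lemma:LDOI-psd-ppt}, taking care to track how $B$ and $C$ swap roles between the two positivity statements. For $X \geq 0$ that lemma requires $A$ entrywise non-negative, $B$ positive semi-definite, $C$ self-adjoint, and $A_{ij}A_{ji} \geq |C_{ij}|^2$; all four are supplied by parts (3) and (4) of Lemma~\ref{lemma:tcp-properties}, since a positive semi-definite $C$ is in particular self-adjoint. For $X^\Gamma \geq 0$ the lemma requires $A$ entrywise non-negative, $B$ self-adjoint, $C$ positive semi-definite, and $A_{ij}A_{ji} \geq |B_{ij}|^2$; again these are exactly parts (3) and (4), now with the roles of $B$ and $C$ interchanged. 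Hence both $X$ and $X^\Gamma$ are positive semi-definite.

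With $X$ positive semi-definite and PPT in the $2 \otimes 2$ setting, I would invoke the St{\o}rmer--Woronowicz theorem (the $d=2$ case of the Peres--Horodecki criterion) to conclude that $X$ is separable, and then read off triplewise complete positivity of $(A,B,C)$ from Lemma~\ref{lemma:LDOI-sep}. I do not expect a genuine obstacle here, as the argument is a clean concatenation of results already established in the excerpt together with one external theorem; the only point demanding attention is the bookkeeping in the second step, namely confirming that the symmetric pair of inequalities in part (4) and the joint positive semi-definiteness of $B$ and $C$ in part (3) feed \emph{simultaneously} into the two asymmetric criteria of Lemma~\ref{lemma:LDOI-psd-ppt}. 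It is worth emphasizing that the dimensional restriction $d=2$ enters only through the equivalence of PPT and separability and is genuinely necessary, since for larger $d$ the conditions (2--4) merely guarantee that $X$ is PPT, which no longer forces separability.
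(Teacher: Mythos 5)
Your proposal is correct and follows exactly the paper's own argument: the paper likewise justifies this lemma by noting that conditions (2--4) give both $X \geq 0$ and $X^\Gamma \geq 0$ via Lemma~\ref{lemma:LDOI-psd-ppt}, and then invokes the equivalence of PPT and separability in $\mathcal{M}_2(\mathbb{C}) \otimes \mathcal{M}_2(\mathbb{C})$ together with Lemma~\ref{lemma:LDOI-sep} to conclude triplewise complete positivity. Your careful bookkeeping of how parts (3) and (4) feed symmetrically into the two asymmetric criteria is a welcome elaboration of what the paper leaves implicit.
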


\hrule
\end{document}